\DeclareMathOperator*{\argmin}{arg\,min}
\DeclareMathOperator*{\argmax}{arg\,max}
\DeclareMathOperator*{\minimize}{minimize}
\DeclareMathOperator*{\maximize}{maximize}
\def \I  {\mathcal{I}}
\newtheorem{assume}{Assumption}
\crefname{hypothesis}{Hypothesis}{Hypotheses}
\title{Multilevel Optimal Transport: a Fast Approximation of Wasserstein-1 distances\thanks{The codes will be released to: \url{https://github.com/liujl11git/multilevelOT}.
\funding{The research is supported by AFOSR MURI FA9550-18-1-0502, ONR N000141712162 and NSF DMS-1720237.
The Titan Xp used for this research was donated by the NVIDIA Corporation.}}}
\author{Jialin Liu\thanks{Department of Mathematics, University of California, Los Angeles.   (liujl11@math.ucla.edu; wotaoyin@math.ucla.edu; wcli@math.ucla.edu)}
\and 
Wotao Yin\footnotemark[2]
\and 
Wuchen Li\footnotemark[2]
\and 
Yat Tin Chow\thanks{Department of Mathematics, University of California, Riverside.  (yattinc@ucr.edu)}
}
\newcommand*{\addFileDependency}[1]{
  \typeout{(#1)}
  \@addtofilelist{#1}
  \IfFileExists{#1}{}{\typeout{No file #1.}}
}
\newcounter{protocol}
\newenvironment{protocol}[1][ht]{%
  \let\c@algorithm\c@protocol

  \begin{algorithm}%
  }{\end{algorithm}
}
\begin{document}

\maketitle

\begin{abstract}
  We propose a fast algorithm for the calculation of the Wasserstein-1 distance, which is a particular type of optimal transport distance with homogeneous of degree one transport cost.
  Our algorithm is built on multilevel primal-dual algorithms. Several numerical examples and a complexity analysis are provided to demonstrate its computational speed. On some commonly used image examples of size $512\times512$, the proposed algorithm gives solutions within $0.2\sim 1.5$ seconds on a single CPU, which is much faster than the state-of-the-art algorithms.
\end{abstract}

\begin{keywords}
  Multilevel algorithms; Optimal transport; Wasserstein-1 distance; Primal-dual algorithm. 
\end{keywords}

\begin{AMS}
  49M25; 49M30; 90C90
\end{AMS}

\section{Introduction}
Optimal transport (OT) plays crucial roles in many areas, including fluid dynamics \cite{villani2008optimal}, image processing \cite{rubner2000earth, Ryu}, machine learning \cite{wgan, wloss} and control~\cite{chen2016relation,chen2017optimal}. It is a well-posed distance between two probability distributions over a given domain. The distance is often named Earth Mover's distance (EMD) or the Wasserstein distance. Plenty of theories on OT have been introduced \cite{beckmann1952continuous,benamou2000computational,gangbo1996geometry,otto2000generalization,villani2008optimal,berman2018convergence}.
Despite the theoretical development, computing the distance is still challenging since the OT problems usually do not have closed-form solutions. Fast numerical algorithms are essential for the related applications. 

Recently, a particular class of OT, named the Wasserstein-1 distance, has been widely used in machine learning problems \cite{wgan,gulrajani2017improved,petzka2017regularization}. It gains rising interest in the computational mathematics community \cite{schmitzer2017dynamic,schmitzer2017framework,hartmann2017semi,li2018parallel,bassetti2018computation} ~\cite[Section 3.1]{Solomon}. 
The Wasserstein-1 distance is named as its transport cost is homogeneous of degree one. In this paper, we focus on numerically computing Wasserstein-1 distances. 

In the literature, many numerical schemes have been proposed for the OT problem.
\cite{hillier1995introduction,rubner2000earth,Ling, Pele,pele2008linear,oberman2015efficient,bartels2017error,bassetti2018computation} modeled the OT problem as linear programming (LP) with specific structures. They utilized these structures to develop efficient solvers. \cite{Peyre,PC,Benamou1,haber_horesh_2015,li2018parallel,jacobs2018solving,ryu2018unbalanced} modeled OT as a nonsmooth convex optimization problem and introduced iterative algorithms to solve it.
\cite{cuturi2013sinkhorn,benamou2015iterative,solomon2015convolutional,essid2018quadratically,carlier2017convergence,ferradans2014regularized,cuturi2016smoothed,berman2017sinkhorn} studied the OT problems with regularizers and proposed efficient algorithms to solve them. In particular, some algorithms have been developed for calculating the Wasserstein-1 distance and its variants. 
 Ling and Okada \cite{Ling} exploited the structure of the problem to improve the transportation simplex algorithm~\cite{hillier1995introduction} and proposed Tree-EMD. Pele and Werman \cite{pele2008linear, Pele} proposed and solved EMD with a thresholded ground metric. 
 Bartel and Sch{\"o}n \cite{bartels2017adaptive} added a regularization term on the original problem and introduced primal-dual algorithms to solve it.
 Li et al. \cite{li2018parallel} studied a primal-dual algorithm for calculating Wasserstein-1 distances that is friendly to parallel programming and has an implementation on CUDA. Jacobs et al. \cite{jacobs2018solving} introduced the proximal PDHG method, whose number of iterations is independent of the grid size. 
Bassetti et al. \cite{bassetti2018computation} studied the connections between the Wasserstein-1 distance and the uncapacitated minimum cost flow problem and applied the network simplex algorithm to solve it. 

\paragraph{Motivations and our contributions} Although many numerical algorithms \cite{Ling,bassetti2018computation,li2018parallel,jacobs2018solving} have been proposed to calculate the Wasserstein-1 distance, there is still some room to speed them up, especially for large-scale problems, for example, a grid of $512\times 512$. Motivated by the success of multigrid methods~\cite{wesseling1995introduction} for calculating Wasserstein-$p~(p>1)$ distance~\cite{oberman2015efficient,haber_horesh_2015,schmitzer2016sparse,merigot2011multiscale},
we apply the cascadic multilevel method \cite{bornemann1996cascadic} to calculate Wasserstein-1 distances. We compute the distances on different grid levels and use the solutions on the coarse grids to initialize the calculation of solutions on the finer grids. We use this method to speed up the state-of-the-art algorithms \cite{li2018parallel,jacobs2018solving}, dramatically reducing the computational expense on the finest grids and lessening the total time consumption by $2\sim12$ times. The speedup effect depends on the size of the problem. It is significant for large-scale problems.

The rest of this paper is organized as follows. In Section \ref{section:2}, we briefly review the Wasserstein-1 distance. In Section \ref{section:3}, we demonstrate our multilevel algorithms and provide a complexity analysis in Section \ref{section:4}. In Section \ref{section:assumes}, we numerically validate the assumptions used in Section \ref{section:4}. Finally, in Section \ref{section:5}, we present several numerical examples.  

\section{Problem description}\label{section:2}
Given a domain $\Omega \subset \Re^d$, the EMD, or the Wasserstein distance, is a commonly-used metric to measure the distance between two probability distributions defined on $\Omega$: $\rho^0,\rho^1: \Omega \to \Re$. Wasserstein$-1$ distance is defined by the minimum value of the following minimization problem:
\begin{equation}
\label{eq:opt_map}
\begin{aligned}
W_1(\rho^0,\rho^1) = \inf_{\pi \in \Gamma(\rho^0,\rho^1) }&\int_{\Omega\times \Omega}\|x-y\|_{p} \mathrm{d}\pi(x,y),
\end{aligned}
\end{equation}
where $\|x-y \|_p, 1 \leq p \leq \infty$, defines the tranport cost between two points $x,y\in \Omega$, and $\Gamma(\rho^0,\rho^1)$ is the set of nonnegative measurements $\pi$ on $\Omega\times\Omega$ satisfying
\[\pi(B\times\Omega) = \rho^0(B),\quad\pi(\Omega\times B) = \rho^1(B)\]for all measurable $B\subset \Omega$.
The dual problem of (\ref{eq:opt_map}), also named the Kantorovich dual problem, is:
\begin{equation}
\label{eq:opt_map_dual}
W_1(\rho^0,\rho^1) = \sup\bigg\{\int_{\Omega}\phi^0\mathrm{d}\rho^0 - \phi^1\mathrm{d}\rho^1 \bigg|  \phi^0(x) - \phi^1(y) \leq \|x-y\|_{p},~ \forall x,y\in\Omega.\bigg\},
\end{equation}
where $\phi^0,\phi^1: \Omega \to \Re,$ are (Kantorovich) dual variables.

In the 1D case ($d=1$), the Wasserstein Distance has a  closed-form solution \cite{villani2008optimal}. With two or higher dimensions ($d \geq 2$), the distance is no longer given in a closed form, and it is obtained via iterative algorithms.

\subsection{Problem settings}In this paper, we focus on an equivalent and simpler form of (\ref{eq:opt_map_dual}). Since $\|\cdot\|_p$ is homogeneous of degree one, by \cite{villani2008optimal}, there is an equivalent form of (\ref{eq:opt_map_dual}), where $\phi^0=\phi^1=\phi$. In other words,
\begin{equation}
\label{eq:opt_map_dual2}
\begin{aligned}
\maximize_{\phi: \Omega \to \Re}&\int_{x\in\Omega}\phi(x)(\rho^0(x) - \rho^1(x))dx\\
\text{subject to }&\|\nabla \phi(x)\|_{q} \leq 1,\quad \text{almost everywhere in } \Omega,
\end{aligned}
\end{equation}
where $1/p + 1/q = 1$ and $1 \leq q \leq \infty$.
The following minimization problem, which is the dual problem of (\ref{eq:opt_map_dual2}), is also considered in this paper:
\begin{equation}
\label{eq:opt_flux}
\begin{aligned}
\minimize_{m: \Omega \to \Re^d}& \int_{x\in\Omega}  \|m(x)\|_{p} dx \\
\text{subject to }& \text{div} (m(x)) = \rho^0(x) - \rho^1(x),\quad \forall x\in\Omega, \\
& m(x)\cdot n(x) = 0,\quad  \forall x \in \partial \Omega,
\end{aligned}
\end{equation}
where ``div'' denotes the divergence operator $\text{div} (m(x)) = \sum_{i=1}^d \frac{\partial m_i}{\partial x_i}(x)$ and $n(x)$ is normal to $\partial \Omega$. 
Here $m$ is a $d$ dimensional field satisfying the zero flux boundary condition~\cite{beckmann1952continuous}. The solution of (\ref{eq:opt_flux}) $m^*$ is called ``the optimal flux''.

\subsection{Discretization}

We set $\Omega = [0,1]^d$. Let $\Omega^h$ be a grid on $\Omega$ with step size $h>0$: 
\[\Omega^h = \{0, h, 2h, 3h, \cdots, 1\}^d.\]  
Let $N = 1/h$ be the grid size. Any $x \in \Omega^h$ is a $d$ dimensional tensor, of which the value of the $i^{\text{th}}$ component $x_i$ is chosen from $\{0, h, 2h, 3h, \cdots, 1\}$. 
The discretized distributions $\rho^0_h,\rho^1_h$ are $(N+1)^d$ tensors, and the discretized flux $m_h$ is a $(N+1)^d \times d$ tensor, which represents a map $\Omega^h \to \Re^d$: $\rho^0_h = \{\rho^0(x)\}_{x \in \Omega^h}$, $\rho^1_h = \{\rho^1(x)\}_{x \in \Omega^h}$, and $\quad m_h = \{m(x)\}_{x \in \Omega^h}$. The discretized version of (\ref{eq:opt_flux}) can be written as
\begin{equation}
\label{eq:opt_flux_dis}
\begin{aligned}
\minimize_{m_h: \Omega^h \to \Re^d}& \sum_{x\in\Omega^h} ~~\|m_h(x)\|_p h^d\\
\text{subject to  }& ~~\text{div}^h (m_h(x)) = \rho^0_h(x) - \rho^1_h(x),\quad \forall  x \in \Omega^h,
\end{aligned}
\end{equation}
where the discrete divergence operator is: \[
\begin{aligned}
\text{div}^h (m_h(x)) =& \sum_{i=1}^d D_{h,i}m(x),\\ 
D_{h,i}m(x) =& \begin{cases}
(m_{h,i}(x_{-i},x_i))/h, & x_i = 0\\
(m_{h,i}(x_{-i},x_i) - m_{h,i}(x_{-i},x_i-h))/h, & 0<x_i < 1\\
(-m_{h,i}(x_{-i},x_i-h))/h, & x_i > 1.
\end{cases}
\end{aligned}
\]
In the definition of $\text{div}^h$, $m_{h}(x)\in\Re^d$ means the flow at point $x$, $m_{h,i}(x)\in\Re$ is the $i^{\text{th}}$ component of $m_{h}(x)$. The notion ``$-i$'' refers to all the components excluding $i$: $x_{-i} = \{x_j:j \in \{1,2,\cdots,d\},j\neq i\}$.

To simplify our notation, we rewrite the above problem (\ref{eq:opt_flux_dis}) as:
\begin{equation}
\label{eq:opt_flux_dis2}
\begin{aligned}
\minimize_{m_h: \Omega^h \to \Re^d}& ~~f(m_h) \\
\text{subject to }& ~~A_h m_h = \rho_h,
\end{aligned}
\end{equation}
where $f(\cdot)$ denotes a norm of $m_h$, $A_h$ denotes the divergence operator, which is linear, and $\rho_h=\rho_h^0-\rho_h^1$. 

The dual problem of (\ref{eq:opt_flux_dis2}), which is also the discrete version of (\ref{eq:opt_map_dual2}), is:
\begin{equation}
\label{eq:opt_dual3}
\begin{aligned}
\minimize_{\phi_h: \Omega^h \to \Re}& ~~\sum_{x\in\Omega^h}\phi_h(x)\rho_h(x) h^d \\
\text{subject to }& ~~\| A_h^* \phi_h(x)\|_q \leq 1,~\forall x\in\Omega^h,
\end{aligned}
\end{equation}
where $\phi_h:\Omega^h\to\Re$ is the \emph{Kantorovich potential}: $\phi_h = \{\phi(x)\}_{x\in\Omega^h}$.
The adjoint operator of $A_h$, $A^*_h$, denotes the gradient operator.

In this paper, we solve (\ref{eq:opt_flux_dis2}) and (\ref{eq:opt_dual3}) jointly by primal-dual algorithms.

\label{sec:stateoftheart} Define some norms on $\Omega^h$:
\[
\begin{aligned}
\|m_h\|_{2}^2 =& \sum_{x\in\Omega^h}\|m(x)\|^2_2, \quad \|m_h\|_{L^2}^2 = \sum_{x\in\Omega^h}\|m(x)\|^2_2 h^d,\\
\|\phi_h\|_{2}^2 =& \sum_{x\in\Omega^h}\phi^2(x), \quad
\|\phi_h\|_{L^2}^2 = \sum_{x\in\Omega^h}\phi^2(x) h^d,\\
\|\varphi_h\|_{2}^2 =&  \sum_{x\in\Omega^h}\|\varphi(x)\|^2_2, \quad \|\varphi_h\|_{L^2}^2 = \sum_{x\in\Omega^h}\|\varphi(x)\|^2_2 h^d.
\end{aligned}
\]

Define inner products on $\Omega^h$:
\[\langle \phi_h, \phi'_h \rangle = \sum_{x \in \Omega^h}\phi_h(x) \phi'_h(x),\]
\[\langle \phi_h, \phi'_h \rangle_h = \sum_{x \in \Omega^h}\phi_h(x) \phi'_h(x) h^d.\]

\section{Algorithm description}\label{section:3}
In this section, we review two recent primal-dual algorithms designed for (\ref{eq:opt_flux_dis2}) and (\ref{eq:opt_dual3}). We apply a multilevel framework (Section \ref{sec:multigrid_algo}) to further accelerate these algorithms. 
 
\subsection{Two recent algorithms for (\ref{eq:opt_flux_dis2}) and (\ref{eq:opt_dual3})}

\paragraph{Algorithm \ref{algo:primaldual} (Li et al.~\cite{li2018parallel})} Problems (\ref{eq:opt_flux_dis2}) and (\ref{eq:opt_dual3}) can be jointly solved by the following min-max problem: 
\begin{equation}
    \label{eq:minmax_cp}
    \min_{m_h}\max_{\phi_h} L(m_h,\phi_h), \quad \text{where } L(m_h,\phi_h) = f(m_h) + \langle \phi_h, A_h m_h - \rho_h \rangle_h.
\end{equation}

Inspired by the Chambolle-Pock Algorithm~\cite{chambolle2011first}, the authors of \cite{li2018parallel} proposed the following algorithm to solve (\ref{eq:minmax_cp}):
\begin{equation}
\label{eq:primaldual}
\begin{aligned}
m^{k+1}_h =& ~\argmin_{m_h} L(m_h,\phi^k_h) + \frac{1}{2\mu}\|m_h - m_h^k\|^2_{L^2},\\
\bar{m}^{k+1}_h =& ~2m^{k+1}_h - m^k_h,\\
\phi^{k+1}_h = & ~\argmax_{\phi_h} L(\bar{m}^{k+1}_h,\phi_h) - \frac{1}{2\tau}\|\phi_h - \phi_h^k\|^2_{L^2}.
\end{aligned}
\end{equation}
Parameters $\mu,\tau>0$ need to be tuned. If $\mu\tau \|A_h\|^2 < 1$, then we have the convergence $(m^k_h,\phi_h^k)\to(m^*_h,\phi_h^*)$, where $(m^*_h,\phi_h^*)$ is one of the solutions of (\ref{eq:minmax_cp}). In this paper, we use\footnote{The parameter choice $\mu = \tau = 1/(2\|A_h\|)$ is convenient for complexity analysis. Practically, $\mu = \tau = 1/\|A_h\|$ is better although it does not guarantee convergence theoretically.} $\mu = \tau = 1/(2\|A_h\|)$.
The iteration stops when the following fixed point residual (FPR) $\text{R}^k$ falls below a threshold:
\begin{equation}
\label{eq:pd_stop}
\text{R}^k_h := \frac{1}{\mu}\|m^{k+1}_h-m^k_h\|^2_{L^2} + \frac{1}{\tau}\|\phi^{k+1}_h-\phi^k_h\|^2_{L^2} - 2 \big\langle \phi^{k+1}_h-\phi^k_h, 
A_h(m^{k+1}_h-m^k_h) \big\rangle_h.
\end{equation}
The algorithm is summarized in Algorithm \ref{algo:primaldual}.

\begin{algorithm2e}
\SetKwInOut{input}{Input}
\input{Distributions $\rho^0, \rho^1$, grid step size $h$, initial 
point $m^0$, $\phi^0$,
tolerance $\varepsilon$.
}
{Start with $k=0$.}\\
\While{$R^k_h < \varepsilon$ is not satisfied} {
Execute (\ref{eq:primaldual}) and let $k \xleftarrow{} k+1$.
}
\KwOut{$m^{K}, \phi^{K}$. ($K$ is the number of iterations.)}
\caption{A primal-dual algorithm for EMD~\cite{li2018parallel}}\label{algo:primaldual}
\end{algorithm2e}

\paragraph{Algorithm \ref{algo:pdhg} (Jacobs et al.~\cite{jacobs2018solving})} Problem (\ref{eq:opt_flux_dis2}) can be written as:
\begin{equation}
    \label{eq:minmax_pdhg_original}
    \min_{m_h,u_h}\max_{\varphi_h} f(u_h) + \delta_{A_h m_h = \rho_h}(m_h) + \langle \varphi_h, m_h-u_h\rangle_h.
\end{equation}
where $\varphi_h:\Omega^h\to\Re^d$ is the dual variable and is also the gradient of the Kantorovich potential: $\varphi_h = A^*_h \phi_h$. Function $\delta_{A_h m_h = \rho_h}$ is the indicator function of $A_h m_h = \rho_h$: 
\[\delta_{A_h m_h = \rho_h}(m_h) = \begin{cases}
0, & \mathrm{if~} A_h m_h = \rho_h,\\
+\infty, & \mathrm{if~} A_h m_h \neq \rho_h.
\end{cases}\]
Define the convex conjugate of $f$:
\[f^*(\varphi_h) = \sup_{u_h} \langle \varphi_h, u_h \rangle_h - f(u_h).\]
Then (\ref{eq:minmax_pdhg_original}) is equivalent to the following problem: 
\begin{equation}
    \label{eq:minmax_pdhg}
    \min_{m_h}\max_{\varphi_h} \tilde{L}(m_h,\varphi_h), ~~ \text{where }\tilde{L}(m_h,\varphi_h)  = \delta_{A_h m_h = \rho_h}(m_h) -f^*(\varphi_h) + \langle \varphi_h, m_h \rangle_h.
\end{equation}

The authors of \cite{jacobs2018solving} solve (\ref{eq:minmax_pdhg}) in the following way:
\begin{equation}
\label{eq:pdhg}
\begin{aligned}
m^{k+1}_h =& ~\argmin_{m_h} \tilde{L}(m_h,\bar{\varphi}^k_h) + \frac{1}{2\mu}\|m_h - m^k_h\|^2_{L^2} \\
\varphi^{k+1}_h = & ~\argmax_{\varphi_h} \tilde{L}(m^{k+1}_h,\varphi_h) - \frac{1}{2\tau} \|\varphi_h - \varphi^k_h\|^2_{L^2}\\
\bar{\varphi}^{k+1}_h = & ~2 \varphi^{k+1}_h - \varphi^k_h,
\end{aligned}
\end{equation}
where the first subproblem solving $m^{k+1}_h$ requires computing a projection onto the affine space $\{m_h|A_h m_h = \rho_h\}$. Since the discrete Laplacian inverse  $((A_h)^*A_h)^{-1}$ can be easily computed by FFT, the projection could be efficiently calculated~\cite{jacobs2018solving}.

Parameters $\mu,\tau>0$ need to be tuned. As long as $\mu\tau < 1$, we have the convergence $(m^k_h,\varphi_h^k)\to(m^*_h,\varphi_h^*)$, where $(m^*_h,\varphi_h^*)$ is one of the solutions of (\ref{eq:minmax_pdhg}).
In this paper, we choose\footnote{The parameter choice $\mu = \tau = 1/2$ is convenient for complexity analysis. Practically, $\mu = \tau = 1$ is better.} $\mu = \tau = 1/2$.
The stopping condition is to have the following fixed point residual $\text{G}^k$ small enough:
\begin{equation}
\label{eq:pdhg_stop}
\text{G}^k_h = \frac{1}{\mu}\|m^{k+1}_h-m^k_h\|^2_{L^2} + \frac{1}{\tau}\|\varphi^{k+1}_h-\varphi^k_h\|^2_{L^2} + 2 \big\langle \varphi^{k+1}_h-\varphi^k_h, m^{k+1}_h-m^k_h \big\rangle_h.
\end{equation}
With $\varphi_h^*$ in hand, the Kantorovich potential $\phi^*_h$ can be easily found by the method given in the Appendix \ref{app:potential}. 
The algorithm is listed in Algorithm \ref{algo:pdhg}.

\begin{algorithm2e}
\SetKwInOut{input}{Input}
\input{Distributions $\rho^0, \rho^1$, grid step size $h$, initial point $m^0$, $\varphi^0$, tolerance $\varepsilon$.
}
Start with $k=0$.\\
\While{$G^k_h < \varepsilon$ is not satisfied} {
Execute (\ref{eq:pdhg}) and let $k \xleftarrow{} k+1$
}
\KwOut{$m^{K},\varphi^K$. ($K$ is the number of iterations.)}
\caption{Prox-PDHG for EMD~\cite{jacobs2018solving}}\label{algo:pdhg}
\end{algorithm2e}

\subsection{A framework: multilevel initialization}
\label{sec:multigrid_algo}

In this subsection, we 
describe a framework, inspired by the cascadic multilevel method \cite{bornemann1996cascadic}, to substantially speed up Algorithms \ref{algo:primaldual} and \ref{algo:pdhg}. With the multilevel framework,  Algorithms \ref{algo:primaldual} and \ref{algo:pdhg} lead to Algorithms \ref{algo:multigrid1} and \ref{algo:multigrid2} respectively.

Suppose we have $L$ levels of grids with step sizes $h_1, h_2, \cdots, h_L$ respectively. The step sizes satisfy \[h_1 > h_2 > \cdots > h_{L-1} > h_L = h.\]The finest step size $h_L=h$. On each level, the space $\Omega$ is respectively discretized as \[\Omega^{h_1},\cdots,\Omega^{h_{L-1}},\Omega^{h_L}.\]
If we take $h_l = 2^{L-l}h$, then we have 
\[
\Omega^{h_1} \subset \Omega^{h_2} \subset \cdots \Omega^{h_{L-1}} \subset \Omega^{h}. 
\]
On the $l^{\text{th}}$ level, the optimal flux problem (\ref{eq:opt_flux_dis2}) is 
\begin{equation}
\label{eq:opt_flux_l}
\begin{aligned}
\minimize_{m_{h_l}: \Omega^{h_l} \to \Re^d}& f(m_{h_l}) \\
\text{subject to }& A_{h_l} m_{h_l} = \rho_{h_l}.
\end{aligned}
\end{equation}

We apply the cascadic multilevel technique \cite{bornemann1996cascadic} to the OT problem. We use $0$ initial solution on the level $l=1$ and solve a sequence of minimization problem (\ref{eq:opt_flux_l}) with one pass from the coarsest level $l=1$ to the finest level $l=L$. 
On each level, we use Algorithm \ref{algo:primaldual} or Algorithm \ref{algo:pdhg} that is stopped as the iterate is accurate enough ($R^k_{h_1}<\varepsilon_l$ for Algorithm \ref{algo:primaldual}, $G^k_{h_1}<\varepsilon_l$ for Algorithm \ref{algo:pdhg}). The obtained solution is denoted by $(m^K_{h_1},\phi^K_{h_1})$ or $(m^K_{h_1},\varphi^K_{h_1})$. After that, we interpolate the obtained solutions to the next level $l=2$ and treat them as the initial solutions of level $l=2$. 
The process is repeated for $l=3,\cdots,L$.
Algorithms \ref{algo:multigrid1} and \ref{algo:multigrid2} are the multilevel versions of Algorithms \ref{algo:primaldual} and \ref{algo:pdhg} respectively.

\begin{protocol}
    \SetKwInOut{input}{Input}
    \SetKwInOut{initial}{Initialization}
    \input{$\rho^0, \rho^1$, $h$, $L$, a sequence of tolerances $\{\varepsilon_l\}_{l=1}^L$.}
    \initial{Let $m^{K_0}_{h_0}=0, \phi^{K_0}_{h_0}=0.$}
    \For{$l=1,2,\cdots, L$} {
    Initialize the current level:
    \[m^0_{h_l} = \text{Interpolate
    }(m^{K_{l-1}}_{h_{l-1}}), \quad
    \phi^0_{h_l} = \text{Interpolate }(\phi^{K_{l-1}}_{h_{l-1}})
    \]\\
    Call Algorithm \ref{algo:primaldual}: \[(m^{K_{l}}_{h_l}, \phi^{K_{l}}_{h_l}) = \text{Algorithm \ref{algo:primaldual}}(\rho^0, \rho^1, h_l, m^0_{h_l},\phi^0_{h_l},\varepsilon_l)\]
    }
    \KwOut{$m^{K_L}_{h_L},\phi^{K_L}_{h_L}$
    }
    \caption{Multilevel version of Algorithm \ref{algo:primaldual}}
    \label{algo:multigrid1}
    \end{protocol}

     \begin{protocol}
    \SetKwInOut{input}{Input}
    \SetKwInOut{initial}{Initialization}
    \input{$\rho^0, \rho^1$, $h$, $L$, a sequence of tolerances $\{\varepsilon_l\}_{l=1}^L$.}
    \initial{Let $m^{K_0}_{h_0}=0, \varphi^{K_0}_{h_0}=0$.}
    \For{$l=1,2,\cdots, L$} {
    Initialize the current level: 
    \[m^0_{h_l} = \text{Interpolate }(m^{K_{l-1}}_{h_{l-1}}),\quad 
    \varphi^0_{h_l} = \text{Interpolate }(\varphi^{{K_{l-1}}}_{h_{l-1}})
    \]\\
    Call Algorithm \ref{algo:pdhg}: \[(m^{K_l}_{h_l}, \varphi^{K_l}_{h_l}) = \text{Algorithm \ref{algo:pdhg}}(\rho^0, \rho^1, h_l, m^0_{h_l},\varphi^0_{h_l},\varepsilon_l)\]
    }
    \KwOut{$m^{K_L}_{h_L},\phi^{K_L}_{h_L}$ (Obtain $\phi^{K_L}_{h_L}$ from $\varphi^{K_L}_{h_L}$, see Appendix \ref{app:potential}) 
    }
    \caption{Multilevel version of Algorithm \ref{algo:pdhg}}
    \label{algo:multigrid2}
    \end{protocol}

\subsection{Cross-level interpolation}
\label{sec:interpolate}

In this subsection, we describe the cross-level interpolations in Algorithms \ref{algo:multigrid1} and \ref{algo:multigrid2} in detail.

\paragraph{Interpolation of potentials $\phi_h$} For any $x\in \Omega^{h_{l}}$ on level $l$, we partition the set of the coordinates $x_j$ into two subsets, depending on whether they belong to the grid on the coarser level $l-1$: 
\begin{equation}
    \label{eq:J}
    \begin{aligned}
    J_l =& \big\{j: x_{j} \in \{0, h_{l-1}, 2h_{l-1}, \cdots, 1\}\big\},\\ 
    \hat{J}_l =& \big\{j: x_{j} \in \{0, h_{l}, 2h_{l}, \cdots, 1\}, x_{j} \notin \{0, h_{l-1}, 2h_{l-1}, \cdots, 1\}\big\}.
    \end{aligned}
\end{equation}
Define a partial neighborhood of $x$:
\[\mathcal{N}_l(x) = \Big\{y\in\Omega^{h_{l-1}}: y_j = x_j,~~ \forall j \in J_l; ~~~ |y_j - x_j| \leq h_l,~~\forall j \in \hat{J}_l.\Big\}.\]
The mapping $\phi_{h_l} =$ Interpolate $(\phi_{h_{l-1}})$ is defined pointwise as:
\begin{equation}
    \label{eq:inter_phi}
    \phi_{h_l}(x) =  \frac{1}{|\mathcal{N}_l(x)|} \sum_{y \in \mathcal{N}_l(x)} \phi_{h_{l-1}}(y),\quad \forall x \in \Omega^{h_l}.
\end{equation}
For example, if $d=2$ (2D case) and $h_l = 2^{-(l-1)},l\geq 2$, (\ref{eq:inter_phi}) can be written as:
\[
\phi_{h_l}(x_1,x_2) = 
\begin{cases}
\phi_{h_{l-1}}(x_1,x_2),\qquad \text{if }(x_1,x_2) \in \mathcal{X}_1,&\\
\Big(\phi_{h_{l-1}}(x_1,x_2-h_l)+\phi_{h_{l-1}}(x_1,x_2+h_l)\Big)/2,\qquad  \text{if }(x_1,x_2) \in \mathcal{X}_2,&\\
\Big(\phi_{h_{l-1}}(x_1-h_l,x_2)+\phi_{h_{l-1}}(x_1+h_l,x_2)\Big)/2,\qquad \text{if }(x_1,x_2) \in \mathcal{X}_3,&\\
\Big(\phi_{h_{l-1}}(x_1-h_l,x_2-h_l) + \phi_{h_{l-1}}(x_1-h_l,x_2+h_l)...\\ + \phi_{h_{l-1}}(x_1+h_l,x_2-h_l) + \phi_{h_{l-1}}(x_1+h_l,x_2+h_l)\Big)/4,~~ \text{otherwise},&\\
\end{cases}
\]
where $\mathcal{X}_1 = \{(x_1,x_2): x_1|h_{l-1},x_2|h_{l-1}\}$, $\mathcal{X}_2 = \{(x_1,x_2): x_1|h_{l-1},x_2\not |h_{l-1}\}$ and $\mathcal{X}_3 = \{(x_1,x_2): x_1\not |h_{l-1},x_2|h_{l-1}\}$. 
Figure \ref{fig:iterphi} gives an illustration of this 2D interpolation.

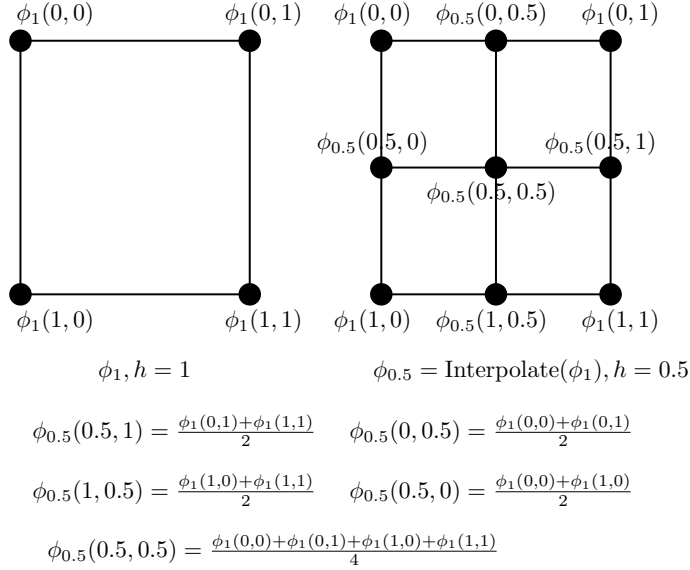
\begin{figure}[t]
    \centering
    \tikzset{every picture/.style={line width=0.75pt}} 

\begin{tikzpicture}[x=0.75pt,y=0.75pt,yscale=-1,xscale=1]

\draw    (72.39, 33.34) rectangle (188.03, 161.34)   ;
\draw  [fill={rgb, 255:red, 0; green, 0; blue, 0 }  ,fill opacity=1 ]  (72.39, 33.34) circle [x radius= 5.17, y radius= 5.17]  ;
\draw  [fill={rgb, 255:red, 0; green, 0; blue, 0 }  ,fill opacity=1 ]  (72.39, 161.34) circle [x radius= 5.17, y radius= 5.17]  ;
\draw  [fill={rgb, 255:red, 0; green, 0; blue, 0 }  ,fill opacity=1 ]  (188.03, 33.34) circle [x radius= 5.17, y radius= 5.17]  ;
\draw  [fill={rgb, 255:red, 0; green, 0; blue, 0 }  ,fill opacity=1 ]  (188.03, 161.34) circle [x radius= 5.17, y radius= 5.17]  ;
\draw    (254.39, 33.34) rectangle (370.02, 161.34)   ;
\draw    (254.39,97.34) -- (370.02,97.34) ;

\draw    (312.21,33.34) -- (312.21,161.34) ;

\draw  [fill={rgb, 255:red, 0; green, 0; blue, 0 }  ,fill opacity=1 ]  (254.39, 97.34) circle [x radius= 5.17, y radius= 5.17]  ;
\draw  [fill={rgb, 255:red, 0; green, 0; blue, 0 }  ,fill opacity=1 ]  (312.21, 97.34) circle [x radius= 5.17, y radius= 5.17]  ;
\draw  [fill={rgb, 255:red, 0; green, 0; blue, 0 }  ,fill opacity=1 ]  (254.39, 33.34) circle [x radius= 5.17, y radius= 5.17]  ;
\draw  [fill={rgb, 255:red, 0; green, 0; blue, 0 }  ,fill opacity=1 ]  (254.39, 161.34) circle [x radius= 5.17, y radius= 5.17]  ;
\draw  [fill={rgb, 255:red, 0; green, 0; blue, 0 }  ,fill opacity=1 ]  (312.21, 33.34) circle [x radius= 5.17, y radius= 5.17]  ;
\draw  [fill={rgb, 255:red, 0; green, 0; blue, 0 }  ,fill opacity=1 ]  (370.02, 33.34) circle [x radius= 5.17, y radius= 5.17]  ;
\draw  [fill={rgb, 255:red, 0; green, 0; blue, 0 }  ,fill opacity=1 ]  (370.02, 97.34) circle [x radius= 5.17, y radius= 5.17]  ;
\draw  [fill={rgb, 255:red, 0; green, 0; blue, 0 }  ,fill opacity=1 ]  (312.21, 161.34) circle [x radius= 5.17, y radius= 5.17]  ;
\draw  [fill={rgb, 255:red, 0; green, 0; blue, 0 }  ,fill opacity=1 ]  (370.02, 161.34) circle [x radius= 5.17, y radius= 5.17]  ;

\draw (365,85) node [scale=0.9]   {$\phi_{0.5}(0.5, 1)$};
\draw (310,20) node [scale=0.9] {$\phi_{0.5}(0, 0.5)$};
\draw (310,175) node [scale=0.9]   {$\phi_{0.5}(1, 0.5)$};
\draw (250,85) node [scale=0.9]  {$\phi_{0.5}(0.5,0)$};
\draw (310,110) node [scale=0.9]   {$\phi_{0.5}(0.5, 0.5)$};
\draw (90,20) node [scale=0.9] {$\phi_1(0,0)$};
\draw (250,20) node [scale=0.9]  {$\phi_1(0,0)$};
\draw (195,20) node [scale=0.9]  {$\phi_1(0,1)$};
\draw (375,20) node [scale=0.9]  {$\phi_1(0,1)$};
\draw (90,175) node [scale=0.9]  {$\phi_1(1,0)$};
\draw (250,175) node [scale=0.9]  {$\phi_1(1,0)$};
\draw (195,175) node [scale=0.9] {$\phi_1(1,1)$};
\draw (375,175) node [scale=0.9]  {$\phi_1(1,1)$};

\draw (330,200) node [scale=0.9]  {$\phi_{0.5} = \mathrm{Interpolate }(\phi_1), h = 0.5$};
\draw (135,200) node [scale=0.9]  {$\phi_{1}, h = 1$};

\draw (200,290) node [scale=0.9]  { $\phi_{0.5}(0.5, 0.5) = \frac{\phi_1(0,0)+\phi_1(0,1)+\phi_1(1,0)+\phi_1(1,1)}{4}$};

\draw (150,230) node [scale=0.9]  {$\phi_{0.5}(0.5, 1) =\frac{\phi_1(0,1)+\phi_1(1,1)}{2}$};
\draw (310,230) node [scale=0.9] {$\phi_{0.5}(0, 0.5) =\frac{\phi_1(0,0)+\phi_1(0,1)}{2}$};
\draw (150,260) node [scale=0.9]  {$\phi_{0.5}(1, 0.5) =\frac{\phi_1(1,0)+\phi_1(1,1)}{2}$};
\draw (310,260) node [scale=0.9]  {$\phi_{0.5}(0.5, 0) =\frac{\phi_1(0,0)+\phi_1(1,0)}{2}$};

\end{tikzpicture}
    \caption{An illustration of (\ref{eq:inter_phi}) (2D case): from $1\times 1$ grid to $2\times 2$ grid}
    \label{fig:iterphi}
\end{figure}

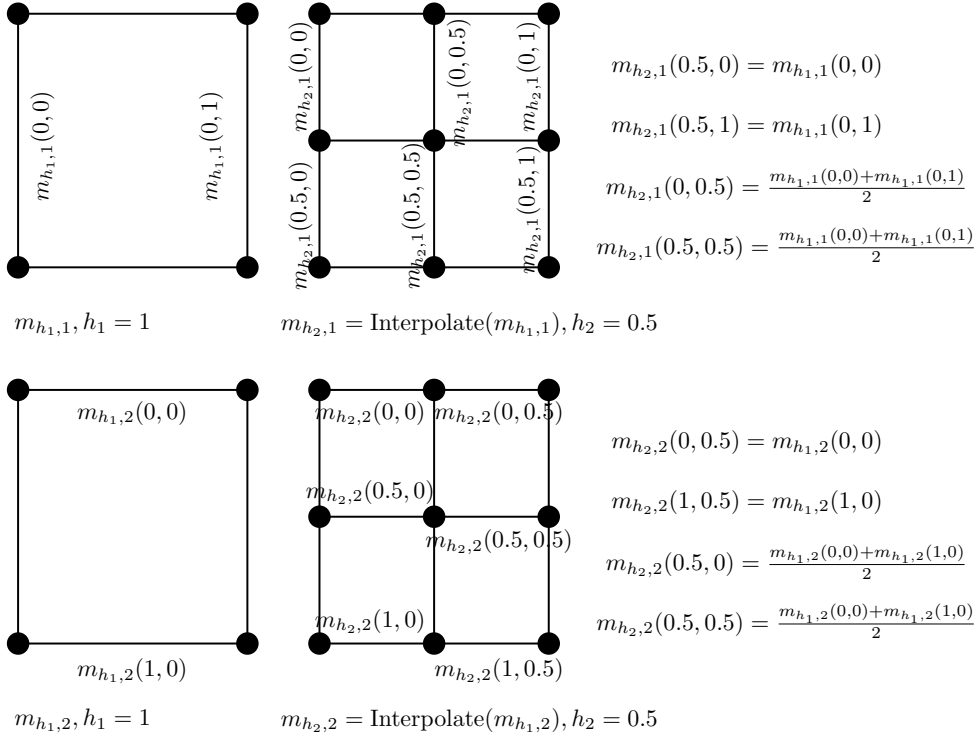
\begin{figure}[t]
    \centering
    \tikzset{every picture/.style={line width=0.75pt}} 

\begin{tikzpicture}[x=0.75pt,y=0.75pt,yscale=-1,xscale=1]

\draw    (102.39, 33.34) rectangle (218.03, 161.34)   ;
\draw  [fill={rgb, 255:red, 0; green, 0; blue, 0 }  ,fill opacity=1 ]  (102.39, 33.34) circle [x radius= 5.17, y radius= 5.17]  ;
\draw  [fill={rgb, 255:red, 0; green, 0; blue, 0 }  ,fill opacity=1 ]  (102.39, 161.34) circle [x radius= 5.17, y radius= 5.17]  ;
\draw  [fill={rgb, 255:red, 0; green, 0; blue, 0 }  ,fill opacity=1 ]  (218.03, 33.34) circle [x radius= 5.17, y radius= 5.17]  ;
\draw  [fill={rgb, 255:red, 0; green, 0; blue, 0 }  ,fill opacity=1 ]  (218.03, 161.34) circle [x radius= 5.17, y radius= 5.17]  ;
\draw    (254.39, 33.34) rectangle (370.02, 161.34)   ;
\draw    (254.39,97.34) -- (370.02,97.34) ;

\draw    (312.21,33.34) -- (312.21,161.34) ;

\draw  [fill={rgb, 255:red, 0; green, 0; blue, 0 }  ,fill opacity=1 ]  (254.39, 97.34) circle [x radius= 5.17, y radius= 5.17]  ;
\draw  [fill={rgb, 255:red, 0; green, 0; blue, 0 }  ,fill opacity=1 ]  (312.21, 97.34) circle [x radius= 5.17, y radius= 5.17]  ;
\draw  [fill={rgb, 255:red, 0; green, 0; blue, 0 }  ,fill opacity=1 ]  (254.39, 33.34) circle [x radius= 5.17, y radius= 5.17]  ;
\draw  [fill={rgb, 255:red, 0; green, 0; blue, 0 }  ,fill opacity=1 ]  (254.39, 161.34) circle [x radius= 5.17, y radius= 5.17]  ;
\draw  [fill={rgb, 255:red, 0; green, 0; blue, 0 }  ,fill opacity=1 ]  (312.21, 33.34) circle [x radius= 5.17, y radius= 5.17]  ;
\draw  [fill={rgb, 255:red, 0; green, 0; blue, 0 }  ,fill opacity=1 ]  (370.02, 33.34) circle [x radius= 5.17, y radius= 5.17]  ;
\draw  [fill={rgb, 255:red, 0; green, 0; blue, 0 }  ,fill opacity=1 ]  (370.02, 97.34) circle [x radius= 5.17, y radius= 5.17]  ;
\draw  [fill={rgb, 255:red, 0; green, 0; blue, 0 }  ,fill opacity=1 ]  (312.21, 161.34) circle [x radius= 5.17, y radius= 5.17]  ;
\draw  [fill={rgb, 255:red, 0; green, 0; blue, 0 }  ,fill opacity=1 ]  (370.02, 161.34) circle [x radius= 5.17, y radius= 5.17]  ;

\draw (360,132) node [rotate=90,scale=0.9]   {$m_{h_2,1}(0.5, 1)$};
\draw (325,68) node [rotate=90,scale=0.9] {$m_{h_2,1}(0, 0.5)$};
\draw (245,140) node [rotate=90,scale=0.9]  {$m_{h_2,1}(0.5,0)$};
\draw (302,135) node [rotate=90,scale=0.9]   {$m_{h_2,1}(0.5, 0.5)$};

\draw (115,100) node [rotate=90,scale=0.9] {$m_{h_1,1}(0,0)$};
\draw (245,65) node [rotate=90,scale=0.9]  {$m_{h_2,1}(0,0)$};
\draw (200,100) node [rotate=90,scale=0.9]  {$m_{h_1,1}(0,1)$};
\draw (360,65) node [rotate=90,scale=0.9]  {$m_{h_2,1}(0,1)$};

\draw (330,190) node [scale=0.9]  {$m_{h_2,1} = \mathrm{Interpolate }(m_{h_1,1}), h_2 = 0.5$};
\draw (135,190) node [scale=0.9]  {$m_{h_1,1}, h_1 = 1$};

\draw (490,150) node [scale=0.9]  { $m_{h_2,1}(0.5, 0.5) = \frac{m_{h_1,1}(0,0)+m_{h_1,1}(0,1)}{2}$};

\draw (470,60) node [scale=0.9] {$m_{h_2,1}(0.5,0) =m_{h_1,1}(0,0)$};
\draw (470,90) node [scale=0.9]  {$m_{h_2,1}(0.5,1) =m_{h_1,1}(0,1)$};
\draw (490,120) node [scale=0.9]  {$m_{h_2,1}(0,0.5) =\frac{m_{h_1,1}(0,0)+m_{h_1,1}(0,1)}{2}$};

\end{tikzpicture}
    \tikzset{every picture/.style={line width=0.75pt}} 

\begin{tikzpicture}[x=0.75pt,y=0.75pt,yscale=-1,xscale=1]

\draw    (102.39, 33.34) rectangle (218.03, 161.34)   ;
\draw  [fill={rgb, 255:red, 0; green, 0; blue, 0 }  ,fill opacity=1 ]  (102.39, 33.34) circle [x radius= 5.17, y radius= 5.17]  ;
\draw  [fill={rgb, 255:red, 0; green, 0; blue, 0 }  ,fill opacity=1 ]  (102.39, 161.34) circle [x radius= 5.17, y radius= 5.17]  ;
\draw  [fill={rgb, 255:red, 0; green, 0; blue, 0 }  ,fill opacity=1 ]  (218.03, 33.34) circle [x radius= 5.17, y radius= 5.17]  ;
\draw  [fill={rgb, 255:red, 0; green, 0; blue, 0 }  ,fill opacity=1 ]  (218.03, 161.34) circle [x radius= 5.17, y radius= 5.17]  ;
\draw    (254.39, 33.34) rectangle (370.02, 161.34)   ;
\draw    (254.39,97.34) -- (370.02,97.34) ;

\draw    (312.21,33.34) -- (312.21,161.34) ;

\draw  [fill={rgb, 255:red, 0; green, 0; blue, 0 }  ,fill opacity=1 ]  (254.39, 97.34) circle [x radius= 5.17, y radius= 5.17]  ;
\draw  [fill={rgb, 255:red, 0; green, 0; blue, 0 }  ,fill opacity=1 ]  (312.21, 97.34) circle [x radius= 5.17, y radius= 5.17]  ;
\draw  [fill={rgb, 255:red, 0; green, 0; blue, 0 }  ,fill opacity=1 ]  (254.39, 33.34) circle [x radius= 5.17, y radius= 5.17]  ;
\draw  [fill={rgb, 255:red, 0; green, 0; blue, 0 }  ,fill opacity=1 ]  (254.39, 161.34) circle [x radius= 5.17, y radius= 5.17]  ;
\draw  [fill={rgb, 255:red, 0; green, 0; blue, 0 }  ,fill opacity=1 ]  (312.21, 33.34) circle [x radius= 5.17, y radius= 5.17]  ;
\draw  [fill={rgb, 255:red, 0; green, 0; blue, 0 }  ,fill opacity=1 ]  (370.02, 33.34) circle [x radius= 5.17, y radius= 5.17]  ;
\draw  [fill={rgb, 255:red, 0; green, 0; blue, 0 }  ,fill opacity=1 ]  (370.02, 97.34) circle [x radius= 5.17, y radius= 5.17]  ;
\draw  [fill={rgb, 255:red, 0; green, 0; blue, 0 }  ,fill opacity=1 ]  (312.21, 161.34) circle [x radius= 5.17, y radius= 5.17]  ;
\draw  [fill={rgb, 255:red, 0; green, 0; blue, 0 }  ,fill opacity=1 ]  (370.02, 161.34) circle [x radius= 5.17, y radius= 5.17]  ;

\draw (345,45) node [scale=0.9] {$m_{h_2,2}(0, 0.5)$};
\draw (345,175) node [scale=0.9]   {$m_{h_2,2}(1, 0.5)$};
\draw (280,85) node [scale=0.9]  {$m_{h_2,2}(0.5,0)$};
\draw (345,110) node [scale=0.9]   {$m_{h_2,2}(0.5, 0.5)$};

\draw (160,45) node [scale=0.9] {$m_{h_1,2}(0,0)$};
\draw (280,45) node [scale=0.9]  {$m_{h_2,2}(0,0)$};
\draw (160,175) node [scale=0.9]  {$m_{h_1,2}(1,0)$};
\draw (280,150) node [scale=0.9]  {$m_{h_2,2}(1,0)$};

\draw (330,200) node [scale=0.9]  {$m_{h_2,2} = \mathrm{Interpolate }(m_{h_1,2}), h_2 = 0.5$};
\draw (135,200) node [scale=0.9]  {$m_{h_1,2}, h_1 = 1$};

\draw (490,150) node [scale=0.9]  { $m_{h_2,2}(0.5, 0.5) = \frac{m_{h_1,2}(0,0)+m_{h_1,2}(1,0)}{2}$};

\draw (470,60) node [scale=0.9] {$m_{h_2,2}(0, 0.5) =m_{h_1,2}(0,0)$};
\draw (470,90) node [scale=0.9]  {$m_{h_2,2}(1, 0.5) =m_{h_1,2}(1,0)$};
\draw (490,120) node [scale=0.9]  {$m_{h_2,2}(0.5, 0) =\frac{m_{h_1,2}(0,0)+m_{h_1,2}(1,0)}{2}$};

\end{tikzpicture}
    \caption{An illustration of (\ref{eq:inter_m}) (2D case): from $1\times 1$ grid to $2\times 2$ grid}
    \label{fig:iterm}
\end{figure}

\paragraph{Interpolation of flux $m_h$} Due to the zero-flux boundary condition for (\ref{eq:opt_flux}),
interpolating $m$ is different from $\phi$. The flow $m$ can be viewed as ``edge weights'' on the grid~\cite{li2018parallel,bassetti2018computation}, as in Figure \ref{fig:iterm}. 
For the $i$-th coordinate $x \in \Omega^{h_l}_i$, we use nearest neighbor interpolation: $y_{\mathrm{NN}} \in \argmin_{\alpha \in\Omega^{h_{l-1}}_i}|\alpha -x_i|$; for the other coordinates, we use the same method with (\ref{eq:inter_phi}). 
Define a neighborhood related with direction $i$:
\[\hat{\mathcal{N}}_{l,i}(x) = \Big\{ y\in\Omega^{h_{l-1}}: y_j = x_j, \forall j \in J_l; ~~y_i = y_{\mathrm{NN}};~~ |y_j - x_j| \leq h_l,\forall j \in \hat{J}_l/\{i\}. \Big\}\]
Then the mapping $m_{h_l} = \text{Interpolate }(m_{h_{l-1}})$ is pointwise defined as follows:
\begin{equation}
    \label{eq:inter_m}
    \begin{aligned}
m_{h_l,i}(x) = 
\begin{cases}
\frac{1}{|\mathcal{N}_l(x)|}  \sum_{y \in \mathcal{N}_l(x)} m_{h_{l-1},i}(y),\quad & i \in J_l, \\
\frac{1}{|\hat{\mathcal{N}}_{l,i}(x) |}  \sum_{y \in \hat{\mathcal{N}}_{l,i}(x) } m_{h_{l-1},i}(y),\quad &
i \in \hat{J}_l.
\end{cases}
\end{aligned}
\end{equation}
For example, if $d=2$ (2D case) and $h_l = 2^{-(l-1)},l\geq2$, (\ref{eq:inter_m}) can be written as:
\[
m_{h_l,1}(x_1,x_2) = 
\begin{cases}
m_{h_{l-1},1}(x_1,x_2),\qquad\qquad\qquad \qquad\qquad\qquad\qquad\text{if }(x_1,x_2) \in \mathcal{X}_1,\\
m_{h_{l-1},1}(x_1-h_l,x_2),\qquad \qquad\qquad\qquad\qquad\qquad\text{if }(x_1,x_2) \in \mathcal{X}_3,\\
\Big(m_{h_{l-1},1}(x_1,x_2-h_l)+m_{h_{l-1},1}(x_1,x_2+h_l)\Big)/2,\text{if }(x_1,x_2) \in \mathcal{X}_2,\\
\Big(m_{h_{l-1},1}(x_1-h_l,x_2-h_l)+m_{h_{l-1},1}(x_1-h_l, x_2+h_l)\Big)/2,\\ \qquad\qquad\qquad\qquad\qquad\qquad\qquad\qquad \qquad\qquad\qquad~~\text{otherwise},& \\
\end{cases}
\]
\[
m_{h_l,2}(x_1,x_2) = 
\begin{cases}
m_{h_{l-1},2}(x_1,x_2),\qquad\qquad\qquad \qquad\qquad\qquad\qquad\text{if }(x_1,x_2) \in \mathcal{X}_1,\\
m_{h_{l-1},2}(x_1,x_2-h_l),\qquad \qquad\qquad\qquad\qquad\qquad\text{if }(x_1,x_2) \in \mathcal{X}_2,\\
\Big(m_{h_{l-1},2}(x_1-h_l,x_2)+m_{h_{l-1},2}(x_1+h_l,x_2)\Big)/2,\text{if }(x_1,x_2) \in \mathcal{X}_3,\\
\Big(m_{h_{l-1},2}(x_1-h_l,x_2-h_l)+m_{h_{l-1},2}(x_1+h_l,x_2-h_l)\Big)/2,\\ \qquad\qquad\qquad\qquad\qquad\qquad\qquad\qquad\qquad\qquad \qquad\text{otherwise}.& \\
\end{cases}
\]
Figure \ref{fig:iterm} illustrates the above formula.

\paragraph{Interpolation of $\varphi_h$} Since $\varphi_h$ has the same dimension as $m_h$, the interpolation of $\varphi_h$ is the same as interpolation of flux $m_h$ (\ref{eq:inter_m}).

The interpolation operators introduced above are linear operators satisfying the following properties that are proved in Appendix \ref{app:inter}:
   
\begin{lemma}
\label{lemma:inter_bdd}
If $h_{l-1} = 2 h_{l}$, then we have 
\begin{equation}
    \label{eq:interpolate_bdd}
    \begin{aligned}
    \| \mathrm{Interpolate }(\phi_{h_{l-1}}) \|^2_{L^2} \leq& \| \phi_{h_{l-1}} \|^2_{L^2}, \quad \forall \phi_{h_{l-1}}: \Omega^{h_{l-1}}\to\Re. \\
    \| \mathrm{Interpolate }(m_{h_{l-1}}) \|^2_{L^2} \leq& \| m_{h_{l-1}} \|^2_{L^2}, \quad \forall m_{h_{l-1}}: \Omega^{h_{l-1}}\to\Re^d.\\
    \| \mathrm{Interpolate }(\varphi_{h_{l-1}}) \|^2_{L^2} \leq& \| \varphi_{h_{l-1}} \|^2_{L^2}, \quad \forall \varphi_{h_{l-1}}: \Omega^{h_{l-1}}\to\Re^d.
    \end{aligned}
\end{equation}
\end{lemma}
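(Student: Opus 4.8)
The plan is to prove all three inequalities by one common two-step scheme—a pointwise Cauchy--Schwarz bound followed by an exchange of summation order and an overlap count—so I first carry it out for the potential $\phi$ and then indicate the modifications needed for the flux $m$, the case of $\varphi$ being identical to that of $m$. Since each interpolated value $\phi_{h_l}(x)$ is, by (\ref{eq:inter_phi}), the arithmetic mean of the values $\phi_{h_{l-1}}(y)$ over $\mathcal{N}_l(x)$, convexity of $t\mapsto t^2$ (equivalently Cauchy--Schwarz) gives the pointwise bound
\[
\phi_{h_l}(x)^2 \le \frac{1}{|\mathcal{N}_l(x)|}\sum_{y\in\mathcal{N}_l(x)}\phi_{h_{l-1}}(y)^2 .
\]
Multiplying by $h_l^d$, summing over $x\in\Omega^{h_l}$, and exchanging the two sums reduces the claim to the purely combinatorial estimate $h_l^d\, S(y)\le h_{l-1}^d$ for every coarse node $y$, where $S(y):=\sum_{x:\,y\in\mathcal{N}_l(x)} |\mathcal{N}_l(x)|^{-1}$ is the total weight with which the value at $y$ is reused.

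The crux is to evaluate $S(y)$, and here I would exploit the product structure of the grids. Since $h_{l-1}=2h_l$, a fine node $x$ has, in each coordinate $j$, either a coarse value $x_j$ (then $j\in J_l$ and this coordinate contributes a factor $1$ to $|\mathcal{N}_l(x)|$) or a half-integer value (then $j\in\hat J_l$ and it contributes a factor $2$). Consequently $|\mathcal{N}_l(x)|^{-1}=\prod_j w_j$ with $w_j\in\{1,\tfrac12\}$, and the constraint $y\in\mathcal{N}_l(x)$ likewise decouples across coordinates. Hence $S(y)$ factors as a product over $j$ of one-dimensional weight sums, each equal to $1+\tfrac12+\tfrac12=2$ in the interior (from the choices $x_j=y_j$ and $x_j=y_j\pm h_l$) and strictly smaller at the boundary, where one of the two half-integer neighbors is absent. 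Thus $S(y)\le 2^d$, and since $h_l^d\,2^d=(2h_l)^d=h_{l-1}^d$ this yields $\|\mathrm{Interpolate}(\phi_{h_{l-1}})\|_{L^2}^2\le\|\phi_{h_{l-1}}\|_{L^2}^2$.

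For the flux I would treat each component $i$ separately, writing $\|m_{h_l}\|_{L^2}^2=\sum_{i=1}^d h_l^d\sum_x m_{h_l,i}(x)^2$ and proving the componentwise estimate. The same Cauchy--Schwarz step applies, now averaging over $\mathcal{N}_l(x)$ when $i\in J_l$ and over $\hat{\mathcal{N}}_{l,i}(x)$ when $i\in\hat J_l$. The observation that makes the count go through is that in both cases the neighborhood size equals $2^{\#\{j\ne i:\,x_j\text{ half-integer}\}}$; that is, direction $i$ never contributes an averaging factor, because along $i$ the scheme (\ref{eq:inter_m}) uses nearest-neighbor (or identity) interpolation rather than averaging. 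Therefore the reciprocal size factors as $\prod_{j\ne i}w_j$, and the accumulated weight $T_i(y)$ again decouples: the transverse directions $j\ne i$ each contribute a sum $\le 2$ exactly as before, while direction $i$ contributes $\sum_{x_i}1\le 2$, arising from $x_i=y_i$ (identity) and $x_i=y_i+h_l$ (whose chosen nearest neighbor $x_i-h_l$ equals $y_i$), with the second term dropping out at the right boundary. This gives $T_i(y)\le 2^d$, hence the componentwise bound, and summing over $i$ proves the flux inequality; the bound for $\varphi$ follows verbatim.

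The main obstacle I anticipate is the bookkeeping for $m$: one must verify that the mixed scheme—nearest-neighbor along a component's own direction and averaging transversally—still produces an overlap weight bounded by $2^d$, and in particular that the tie-breaking convention in $y_{\mathrm{NN}}$ (selecting $x_i-h_l$ for a half-integer $x_i$) forces each coarse value $m_{h_{l-1},i}(y)$ to be reused by at most two fine nodes in direction $i$. Once the product/factorization structure of $\mathcal{N}_l$ and $\hat{\mathcal{N}}_{l,i}$ is made explicit, the boundary cases only shrink the relevant sums and therefore cannot spoil the inequality.
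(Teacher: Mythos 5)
Your proof is correct and follows essentially the same route as the paper's: a pointwise Jensen/Cauchy--Schwarz bound, an exchange of summation, and a verification that the accumulated reuse weight of each coarse node is at most $2^d$, which cancels against $(h_{l-1}/h_l)^d$. The only difference is cosmetic---you organize the $2^d$ count as a product of per-coordinate sums $1+\tfrac12+\tfrac12=2$ while the paper enumerates the faces of the hypercube $H(y)$ and sums $\binom{d}{n}$---and your explicit treatment of the nearest-neighbor direction in the flux case is a welcome elaboration of what the paper dismisses with ``the same proof line.''
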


\subsection{Parameter choices}
\label{sec:prac-tol}
The choice of stopping tolerances ($\varepsilon$ for Algorithms \ref{algo:primaldual} and \ref{algo:pdhg}; $\{\varepsilon_l\}_{l=1}^L$ for Algorithms \ref{algo:multigrid1} and \ref{algo:multigrid2}) are critical to the practical performances of the algorithms. We will discuss the theoretical conditions for the tolerances in Section \ref{section:4} and empirical formulas in Section \ref{section:5}.  To summarize, we list typical rules in 2D ($d=2$) case below:
\begin{itemize}
    \item Algorithm \ref{algo:primaldual}: $\varepsilon = O(h^3)$.
    \item Algorithm \ref{algo:pdhg}: $\varepsilon = O(h^2)$.
    \item Algorithm \ref{algo:multigrid1}: $\varepsilon_l = \varepsilon_L\times(h_l/h_L)^{-1}$, where $\varepsilon_L = O(h^2)$.
    \item Algorithm \ref{algo:multigrid2}: $\varepsilon_l = \varepsilon_L\times(h_l/h_L)^{-1}$, where $\varepsilon_L = O(h^2)$.
\end{itemize}
With these rules, the algorithms provide outputs with errors in the order of the discretization error. 
Moreover, we take $L=O(\log(1/h))$.

\section{Analysis of computational costs}\label{section:4}
In this section, we provide a complexity analysis of Algorithms \ref{algo:primaldual}, \ref{algo:pdhg}, \ref{algo:multigrid1} and \ref{algo:multigrid2}. We introduce compact notions $z_h = (m_h, \phi_h)$ for Algorithms \ref{algo:primaldual},\ref{algo:multigrid1} and $y_h = (m_h, \varphi_h)$ for Algorithms \ref{algo:pdhg},\ref{algo:multigrid2}.  Let $Z^*_{h},Y^*_{h}$, respectively, be the solution set of the  following min-max problems with grid step size $h$:
\[
\begin{aligned}
Z^*_{h} =& \Big\{(m_{h}^*,\phi_{h}^*)\Big|(m_{h}^*,\phi_{h}^*) \mathrm{~is~a~saddle~point~of~} L(m_{h},\phi_{h}) \Big\},\\
Y^*_{h} =& \Big\{(m_{h}^*,\varphi_{h}^*)\Big|(m_{h}^*,\varphi_{h}^*) \mathrm{~is~a~saddle~point~of~} \tilde{L}(m_{h},\varphi_{h}) \Big\},
\end{aligned}
\]
where $L$ and $\tilde{L}$ are defined in (\ref{eq:minmax_cp}) and (\ref{eq:minmax_pdhg}) respectively.

\subsection{Assumptions} In this subsection, some assumptions used in our theories will be introduced and discussed.

\begin{assume}
\label{assume:cp2}
The solution sets $Z^*_{h_l}$ on all the levels are nonempty. And there exists a constant $C_1$ independent of $h_l$ such that,
\begin{equation}
\label{eq:l2bdd_z}
    \|z^*_{h_l}\|^2_{L^2} \leq C_1,\quad \forall z^*_{h_l}\in Z^*_{h_l}, \quad l = 1,2,\cdots,L.
\end{equation}
\end{assume}

Assumption \ref{assume:cp2} is mild. Since $z^*_{h_l} = (m^*_{h_l},\phi^*_{h_l})$, the norm of $z^*_{h_l}$ can be decomposed as $ \|z^*_{h_l}\|^2_{L^2} =  \|m^*_{h_l}\|^2_{L^2} +  \|\phi^*_{h_l}\|^2_{L^2}$. 
 The dual solution $\phi^*_{h_l}$, by the definition in (\ref{eq:opt_dual3}), has the property: $\|A_h^* \phi^*_{h_l}(x)\|_q\leq 1,\forall x \in \Omega^{h_l}$, where $A_h^*$ is the gradient operator defined on $\Omega^{h_l}$. It implies that all the dual solutions $\phi^*_{h_l}$ are Lipschitz continuous uniformly on the compact domain $\Omega=[0,1]^d$. Thus, all the dual solutions $\phi^*_{h_l}$ are uniformly bounded as long as they are kept with a zero mean. Actually keeping $\phi^*_{h_l}$ to be zero-meaned is not difficult, see \cite{villani2008optimal}.
 The primal solution $m^*_{h_l}$, by definition, is the solution of minimization problem (\ref{eq:opt_flux_l}). 
 It suffices to have $L2$-boundedness of $m^*_{h_l}$ if $\rho^0,\rho^1$ are smooth. In that case, $\rho^0_{h_l},\rho^1_{h_l}$ are similar on different levels and so does $m^*_{h_l}$, then (\ref{eq:l2bdd_z}) can be expected.
 Generally speaking, on some extreme cases like $\rho^0$ and $\rho^1$ are  $\delta$-functions, it does not hold. 
 However, on commonly used examples, we numerically validated Assumption \ref{assume:cp2} in Table \ref{tab:assume2} and observed that $C_1$ exists and is independent of grid size. 

\begin{assume}
\label{assume:cp}
For any 
optimal solution $z^*_{h_l}\in Z^*_{h_l}$ on level $l$, there exists an 
optimal solution $z^*_{h_{l+1}}\in Z^*_{h_{l+1}}$ on the finer level $l+1$ such that
\begin{equation}
\label{eq:assumecp}
    \|\mathrm{Interpolate }(z^*_{h_l}) - z^*_{h_{l+1}}\|^2_{L^2} \leq C_2 (h_l)^{r},\quad  l = 1,2,\cdots,L-1,
\end{equation}
where $C_2>0$ is independent of $h_l$ and $r > 0$ depends on the smoothness of the solution $z^*_{h_l}$, the interpolation method we choose and the properties of $\rho^0_{h_l}$ and $\rho^1_{h_l}$ on each of the levels.
\end{assume}

Assumption \ref{assume:cp} requires the solution sets between two consecutive levels are close to each other and each solution is smooth enough.
Ideally speaking, if $z^*_{h_l}$ and $z^*_{h_{l+1}}$ are the discretized version of the same underlying continuous $z^*$ and $z^*$ is smooth enough, it should hold $r=2$ due to our interpolation method.
Generally speaking, one can only expect $z^*_{h}\to z^*$ as $h\to 0$ without knowing the convergence rate~\cite{benamou2018numerical} and the smoothness of $z^*$ cannot be guaranteed (for example, $\rho^0$ and $\rho^1$ are taken as $\delta$-functions).
Thus, we are not able to show (\ref{eq:assumecp}) holds  theoretically.
However, Assumption \ref{assume:cp} holds on commonly used examples. We numerically validated it in 
Section \ref{sec:assumes} with $d=2$ and $p=1,2,\infty$. 
Figures \ref{fig:assume1m} and \ref{fig:assume1phi} give a visualized example that has multiple solutions on different levels. Table \ref{tab:assume2} quantifies $\|\mathrm{Interpolate }(z^*_{h_l}) - z^*_{h_{l+1}}\|^2_{L^2}$ and shows that $r$ is approximately $r \approx 2$.

\begin{assume}
\label{assume:pdhg2}
The solution sets $Y^*_{h_l}$ on all the levels are nonempty. And there exists a constant $C_3$ independent of $h_l$ such that,
\begin{equation}
    \|y^*_{h_l}\|^2_{L^2} \leq C_3,\quad \forall y^*_{h_l}\in Y^*_{h_l},\quad  l = 1,2,\cdots,L.
\end{equation}
\end{assume}

Similar with the discussion following Assumption \ref{assume:cp2}, $ \|y^*_{h_l}\|^2_{L^2} =  \|m^*_{h_l}\|^2_{L^2} +  \|\varphi^*_{h_l}\|^2_{L^2}$. 
The dual optimal solution $\varphi^*_{h_l}$, by the definition in (\ref{eq:minmax_pdhg}), has the property: $ \varphi^*_{h_l} = A_h^* \phi^*_{h_l}$. Since $\|A_h^* \phi^*_{h_l}(x)\|_q\leq 1,\forall x \in \Omega^{h_l}$, we have $\|\varphi^*_{h_l}(x)\|_q\leq 1,\forall x \in \Omega^{h_l}$, which implies all the dual solutions $\phi^*_{h_l}$ are uniformly bounded:\footnote{This bound is due to the fact that $\|a\|_2 \leq \sqrt{d} \|a\|_q $ for all $a\in\Re^d$ and $1 \leq q \leq \infty$.}
\[\|\varphi^*_{h_l}\|^2_{L^2} = \sum_{x\in\Omega^{h_l}}\|\varphi^*_{h_l}(x)\|^2_2 h_l^d \leq \sum_{x\in\Omega^{h_l}} d \|\varphi^*_{h_l}(x)\|^2_q h_l^d \leq d.\]

\begin{assume}
\label{assume:pdhg}
For any 
optimal solution $y^*_{h_l}\in Y^*_{h_l}$ on level $l$, there exists an 
optimal solution $y^*_{h_{l+1}}\in Y^*_{h_{l+1}}$ on the finer level $l+1$ such that
\begin{equation}
    \|\mathrm{Interpolate }(y^*_{h_l}) - y^*_{h_{l+1}}\|^2_{L^2} \leq C_4 (h_{l})^{\nu},\quad l = 1,2,\cdots,L-1,
\end{equation}
where $C_4>0$ is independent of $h_l$ and $\nu > 0$ depends on the smoothness of the solution $y^*_{h_l}$, the interpolation method we choose and the properties of $\rho^0_{h_l}$ and $\rho^1_{h_l}$ on each of the levels.
\end{assume}

Table \ref{tab:assume2} 
shows that $\nu$ is approximately $\nu\approx 1$ on the commonly used examples. The factor $\nu$ is smaller than the factor $r$ in Assumption \ref{assume:cp} ($r\approx 2$ on the same examples) because $\varphi^*_{h_l}$ is the gradient of $\phi^*_{h_l}$: $ \varphi^*_{h_l} = A_h^* \phi^*_{h_l}$ and $\varphi^*_{h_l}$ is less smooth than $\phi^*_{h_l}$. Figures \ref{fig:assume1phi} and \ref{fig:assume3p} visualize $\phi^*_{h_l}$  and $\varphi^*_{h_l}$ respectively and  support this point.

\subsection{Analysis of Algorithms \ref{algo:primaldual} and \ref{algo:multigrid1}} 

Assumption (\ref{eq:assumecp}) tells us the solution on the $l$-th level $z^*_{h_l}$ has an grid error up to $O((h_l)^{r})$ compared with the solution on a finer level $z^*_{h_{l+1}}$. Thus, on each level, it's enough to do iteration until $\|z^{K_l}_{h_{l}}-z^*_{h_{l}}\|^2_{L^2} \leq O( (h_{l})^{r})$. The following lemma demonstrates that, as long as we choose $\varepsilon$ (for Algorithm \ref{algo:primaldual}) and $\{\varepsilon_l\}_{l=1}^L$ (for Algorithm \ref{algo:multigrid1}) properly, such condition can be satisfied.

\begin{lemma}
\label{lemma:epsilon}
Given $\rho^0,\rho^1,h$, we choose $h_l = 2^{L-l}h$. 
\begin{enumerate}
    \item There exists $\bar{\varepsilon}$, as long as we run Algorithm \ref{algo:primaldual} with any $\varepsilon \leq \bar{\varepsilon}$, we have
    \begin{equation}
    \label{eq:lemma11}
        \|z^K_{h}-z^*_{h}\|^2_{L^2} \leq C_2 h^{r},\quad \text{for some }z^*_{h} \in Z^*_{h}.
    \end{equation}
    \item Suppose we run Algorithm \ref{algo:multigrid1} with arbitrary tolerances $\{\varepsilon_i\}_{i=1}^{l-1}$ on the lowest $l-1$ levels. There exists a threshold $\bar{\varepsilon}_{l}$ depending on $\{\varepsilon_i\}_{i=1}^{l-1}$ , as long as we set $\varepsilon_{l} \leq \bar{\varepsilon}_{l}\Big(\{\varepsilon_i\}_{i=1}^{l-1}\Big)$ on level $l$, we have
    \begin{equation}
    \label{eq:lemma12}
        \|z^{K_l}_{h_{l}}-z^*_{h_{l}}\|^2_{L^2} \leq C_2 (h_{l})^{r},\quad \text{for some }z^*_{h_{l}} \in Z^*_{h_{l}}.
    \end{equation}
\end{enumerate}
\end{lemma}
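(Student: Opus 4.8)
The plan is to recast Algorithm \ref{algo:primaldual} as a firmly nonexpansive fixed-point iteration and read off the claimed bounds from Fej\'er monotonicity together with the monotone decay of the residual $R^k_h$. On a fixed grid of step $h$, write $z_h=(m_h,\phi_h)$ and introduce the self-adjoint operator
\[
M=\begin{pmatrix}\tfrac1\mu\mathrm{Id} & -A_h^*\\ -A_h & \tfrac1\tau\mathrm{Id}\end{pmatrix}
\]
on the inner product $\langle\cdot,\cdot\rangle_h$. Comparing with \eqref{eq:pd_stop} gives $R^k_h=\|z^{k+1}_h-z^k_h\|_M^2$, and with $\mu=\tau=1/(2\|A_h\|)$ one has $\mu\tau\|A_h\|^2=\tfrac14<1$, so $M\succ0$ and $\|\cdot\|_M$ is equivalent to $\|\cdot\|_{L^2}$ (with $h$-dependent constants $\lambda_{\min}(M),\lambda_{\max}(M)$). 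The update \eqref{eq:primaldual} is the resolvent step $z^{k+1}_h=(\mathrm{Id}+M^{-1}\mathcal{A})^{-1}z^k_h$ for the monotone operator $\mathcal{A}$ of the saddle problem \eqref{eq:minmax_cp}, hence firmly nonexpansive in $\|\cdot\|_M$. Two consequences suffice: $\|z^{k+1}_h-z^*_h\|_M^2\le\|z^k_h-z^*_h\|_M^2-R^k_h$ for every $z^*_h\in Z^*_h$, so $\|z^k_h-z^*_h\|_M$ is nonincreasing; and $R^k_h$ is itself nonincreasing and, by the convergence $z^k_h\to z^*_h$ already quoted in the paper, $R^k_h\to0$.

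For Part 1, fix $h$, so that $C_2h^r$ is a fixed positive threshold (only positivity of $C_2(h)^r$ is used here, not the full content of Assumption \ref{assume:cp}). Since $z^k_h\to z^*_h$ in $L^2$, pick any index with $\|z^{K}_h-z^*_h\|_M^2\le\lambda_{\min}(M)\,C_2h^r$ and propagate forward by Fej\'er monotonicity and $\|\cdot\|^2_{L^2}\le\lambda_{\min}(M)^{-1}\|\cdot\|_M^2$; this shows $\{k:\|z^k_h-z^*_h\|^2_{L^2}\le C_2h^r\}$ contains a tail. Let $K_0$ be the first index beyond which the $L^2$ bound holds at every iteration. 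If $K_0=0$ the output already satisfies \eqref{eq:lemma11} for any positive $\bar\varepsilon$. Otherwise minimality gives $\|z^{K_0-1}_h-z^*_h\|^2_{L^2}>C_2h^r$, so $z^{K_0-1}_h\ne z^*_h$, which forces $R^{K_0-1}_h>0$ (a vanishing residual would freeze the iterate at its limit). Set $\bar\varepsilon:=R^{K_0-1}_h>0$. For any $\varepsilon\le\bar\varepsilon$ and any $k\le K_0-1$, monotonicity of $R^k_h$ yields $R^k_h\ge R^{K_0-1}_h=\bar\varepsilon\ge\varepsilon$, so the test $R^k_h<\varepsilon$ cannot fire before iteration $K_0$; since $R^k_h\to0$ it fires at some finite $K\ge K_0$, at which point \eqref{eq:lemma11} holds.

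For Part 2, note that once the tolerances $\{\varepsilon_i\}_{i=1}^{l-1}$ on the coarser levels are fixed, the run of Algorithm \ref{algo:multigrid1} up to the start of level $l$ is deterministic, so the initializer $z^0_{h_l}=\mathrm{Interpolate}(z^{K_{l-1}}_{h_{l-1}})$ is a fixed element of the level-$l$ space. Running level $l$ is then exactly an instance of Algorithm \ref{algo:primaldual} on the fixed grid $h_l$ from this fixed initialization, which converges to some $z^*_{h_l}\in Z^*_{h_l}$ (nonempty by Assumption \ref{assume:cp2}). Repeating the Part 1 argument verbatim on level $l$ produces the threshold $\bar\varepsilon_l:=R^{K_0-1}_{h_l}$; because $K_0$ and the residual values depend on $z^0_{h_l}$, the threshold depends on $\{\varepsilon_i\}_{i=1}^{l-1}$, exactly as asserted, and \eqref{eq:lemma12} follows.

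The substantive step is the structural setup of the first paragraph: identifying $R^k_h$ with the squared $M$-distance between successive iterates and establishing firm nonexpansiveness, which jointly give monotonicity of both $\|z^k_h-z^*_h\|_M$ and $R^k_h$. The logical heart is that a \emph{small} residual does not by itself certify proximity to a solution; instead one calibrates $\bar\varepsilon$ to the residual value just before the guaranteed-accuracy index $K_0$ and uses monotonicity of $R^k_h$ to forbid premature stopping. I expect the only real care to lie in the $h$-dependence of the equivalence constants $\lambda_{\min}(M),\lambda_{\max}(M)$ — harmless here since each level is analyzed at a fixed $h_l$ — and in the degenerate case $R^k_h=0$, both of which are routine.
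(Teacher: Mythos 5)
Your proof is correct and follows essentially the same route as the paper's: identify the iteration as PPA in the $M$-metric, use monotonicity of $R^k_h$ together with the convergence $z^k_h \to z^*_h$ to locate an index of guaranteed accuracy, and calibrate $\bar\varepsilon$ to the residual just before that index so that monotonicity of $R^k_h$ forbids premature stopping. Your explicit treatment of the degenerate case where the residual vanishes (the paper's choice $\bar\varepsilon = R^{\bar K}_h$ could in principle be zero) is a small refinement in rigor, but otherwise the two arguments coincide, including the dependence of $\bar\varepsilon_l$ on $\{\varepsilon_i\}_{i=1}^{l-1}$ through the deterministic initializer $z^0_{h_l}$.
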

\begin{proof}\textbf{Step 1:} 
We consider Algorithm \ref{algo:primaldual}. Define 
 \[M_{h} = {h}^d\begin{bmatrix}
 I/\mu & -(A_{h})^* \\ -A_{h} & I/\tau
 \end{bmatrix}.\]
  Then the fixed point residual can be written as $R^k_{h} = \|z^{k+1}_{h} -z^k_{h}\|^2_{M_{h}}$, and Chambolle-Pock is equivalent to the proximal point algorithm (PPA) with the $M_{h}$-metric (Theorem 1 in \cite{li2018parallel}). Given $\mu\tau\|A_{h}\|^2<1$, $R^k_h$ is monotone:
 \begin{equation}
     \label{eq:proof_cp_2}
     R^{k+1}_{h} \leq  R^k_{h}, \quad \forall k,
 \end{equation}
 and the global convergence holds in the sense:
  \begin{equation}
     \label{eq:proof_cp_3}
     z^k_{h} \to z^*_{h},\quad \text{for some }z^*_{h} \in Z^*_{h}, \text{ as }k \to \infty.
 \end{equation}
 The global convergence (\ref{eq:proof_cp_3}) means that there exists a  $\bar{K}$, 
 such that $\|z^k_{h}-z^*_{h}\|^2_{L^2} \leq C_2 h^{r}$ holds
for all $k \geq \bar{K}$. Let $\bar{\varepsilon} = R^{\bar{K}}_{h}$ and $K$ be the number of iterations when the stopping condition 
$R^{K}_{h} \leq \varepsilon$ is satisfied. 
 As long as $\varepsilon < \bar{\varepsilon}$, we have \[R^{K}_{h} \leq \varepsilon < \bar{\varepsilon} = R^{\bar{K}}_{h}.\]
Then, by the monotonicity of FPR (\ref{eq:proof_cp_2}), we conclude that $K \geq \bar{K}$. Inequality (\ref{eq:lemma11}) is proved.

\textbf{Step 2:} Next we consider Algorithm \ref{algo:multigrid1}. On the first level $l=1$, we repeat the same argument as above, and we obtain that, there exists a threshold $\bar{\varepsilon}_1$ such that (\ref{eq:lemma12}) holds for $l=1$ as long as $\varepsilon_1 \leq \bar{\varepsilon}_1$. Similarly, for $l \geq 2$, there exists a threshold $\bar{\varepsilon}_l$ such that (\ref{eq:lemma12}) holds for $l\geq 2$ as long as $\varepsilon_l \leq \bar{\varepsilon}_l$. Now let's analyze the dependencies of $\bar{\varepsilon}_l$ and $\{\varepsilon_i\}_{i=1}^{l-1}$.
Suppose $\{\varepsilon_i\}_{i=1}^{l-1}$ are given. By running Algorithm \ref{algo:multigrid1} on level $1,2,\cdots,l-1$, we are able to determine the solution of the $(l-1)$-th level $z^{K_{l-1}}_{h_{l-1}}$. Conducting interpolation on $z^{K_{l-1}}_{h_{l-1}}$, we obtain $z^0_{h_{l}}$. In another word, $z^0_{h_{l}}$ is determined by the threshold sequence $\{\varepsilon_i\}_{i=1}^{l-1}$.
Since $\bar{\varepsilon}_l$ depends on $z^0_{h_{l}}$ and $z^0_{h_{l}}$ depends on $\{\varepsilon_i\}_{i=1}^{l-1}$, we can write $\bar{\varepsilon}_l$ as a function of $\{\varepsilon_i\}_{i=1}^{l-1}$: $\bar{\varepsilon}_{l}\Big(\{\varepsilon_i\}_{i=1}^{l-1}\Big)$. For any $\varepsilon_{l} \leq \bar{\varepsilon}_{l}\Big(\{\varepsilon_i\}_{i=1}^{l-1}\Big)$,  (\ref{eq:lemma12}) holds.
Lemma \ref{lemma:epsilon} is proved.
\end{proof}

We define the set of ``good'' stopping tolerances for Algorithms \ref{algo:primaldual} and \ref{algo:multigrid1}:
\begin{align}
 \mathcal{T}_{\mathrm{1}} = & \bigg \{\varepsilon \bigg | \varepsilon \leq \bar{\varepsilon}\bigg \}\label{eq:epsilon1},\\ \mathcal{T}_{\mathrm{1M}} =  & \bigg  \{\{\varepsilon_l\}_{l=1}^L \bigg  | \varepsilon_1 \leq \bar{\varepsilon}_1, ~\varepsilon_2 \leq \bar{\varepsilon}_2(\varepsilon_1),~\cdots,~\varepsilon_L\leq \bar{\varepsilon}_{L}\Big(\{\varepsilon_l\}_{l=1}^{L-1}\Big)\bigg  \}\label{eq:epsilon1m}.
\end{align}
Based on Lemma \ref{lemma:epsilon}, $\mathcal{T}_{\mathrm{1}}$ and $\mathcal{T}_{\mathrm{1M}}$ are nonempty.
Given these stopping tolerances, the performance of Algorithms \ref{algo:primaldual} and \ref{algo:multigrid1} can be analyzed as follows.

\begin{theorem}
\label{lemma:cp}
 Given $\rho^0,\rho^1,h,L\geq 2$ and $h_l = 2^{L-l}h$ and Assumptions \ref{assume:cp2},\ref{assume:cp}, if we take $0$ as the initialization, then the following holds\footnote{In this article, $O(\cdot)$ denotes the asymptotic rate as $\varepsilon\to 0$ and $h\to 0$.} :
\begin{enumerate}
    \item With $\varepsilon \in \mathcal{T}_{\mathrm{1}}$, Algorithm \ref{algo:primaldual} takes $O(\frac{1}{\varepsilon}\frac{\sqrt{d}}{h})$ iterations to stop.
    \item With $\{\varepsilon_l\}_{l=1}^L \in \mathcal{T}_{\mathrm{1M}}$, Algorithm \ref{algo:multigrid1} takes 
    $O(\frac{1}{\varepsilon_l}\frac{\sqrt{d}}{h_l^{1-r}})$
    iterations on level $l$ ($l \geq 2$). 
\end{enumerate}
\end{theorem}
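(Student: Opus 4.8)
The plan is to bound the stopping time $K$ by combining the known $O(1/k)$ decay rate of the fixed point residual for the proximal point algorithm (PPA) with an estimate of the initial distance $\|z^0_{h}-z^*_{h}\|^2_{M_h}$. First I would invoke the identification (Theorem 1 in \cite{li2018parallel}, already recalled in the proof of Lemma \ref{lemma:epsilon}) that the Chambolle--Pock iteration \eqref{eq:primaldual} is a PPA in the $M_h$-metric with $R^k_h = \|z^{k+1}_h-z^k_h\|^2_{M_h}$. Since this iteration is firmly nonexpansive in the $M_h$-inner product and $R^k_h$ is monotone, the standard PPA residual rate gives
\[
R^K_h \;\leq\; \frac{\|z^0_h - z^*_h\|^2_{M_h}}{K+1},\qquad z^*_h\in Z^*_h .
\]
As $\varepsilon\in\mathcal{T}_1$ guarantees the stopping test $R^K_h\leq\varepsilon$ is eventually triggered, taking $K$ to be the first such index and using the bound at index $K-1$ shows $K = O\big(\|z^0_h-z^*_h\|^2_{M_h}/\varepsilon\big)$.

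Next I would convert the $M_h$-norm to the $L^2$-norm. With $\mu=\tau=1/(2\|A_h\|)$, the symmetric block matrix inside $M_h$ equals $2\|A_h\|I-\left[\begin{smallmatrix}0 & A_h^*\\ A_h & 0\end{smallmatrix}\right]$, whose eigenvalues lie in $[\|A_h\|,\,3\|A_h\|]$; hence $\|z\|^2_{M_h}\leq 3\|A_h\|\,\|z\|^2_{L^2}$. A standard spectral estimate for the discrete divergence/gradient pair gives $\|A_h\| = O(\sqrt{d}/h)$ (the discrete Laplacian $A_hA_h^*$ has spectral radius at most $4d/h^2$), so $\|z^0_h-z^*_h\|^2_{M_h} = O\big((\sqrt{d}/h)\,\|z^0_h-z^*_h\|^2_{L^2}\big)$. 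For Part 1 the zero initialization yields $\|z^0_h-z^*_h\|^2_{L^2} = \|z^*_h\|^2_{L^2}\leq C_1$ by Assumption \ref{assume:cp2}, and combining with the iteration bound gives $K = O\big(\tfrac{1}{\varepsilon}\tfrac{\sqrt{d}}{h}\big)$.

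For Part 2 the only extra work is to show $\|z^0_{h_l}-z^*_{h_l}\|^2_{L^2} = O((h_l)^r)$ for the interpolated start $z^0_{h_l}=\mathrm{Interpolate}(z^{K_{l-1}}_{h_{l-1}})$. By Lemma \ref{lemma:epsilon}(2), level $l-1$ produces $z^{K_{l-1}}_{h_{l-1}}$ with $\|z^{K_{l-1}}_{h_{l-1}}-z^*_{h_{l-1}}\|^2_{L^2}\leq C_2(h_{l-1})^r$ for some $z^*_{h_{l-1}}\in Z^*_{h_{l-1}}$; for that solution, Assumption \ref{assume:cp} furnishes a matching $z^*_{h_l}\in Z^*_{h_l}$. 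I would then split
\[
\|z^0_{h_l}-z^*_{h_l}\|_{L^2}\leq \|\mathrm{Interpolate}(z^{K_{l-1}}_{h_{l-1}}-z^*_{h_{l-1}})\|_{L^2}+\|\mathrm{Interpolate}(z^*_{h_{l-1}})-z^*_{h_l}\|_{L^2},
\]
bounding the first term by $\|z^{K_{l-1}}_{h_{l-1}}-z^*_{h_{l-1}}\|_{L^2}$ via the linearity and the non-expansiveness of interpolation (Lemma \ref{lemma:inter_bdd}, applied componentwise to $z=(m,\phi)$), and the second by Assumption \ref{assume:cp}. Both terms are $O((h_{l-1})^r)=O((h_l)^r)$ since $h_{l-1}=2h_l$. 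Feeding this into the Part-1 iteration bound with $h$ replaced by $h_l$ gives $K_l = O\big(\tfrac{1}{\varepsilon_l}\tfrac{\sqrt{d}}{h_l}\,(h_l)^r\big)=O\big(\tfrac{1}{\varepsilon_l}\tfrac{\sqrt{d}}{h_l^{1-r}}\big)$.

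The main obstacle I anticipate is making the PPA residual rate quantitative in the $M_h$-metric and pinning down the $\sqrt{d}/h$ dependence of $\|A_h\|$, since the entire complexity factor rides on that spectral bound together with the eigenvalue computation of $M_h$; once these are in hand, the error-chaining of Part 2 is routine given Lemmas \ref{lemma:inter_bdd}, \ref{lemma:epsilon} and Assumption \ref{assume:cp}.
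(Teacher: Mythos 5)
Your proposal is correct and follows essentially the same route as the paper: the PPA residual rate $R^k_h\leq \frac{1}{k}\|z^0_h-z^*_h\|^2_{M_h}$, the conversion $\|z\|^2_{M_h}\leq 3\|A_h\|\,\|z\|^2_{L^2}$ with $\|A_h\|=O(\sqrt{d}/h)$, zero initialization plus Assumption 1 for Part 1, and the interpolation-error/solution-gap splitting via Lemmas \ref{lemma:inter_bdd} and \ref{lemma:epsilon} and Assumption \ref{assume:cp} for Part 2. The only cosmetic difference is that you obtain the factor $3\|A_h\|$ by an eigenvalue bound on the block matrix $2\|A_h\|I-\left[\begin{smallmatrix}0 & A_h^*\\ A_h & 0\end{smallmatrix}\right]$ whereas the paper uses the Peter--Paul inequality; both yield the same constant.
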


This theorem shows why Algorithm \ref{algo:multigrid1} helps speed up Algorithm \ref{algo:primaldual}. As long as the optimal solution on the coarse level is close to one of the optimal solutions on the finer level, the multilevel technique is able to reduce the number of iterations on the finer level.
If the distance between the coarse solution and  the fine solution is controlled by $O(h^{r})$ (Assumption \ref{assume:cp}), the number of iterations can be reduced by $h^{r}$. 
Specifically, with $\varepsilon = \varepsilon_L$, Algorithm \ref{algo:primaldual} takes $O(\frac{1}{\varepsilon}\frac{\sqrt{d}}{h})$ iterations while Algorithm \ref{algo:multigrid1} takes $O(\frac{1}{\varepsilon}\frac{\sqrt{d}}{h^{1-r}})$ on the finest level because $h=h_L$. Although Algorithm \ref{algo:multigrid1} leads to extra calculations on coarser grids, the advantage of our multilevel algorithm is able to overcome the extra costs. 
Table \ref{tab:pd_comparel} validates this point. Moreover, the choice of $\varepsilon = \varepsilon_L$ is reasonable, even over-fair, in practice. Table \ref{tab:1m} shows that Algorithm \ref{algo:multigrid1} can obtain better solution than Algorithm \ref{algo:primaldual} with $\varepsilon_L > \varepsilon$.

\begin{proof} \textbf{Step 1:} Analyzing how many iterations Algorithm \ref{algo:primaldual} takes. 
  As we discussed in the proof of Lemma \ref{lemma:epsilon}, Algorithm \ref{algo:primaldual} is equivalent to PPA with the $M_h$-metric. By \cite{He2012ConvergenceAO}, we have:
  \begin{equation}
      \label{eq:proof_cp_1}
      R^k_h \leq \frac{1}{k}\|z^0_h - z^*_h\|^2_{M_{h}},\quad \forall z^*_h \in Z^*_h.
  \end{equation}
  The definition of $M_h$ gives us
 \[
 \|z^0_h - z^*_h\|^2_{M_{h}} = h^d \Big( \frac{1}{\mu } \|m^0_h - m^*_h \|^2_2  + \frac{1}{\tau }\|\phi^0_h - \phi^*_h\|^2_2  - 2\langle \phi^{0}_h-\phi^*_h, A_h(m^{0}_h-m^*_h) \rangle \Big),
  \]  
 where the last term of the right hand size can be bounded by applying the Peter–Paul inequality $ 2 \langle x,y\rangle  \leq \|x\|^2_2 / \epsilon + \epsilon \|y\|^2_2 $ with $x = -(\phi^{0}_h-\phi^*_h)$, $y = A_h(m^{0}_h-m^*_h)$ and $\epsilon = 1/\|A_h\|$: 
 \[ - 2\langle \phi^{0}_h-\phi^*_h, A_h(m^{0}_h-m^*_h) \rangle \leq \|A_h\|\cdot \|\phi^{0}_h-\phi^*_h\|^2_2 + \|A_h\|\cdot \|m^{0}_h-m^*_h\|^2_2.\]
 Given the parameter choice $\mu=\tau=1/(2\|A_h\|)$, we obtain the following bound
  \[
 \begin{aligned}
 \|z^0_h - z^*_h\|^2_{M_{h}} = & 2h^d \bigg( \|A_h\| \Big (\|m^0_h - m^*_h \|^2_2  + \|\phi^0_h - \phi^*_h\|^2_2 \Big )  -  \langle \phi^{0}_h-\phi^*_h, A_h(m^{0}_h-m^*_h) \rangle \bigg)\\
 \leq & 3 \|A_h\| h^d \Big (\|m^0_h - m^*_h \|^2_2  + \|\phi^0_h - \phi^*_h\|^2_2 \Big )\\
 \leq& 3 \|A_h\| \|z^0_h - z^*_h\|^2_{L^2}.
 \end{aligned}
  \]  
 The norm $\|A_h\|$ is the square root of the largest sigular-value of $(A_h)^*A_h$, which is the discrete Laplacian with grid step size $h$. By the  Gershgorin circle theorem~\cite{golub2012matrix}, $\sigma_{\text{max}}\big((A_h)^*A_h\big) \leq 4 \frac{d}{h^2}$ and, thus, $\|A_h\| \leq 2 \frac{\sqrt{d}}{h}$, which imples 
 \begin{equation}
 \label{eq:proof_cp_4}
     \|z^0_h - z^*_h\|^2_{M_{h}} \leq 6 \frac{\sqrt{d}}{h} \|z^0_h - z^*_h\|^2_{L^2}.
 \end{equation}
 Since we take zero as the initialization $z^0_h=0$, based on Assumption \ref{assume:cp2}, we have
 \begin{equation}
     \label{eq:proof_cp_5}
     \|z^0_h - z^*_h\|^2_{L^2} = \|z^*_h\|^2_{L^2} \leq C_1.
 \end{equation}
 Inequalities (\ref{eq:proof_cp_1}), (\ref{eq:proof_cp_4}) and (\ref{eq:proof_cp_5})
 imply
 \begin{equation}
     \label{eq:algo1_proof_main}
     R^k_h \leq \frac{1}{k}\|z^0_h - z^*_h\|^2_{M_{h}} \leq 6\sqrt{d} \frac{1}{k h}\|z^0_h - z^*_h\|^2_{L^2} \leq 6\sqrt{d} \frac{1}{k h} C_1.
 \end{equation}
 As long as $k > 6C_1 \frac{1}{\varepsilon}\frac{\sqrt{d}}{h}$, the stopping condition $R^k_h < \varepsilon$ is satisfied. Algorithm \ref{algo:primaldual} stops within $(6C_1 \frac{1}{\varepsilon}\frac{\sqrt{d}}{h}) \approx O(\frac{1}{\varepsilon}\frac{\sqrt{d}}{h})$ iterations, i.e., $K = O(\frac{1}{\varepsilon}\frac{\sqrt{d}}{h})$.
 
 \textbf{Step 2:} Analyzing how many iterations Algorithm \ref{algo:multigrid1} takes on level $l (l \geq 2)$.

Similar to (\ref{eq:proof_cp_1}) and  (\ref{eq:proof_cp_4}), on level $l$, it holds that 
\[R^k_{h_l} \leq \frac{1}{k}\|z^0_{h_l} - z^*_{h_l}\|^2_{M_{h_l}} \leq  \frac{6\sqrt{d}}{k h_l}\|z^0_{h_l} - z^*_{h_l}\|^2_{L^2}.\]
The distance in the right hand side can be bounded by
 \[
 \begin{aligned}
& \|z^0_{h_l} - z^*_{h_l}\|^2_{L^2} = \|\text{Interpolate }(z^{K_{l-1}}_{h_{l-1}}) - z^*_{h_l}\|^2_{L^2}\\ 
  \leq& \underbrace{2 \Big \|\text{Interpolate}(z^{K_{l-1}}_{h_{l-1}}) - \text{Interpolate}(z^*_{h_{l-1}})\Big \|^2_{L^2}}_{\mathrm{T1}}  + \underbrace{2 \Big \|\text{Interpolate}(z^*_{h_{l-1}}) - z^*_{h_l} \Big \|^2_{L^2}}_{\mathrm{T2}},
 \end{aligned}
 \]
 where the first term can be bounded by
 \[\mathrm{T1} = 2 \|\text{Interpolate }(z^{K_{l-1}}_{h_{l-1}} - z^*_{h_{l-1}})\|^2_{L^2} \leq 2 \|z^{K_{l-1}}_{h_{l-1}} - z^*_{h_{l-1}}\|^2_{L^2} \]because the ``interpolate" operator is linear and nonexpansive by 
Lemma \ref{lemma:inter_bdd}. 
Furthermore, due to Lemma \ref{lemma:epsilon} and the assumption that $\{\varepsilon_l\}_{l=1}^L \in \mathcal{T}_{\mathrm{1M}}$, we have 
\[\mathrm{T1} \leq 2 \|z^{K_{l-1}}_{h_{l-1}} - z^*_{h_{l-1}}\|^2_{L^2} \leq 2 C_2 (h_{l-1})^{r}.\]
The second term T2 can be bounded with Assumption \ref{assume:cp}:
$\mathrm{T2} \leq 2 C_2 (h_{l-1})^{r}$. Thus, for Algorithm \ref{algo:multigrid1}, we can obtain a better bound than (\ref{eq:proof_cp_5}):
\[\|z^0_{h_l} - z^*_{h_l}\|^2_{L^2} \leq \mathrm{T}_1 + \mathrm{T}_2 \leq 4 C_2 (h_{l-1})^{r} = 4 C_2 (2 h_{l})^{r} . \]
All the results above implies
\begin{equation}
    \label{eq:algo1m_proof_main}
    R^k_{h_l} \leq \frac{6\sqrt{d}}{k h_l} \|z^0_{h_l} - z^*_{h_l}\|^2_{L^2} \leq \frac{6\sqrt{d}}{k h_l} 4 C_2 (2 h_{l})^{r} = 24 \cdot 2^{r}C_2 \frac{ \sqrt{d}}{h_l^{1-r}}\frac{1}{k}.
\end{equation}
As long as $k > 24 \cdot 2^{r}C_2 \frac{ \sqrt{d}}{h_l^{1-r}}\frac{1}{\varepsilon_l}$, we have $R^k_{h_l} \leq \varepsilon_l$, i.e., the stopping condition on level $l$ is satisfied. Thus  the number of iterations on level $l$ can be bounded by $K_l \leq 24 \cdot 2^{r}C_2 \frac{ \sqrt{d}}{h_l^{1-r}}\frac{1}{\varepsilon_l} = O(\frac{1}{\varepsilon_l}\frac{\sqrt{d}}{h_l^{1-r}})$.
 Theorem \ref{theo:cp} is proved.
\end{proof}

Theorem \ref{lemma:cp} shows that the stopping tolerances in the sets $\mathcal{T}_{\mathrm{1}}$ and $\mathcal{T}_{\mathrm{1M}}$ are good choices. However, the conditions (\ref{eq:epsilon1}) and (\ref{eq:epsilon1m}) are intractable to check during we run the algorithms. Here we propose an explicit empirical formula for choosing $\{\varepsilon_l\}_{l=1}^L$:
\begin{equation}
    \label{eq:tol_formula_propose}
    \varepsilon_l = \varepsilon_L \times \Big( \frac{h_l}{h} \Big)^{\alpha },\quad \alpha \in \Re,
\end{equation}
where $\varepsilon_L$ is the stopping tolerance for the highest level $l=L$. With (\ref{eq:tol_formula_propose}), we provide the complexity analysis of Algorithms \ref{algo:primaldual} and \ref{algo:multigrid1} below.

\begin{theorem}
\label{theo:cp}
 Given $\rho^0,\rho^1,h,h_l = 2^{L-l}h$ and Assumptions \ref{assume:cp2},\ref{assume:cp}, if we take $0$ as the initialization, the following holds:
\begin{enumerate}
    \item  With $\varepsilon \in \mathcal{T}_{\mathrm{1}}$, the complexity of Algorithm \ref{algo:primaldual} is
    \begin{equation}
        \label{eq:thm_43_1}
        \mathcal{O}_{\mathrm{1}} = O\Big(\frac{1}{\varepsilon}\frac{d^{3/2}}{h^{d+1}}\Big).
    \end{equation}
    \item  With $\{\varepsilon_l\}_{l=1}^L$ chosen by (\ref{eq:tol_formula_propose}) and $\{\varepsilon_l\}_{l=1}^L \in \mathcal{T}_{\mathrm{1M}}$ and $L = 1 + \log_2(1/h)$, 
    the complexity of Algorithm \ref{algo:multigrid1} is 
    \begin{equation}
        \label{eq:thm_43_2}
        \mathcal{O}_{\mathrm{1M}} = 
\begin{cases}
O\Big(\frac{1}{\varepsilon_L} \frac{d^{3/2}}{h^{d+1-r}}\Big) + O\Big(\frac{d 2^d}{h^d}\Big), & \text{ if } \alpha > r - d - 1,\\
O\Big(\frac{1}{\varepsilon_L}\frac{d^{3/2}}{h^{d+1-r}}\log(\frac{1}{h})\Big) + O\Big(\frac{d 2^d}{h^d}\Big),&\text{ if } \alpha = r - d - 1,\\
O\Big(\frac{1}{\varepsilon_L}\frac{d^{3/2}}{h^{-\alpha}}\Big) + O\Big(\frac{d 2^d}{h^d}\Big), &\text{ if } \alpha < r - d - 1.
\end{cases}
    \end{equation}
\end{enumerate}
\end{theorem}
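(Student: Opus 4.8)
The plan is to write the total cost of each algorithm as a sum over grid levels of (number of iterations on that level) $\times$ (cost of one iteration on that level), and then to reduce the level sum to a geometric series. First I would pin down the \emph{per-iteration cost}. On a grid $\Omega^{h_l}$ of step size $h_l$ there are $O(h_l^{-d})$ nodes, so $z_{h_l}=(m_{h_l},\phi_{h_l})$ has $O(d\,h_l^{-d})$ scalar components; one step of (\ref{eq:primaldual}) merely applies the sparse nearest-neighbor operators $A_{h_l}$ and $A_{h_l}^*$ (each a sum of $d$ finite differences per node) plus the closed-form proximal map of $f$, so a single iteration costs $O(d\,h_l^{-d})$.

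For Part 1 I would multiply the iteration count $O(\tfrac{1}{\varepsilon}\tfrac{\sqrt d}{h})$ furnished by Theorem \ref{lemma:cp} by the per-iteration cost $O(d/h^d)$; this immediately gives $O(\tfrac{1}{\varepsilon}\tfrac{d^{3/2}}{h^{d+1}})$, which is (\ref{eq:thm_43_1}).

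For Part 2, on each level $l\ge 2$ I would combine the bound $O(\tfrac{1}{\varepsilon_l}\tfrac{\sqrt d}{h_l^{1-r}})$ of Theorem \ref{lemma:cp} with the per-iteration cost $O(d/h_l^d)$ to get a level cost $O(\tfrac{1}{\varepsilon_l}\tfrac{d^{3/2}}{h_l^{d+1-r}})$. Substituting $h_l=2^{L-l}h$ and, from (\ref{eq:tol_formula_propose}), $\varepsilon_l=\varepsilon_L\,2^{(L-l)\alpha}$, the level-$l$ cost becomes $\tfrac{1}{\varepsilon_L}\tfrac{d^{3/2}}{h^{d+1-r}}\,2^{-(L-l)\beta}$ with $\beta:=\alpha+d+1-r$. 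Summing over $l=2,\dots,L$ reduces to $\sum_{j=0}^{L-2}2^{-j\beta}$, and the three cases of (\ref{eq:thm_43_2}) correspond exactly to the sign of $\beta$: when $\beta>0$ this is a convergent geometric series bounded by $(1-2^{-\beta})^{-1}=O(1)$; when $\beta=0$ it is a count of $L-1=O(\log\tfrac{1}{h})$ equal terms; and when $\beta<0$ it is dominated by its largest term $2^{-(L-2)\beta}=O(h^{\beta})$, where I use $2^L=2/h$ coming from $L=1+\log_2(1/h)$. In the last case $h^{\beta}/h^{d+1-r}=h^{\alpha}$, giving the advertised $O(\tfrac{1}{\varepsilon_L}\tfrac{d^{3/2}}{h^{-\alpha}})$.

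Finally I would collect the two remaining contributions. Level $l=1$ has $h_1=2^{L-1}h=1$, i.e. the trivial grid $\{0,1\}^d$, so its cost is $O(1)$ in $h$ and is absorbed. The additive $O(\tfrac{d2^d}{h^d})$ term comes from the cross-level interpolations: each node of level $l$ averages over a neighborhood $\mathcal{N}_l(x)$ (or $\hat{\mathcal{N}}_{l,i}(x)$) of size at most $2^d$, so interpolating all $d$ flux components to level $l$ costs $O(d2^d h_l^{-d})$, and $\sum_{l}2^{-(L-l)d}$ is a convergent geometric series dominated by the finest level, producing $O(d2^d/h^d)$. I expect the main obstacle to be the bookkeeping in the $\beta<0$ regime — recognizing that the coarsest refined level dominates the sum and translating $2^{-L\beta}$ back into the power $h^{\alpha}=h^{-(-\alpha)}$ — along with verifying that the per-iteration and per-node interpolation costs are uniform across levels so that the geometric summations are legitimate.
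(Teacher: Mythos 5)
Your overall strategy --- per-iteration cost $O(d/h_l^d)$ times the iteration bounds of Theorem \ref{lemma:cp}, then a geometric series over the levels governed by the sign of $\beta=\alpha+d+1-r$, plus a separate geometric series for the interpolation overhead --- is exactly the paper's proof, and your case analysis for the levels $l\ge 2$ and for the interpolation term is correct.

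There is, however, one step that fails as written: the claim that level $l=1$ ``has cost $O(1)$ in $h$ and is absorbed.'' The grid $\Omega^{h_1}$ is indeed trivial, so the \emph{per-iteration} cost on level $1$ is $O(d2^d)$, independent of $h$; but the \emph{number of iterations} there is $O(\sqrt{d}/(\varepsilon_1 h_1))=O(\sqrt{d}/\varepsilon_1)$ by Theorem \ref{lemma:cp}, and under (\ref{eq:tol_formula_propose}) one has $\varepsilon_1=\varepsilon_L(h_1/h)^{\alpha}=\varepsilon_L h^{-\alpha}$, which tends to $0$ with $h$ whenever $\alpha<0$ --- precisely the regime ($\alpha=-1$) that the paper recommends and tests. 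So the level-$1$ cost is $O(d^{3/2}2^d h^{\alpha}/\varepsilon_L)$, which is unbounded as $h\to0$ for $\alpha<0$, not $O(1)$. The conclusion that it is absorbed still holds, but for a different reason: this quantity is exactly the $j=L-1$ term of your geometric series $\sum_j 2^{-j\beta}$ (this is how the paper folds level $1$ into the sum, using $h_1^{-(d+1)}\varepsilon_1^{-1}=h^{\alpha}\varepsilon_L^{-1}=h^{d+1-r}\cdot h^{\beta-(d+1-r)}\varepsilon_L^{-1}$), or equivalently one checks directly that $h^{\alpha}/\varepsilon_L$ is dominated by $h^{r-d-1}/\varepsilon_L$ when $\alpha\ge r-d-1$ and coincides with the stated bound $O\big(\tfrac{1}{\varepsilon_L}\tfrac{d^{3/2}}{h^{-\alpha}}\big)$ when $\alpha<r-d-1$. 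You should replace the ``$O(1)$'' justification with this argument. A second, minor omission: for $p=\infty$ the proximal map of $f$ is not closed-form; the paper notes that by Moreau decomposition it reduces to a projection onto an $\ell_1$ ball at cost $O(d)$ per grid point, so the per-iteration cost remains $O(d/h^d)$ and the asymptotics in (\ref{eq:thm_43_1}) and (\ref{eq:thm_43_2}) are unchanged.
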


Theorem \ref{theo:cp} shows that, if $\{\varepsilon_l\}_{l=1}^L \in \mathcal{T}_{\mathrm{1M}}$, we should choose a large $\alpha$ to enjoy better complexity. A small $\alpha$ leads to small tolerance on lower levels $l < L$ and extra computational burdens on those levels. This is why $\mathcal{O}_{\mathrm{1M}} $ has a worse bound $h^{\alpha}$ as $\alpha < r - d - 1$. However, to guarantee $\{\varepsilon_l\}_{l=1}^L \in \mathcal{T}_{\mathrm{1M}}$ in practice, we have to choose smaller $\varepsilon_L$ if we choose larger $\alpha$. Then it will cost more calculations on the highest level, which hurts the advantage of multilevel methods.
Thus, we should choose a proper $\alpha$ and make balance between the computations on lower and higher levels. In Table \ref{tab:1m}, we will numerically test this point.

\begin{proof} 
 First, we consider the case where $p=1$ or $p=2$.
 
 For Algorithm \ref{algo:primaldual}, 
 the complexity is 
 ``iterations $\times$ single step complexity.'' In each step of Algorithm \ref{algo:primaldual}, the dominant calculation is computing $A_h m_h$ or $(A_h)^*\phi_h$~\cite{li2018parallel}, which has a complexity of $O(\frac{d}{h^d})$. Thus, 
 the total complexity of Algorithm \ref{algo:primaldual} is:
 \[\mathcal{O}_{\mathrm{1}} = O\Big(\frac{1}{\varepsilon}\frac{\sqrt{d}}{h}\Big) \times O\Big(\frac{d}{h^d}\Big) = O\Big(\frac{1}{\varepsilon}\frac{d^{3/2}}{h^{d+1}}\Big)
 .\]
 
 For Algorithm \ref{algo:multigrid1}, the complexity  is  the sum of two parts: iterations on all levels $\mathcal{O}_{\mathrm{1M,1}}$ and the interpolations between the levels $\mathcal{O}_{\mathrm{1M,2}}$. Let us first consider the former part. Similar to Algorithm \ref{algo:primaldual}, the complexity of level 1 is 
 $O(\frac{1}{\varepsilon_1}\frac{d^{3/2}}{h_1^{d+1}})$.
 The complexity of Level $l (2 \leq l \leq L)$ is 
 $O(\frac{1}{\varepsilon_l}\frac{d^{3/2}}{h_l^{d+1-r}})$. 
  Given $h_l = 2^{L-l}h$ and $L = 1 + \log_2(1/h)$, it holds that $h_l = 2^{-(l-1)}$. Due to (\ref{eq:tol_formula_propose}), we are able to obtain:
 \[
 \begin{aligned}
 \mathcal{O}_{\mathrm{1M,1}} = 
 & \sum_{l=2}^L O\bigg(\frac{1}{\varepsilon_l}\frac{d^{3/2}}{h_l^{d+1-r}}\bigg) + O\bigg(\frac{1}{\varepsilon_1}\frac{d^{3/2}}{h_1^{d+1}}\bigg) = O\bigg(\frac{h^\alpha }{\varepsilon_L}\bigg) \bigg( \sum_{l=2}^L \frac{d^{3/2}}{h_l^{d+1+\alpha -r}} + \frac{d^{3/2}}{h_1^{d+1+\alpha}} \bigg)\\
 = & O\bigg( \frac{1}{\varepsilon_L} \frac{d^{3/2}}{h^{d+1-r}}\bigg) \Big(\sum_{i=0}^{L-2}2^{-i(d+1+\alpha -r)} +  2^{-(d+1+\alpha )(L-1)} h^{-r} \Big)\\
  = & O\bigg( \frac{1}{\varepsilon_L} \frac{d^{3/2}}{h^{d+1-r}}\bigg) \Big(\sum_{i=0}^{L-1}2^{-i(d+1+\alpha -r)} \Big).
 \end{aligned}
 \]
 We consider three cases:
 \begin{itemize}
     \item If $\alpha > r - d -1$, we have $\sum_{i=0}^{L-1}2^{-i(d+1+\alpha-r)} < \infty$ is a constant.
     \item If $\alpha = r - d -1$, we have $\sum_{i=0}^{L-1}2^{-i(d+1+\alpha-r)} = L$. 
     \item If $\alpha < r - d -1$, we have $\sum_{i=0}^{L-1}2^{-i(d+1+\alpha-r)} = O(2^{L(r-d-1-\alpha)})$. 
 \end{itemize}
 Then $\mathcal{O}_{\mathrm{1M,1}}$ is obtained.
 
 
 Now let us consider the second part, the complexity of interpolations between the levels $l$ and $l+1$. Each node on level $l+1$ is obtained by no more than $2^d$ nodes, totally we have $O(d/h_{l+1}^d)$ nodes, so the  complexity of interpolation between the levels $l$ and $l+1$ is $O(d2^d/h_{l+1}^d)$. Summing them up, \[\mathcal{O}_{\mathrm{1M,2}} = O\bigg(\frac{d 2^d}{h^d}\bigg) \bigg(1 + \frac{1}{2^d} + \frac{1}{2^{2d}} + \cdots + \frac{1}{2^{(L-1)d}}\bigg) = O\bigg(\frac{d 2^d}{h^d}\bigg).\]
 
 For $p=\infty$, the dominant calculation in a single step includes two parts. One is computing $A_h m_h$ or $(A_h)^*\phi_h$, which is the same with the case of $p=1,2$. The other is calculating the $\ell_{\infty}$ shrinkage operator. By the Moreau decomposition~\cite{parikh2014proximal}, computing an $\ell_{\infty}$ shrinkage operator is equivalent with computing a projection onto an $\ell_1$ ball.  By~\cite{duchi2008efficient}, the complexity of the latter is $O(d)$. We need to project all the points $x \in \Omega^h$. In total, there are $O(N^d)=O(1/h^d)$ points, so the single step complexity is $O(\frac{d}{h^d})$. Following the above argument, we obtain the complexities of Algorithm \ref{algo:primaldual} and Algorithm \ref{algo:multigrid1} as $p=\infty$ have the same asymptotic rate as $p=1,2$.
\end{proof}

 \subsection{Analysis of Algorithms \ref{algo:pdhg} and \ref{algo:multigrid2}} 

Similar to Lemma \ref{lemma:epsilon}, we establish the conditions of stopping tolerances for Algorithms \ref{algo:pdhg} and \ref{algo:multigrid2}.

\begin{lemma}
\label{lemma:epsilon_pdhg}
Given $\rho^0,\rho^1,h$, we assume $h_l = 2^{L-l}h$. 
\begin{enumerate}
    \item There exists $\tilde{\varepsilon}$, as long as we run Algorithm \ref{algo:pdhg} with $\varepsilon \leq \tilde{\varepsilon}$, we have 
    \begin{equation}
    \label{eq:lemma41}
        \|y^K_{h}-y^*_{h}\|^2_{L^2} \leq C_2 h^{\nu},\quad \text{for some }y^*_{h} \in Y^*_{h}.
    \end{equation}
    \item Suppose we run Algorithm \ref{algo:multigrid2} with arbitrary tolerances $\{\varepsilon_i\}_{i=1}^{l-1}$ on the lowest $l-1$ levels. There exists a threshold $\tilde{\varepsilon}_{l}$ depending on $\{\varepsilon_i\}_{i=1}^{l-1}$, as long as we set $\varepsilon_{l} \leq \tilde{\varepsilon}_{l}\Big(\{\varepsilon_i\}_{i=1}^{l-1}\Big)$ on level $l$, we have
    \begin{equation}
    \label{eq:lemma42}
        \|y^{K_l}_{h_{l}}-y^*_{h_{l}}\|^2_{L^2} \leq C_2 (h_{l})^{\nu},\quad \text{for some }y^*_{h_{l}} \in Y^*_{h_{l}}.
    \end{equation}
\end{enumerate}
\end{lemma}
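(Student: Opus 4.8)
The plan is to mirror the two-step structure of the proof of Lemma~\ref{lemma:epsilon}, replacing the Chambolle--Pock iterate and its $M_h$-metric with the corresponding objects for Algorithm~\ref{algo:pdhg}. The crucial preliminary observation is that the scheme~(\ref{eq:pdhg}), like~(\ref{eq:primaldual}), is an instance of the proximal point algorithm (PPA) in an appropriate metric. In the variable $y_h=(m_h,\varphi_h)$ I would set
\[
\tilde{M}_h = h^d \begin{bmatrix} I/\mu & I \\ I & I/\tau \end{bmatrix},
\]
and verify that the fixed point residual~(\ref{eq:pdhg_stop}) is exactly $G^k_h = \|y^{k+1}_h - y^k_h\|^2_{\tilde{M}_h}$. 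The convergence condition $\mu\tau<1$ quoted for Algorithm~\ref{algo:pdhg} is precisely what renders $\tilde{M}_h$ positive definite (its Schur complement equals $(1/(\mu\tau)-1)\,I \succ 0$), so~(\ref{eq:pdhg}) is PPA in the $\tilde{M}_h$-metric and the standard PPA machinery applies.

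For Step~1 (Algorithm~\ref{algo:pdhg}) I would argue exactly as in the proof of~(\ref{eq:lemma11}). The PPA interpretation makes $G^k_h$ monotonically nonincreasing, $G^{k+1}_h \le G^k_h$, and forces $y^k_h \to y^*_h$ for some $y^*_h \in Y^*_h$. Since $C_2 h^\nu$ is a fixed positive threshold, convergence guarantees an index $\tilde{K}$ with $\|y^k_h - y^*_h\|^2_{L^2} \le C_2 h^\nu$ for all $k \ge \tilde{K}$. Putting $\tilde{\varepsilon} = G^{\tilde{K}}_h$ and letting $K$ be the first index at which $G^K_h \le \varepsilon$, the monotonicity of $G^k_h$ forces $K \ge \tilde{K}$ whenever $\varepsilon \le \tilde{\varepsilon}$, which is~(\ref{eq:lemma41}).

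For Step~2 (Algorithm~\ref{algo:multigrid2}) I would reproduce the inductive dependency argument of Step~2 of Lemma~\ref{lemma:epsilon} verbatim. On level $l=1$ the Step~1 reasoning yields a threshold $\tilde{\varepsilon}_1$. For $l \ge 2$, running Algorithm~\ref{algo:multigrid2} on levels $1,\dots,l-1$ with the tolerances $\{\varepsilon_i\}_{i=1}^{l-1}$ determines $y^{K_{l-1}}_{h_{l-1}}$ and hence the initialization $y^0_{h_l}=\mathrm{Interpolate}(y^{K_{l-1}}_{h_{l-1}})$; applying Step~1 at level $l$ then produces a threshold that depends on $y^0_{h_l}$ and may therefore be written as a function $\tilde{\varepsilon}_l(\{\varepsilon_i\}_{i=1}^{l-1})$, so that any $\varepsilon_l \le \tilde{\varepsilon}_l$ yields~(\ref{eq:lemma42}).

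The main obstacle is the preliminary observation of the first paragraph: establishing that Algorithm~\ref{algo:pdhg} applied to the saddle problem~(\ref{eq:minmax_pdhg}) is genuinely PPA in the $\tilde{M}_h$-metric and that~(\ref{eq:pdhg_stop}) coincides with $\|y^{k+1}_h - y^k_h\|^2_{\tilde{M}_h}$. This is the analog, for the Jacobs et~al.\ formulation, of Theorem~1 of~\cite{li2018parallel} that was invoked in Lemma~\ref{lemma:epsilon}; the difference is that the coupling in $\tilde{L}$ is the plain bilinear form $\langle \varphi_h, m_h\rangle_h$ rather than $\langle \phi_h, A_h m_h\rangle_h$, so the off-diagonal block of $\tilde{M}_h$ is $+I$ (hence the $+$ sign in~(\ref{eq:pdhg_stop})) and the positive-definiteness threshold is $\mu\tau<1$ rather than $\mu\tau\|A_h\|^2<1$. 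Once this equivalence and the monotonicity of $G^k_h$ are in hand, the remaining steps are identical to those of Lemma~\ref{lemma:epsilon} and introduce no new estimates.
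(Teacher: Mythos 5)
Your proposal coincides with the paper's own proof, which likewise defines the metric $\tilde{M}$ with $+I$ off-diagonal blocks, observes that Algorithm~\ref{algo:pdhg} is PPA in that metric so that $G^k_h$ is the squared $\tilde M$-norm of the iterate difference, and then repeats the argument of Lemma~\ref{lemma:epsilon} verbatim; you are simply supplying the monotonicity/convergence details and the inductive dependency of $\tilde\varepsilon_l$ on $\{\varepsilon_i\}_{i=1}^{l-1}$ that the paper leaves implicit. The only quibble is arithmetic: the Schur complement of your $\tilde M_h$ is $(1/\tau-\mu)I$ rather than $(1/(\mu\tau)-1)I$, but both are positive definite exactly when $\mu\tau<1$, so the argument is unaffected.
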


\begin{proof}
Define \[\tilde{M} = \begin{bmatrix} I/\mu & I \\I & I/\tau  \end{bmatrix}.\]Similar to Algorithm \ref{algo:primaldual}, Algorithm \ref{algo:pdhg} is equivalent to PPA with the $\tilde{M}$-metric. Then we follow the same proof as that of Lemma \ref{lemma:epsilon} and obtain Lemma \ref{lemma:epsilon_pdhg}.
\end{proof}

Define the set of good stopping tolerances for Algorithms \ref{algo:pdhg} and \ref{algo:multigrid2}:
\begin{align}
 \mathcal{T}_{\mathrm{2}} = & \bigg \{\varepsilon \bigg | \varepsilon \leq \tilde{\varepsilon}\bigg \}\label{eq:epsilon2},\\ \mathcal{T}_{\mathrm{2M}} =  & \bigg  \{\{\varepsilon_l\}_{l=1}^L \bigg  | \varepsilon_1 \leq \tilde{\varepsilon}_1, ~\varepsilon_2 \leq \tilde{\varepsilon}_2(\varepsilon_1),~\cdots,~\varepsilon_L\leq \tilde{\varepsilon}_{L}\Big(\{\varepsilon_l\}_{l=1}^{L-1}\Big)\bigg  \}\label{eq:epsilon2m}.
\end{align}

\begin{theorem}
\label{lemma:pdhg}
 Given $\rho^0,\rho^1,h,L\geq 2$ and $h_l = 2^{L-l}h$ and Assumptions \ref{assume:pdhg2},\ref{assume:pdhg}, if we take $0$ as the initialization, then the following holds:
\begin{enumerate}
    \item  With $\varepsilon \in \mathcal{T}_{\mathrm{2}}$, Algorithm \ref{algo:pdhg} takes $O(\frac{1}{\varepsilon})$ iterations to stop. 
    \item With $\{\varepsilon_l\}_{l=1}^L \in \mathcal{T}_{\mathrm{2M}}$, Algorithm \ref{algo:multigrid2} takes
    $O(\frac{1}{\varepsilon_l}{h_l}^{\nu})$
    iterations on level $l$ ($l \geq 2$). 
\end{enumerate}
\end{theorem}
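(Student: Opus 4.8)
The plan is to follow the two–step structure of the proof of Theorem~\ref{lemma:cp} almost verbatim, substituting the metric $\tilde{M}$ from the proof of Lemma~\ref{lemma:epsilon_pdhg} for $M_h$, and Assumptions~\ref{assume:pdhg2} and~\ref{assume:pdhg} for Assumptions~\ref{assume:cp2} and~\ref{assume:cp}. The one decisive difference is that $\tilde{M}$ carries identity blocks off the diagonal instead of $\pm A_h$, so the step that previously produced a factor $\|A_h\|\le 2\sqrt{d}/h$ now produces only an absolute constant. This is precisely what removes the $\sqrt{d}/h$ from the iteration count.

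For the first claim I would recall that Algorithm~\ref{algo:pdhg} is PPA in the $\tilde{M}$-metric (established in the proof of Lemma~\ref{lemma:epsilon_pdhg}), so its fixed point residual is $\mathrm{G}^k_h=\|y^{k+1}_h-y^k_h\|^2_{\tilde{M}}$ and the sublinear rate of~\cite{He2012ConvergenceAO} yields
\[
\mathrm{G}^k_h\le \frac{1}{k}\,\|y^0_h-y^*_h\|^2_{\tilde{M}},\qquad \forall\, y^*_h\in Y^*_h.
\]
With the chosen parameters $\mu=\tau=1/2$, the matrix $\tilde{M}$ is positive definite with largest eigenvalue $3$, so $\|y^0_h-y^*_h\|^2_{\tilde{M}}\le 3\|y^0_h-y^*_h\|^2_{L^2}$, a bound whose constant is independent of both $h$ and $d$. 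Using $y^0_h=0$ together with Assumption~\ref{assume:pdhg2} gives $\|y^0_h-y^*_h\|^2_{L^2}=\|y^*_h\|^2_{L^2}\le C_3$, hence $\mathrm{G}^k_h\le 3C_3/k$. The stopping test $\mathrm{G}^k_h<\varepsilon$ therefore fires once $k>3C_3/\varepsilon$, which gives $K=O(1/\varepsilon)$.

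For the second claim, on level $l\ge 2$ the same two inequalities give $\mathrm{G}^k_{h_l}\le \tfrac{3}{k}\|y^0_{h_l}-y^*_{h_l}\|^2_{L^2}$, and I would bound the initialization error exactly as in Theorem~\ref{lemma:cp} by splitting
\[
\|y^0_{h_l}-y^*_{h_l}\|^2_{L^2}\le 2\big\|\mathrm{Interpolate}(y^{K_{l-1}}_{h_{l-1}}-y^*_{h_{l-1}})\big\|^2_{L^2}+2\big\|\mathrm{Interpolate}(y^*_{h_{l-1}})-y^*_{h_l}\big\|^2_{L^2}.
\]
The first term is controlled by the nonexpansiveness of the interpolation operator (Lemma~\ref{lemma:inter_bdd}) followed by Lemma~\ref{lemma:epsilon_pdhg}, which applies since $\{\varepsilon_l\}_{l=1}^L\in\mathcal{T}_{\mathrm{2M}}$; the second term is controlled directly by Assumption~\ref{assume:pdhg}. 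Each term is $O((h_{l-1})^{\nu})=O((h_l)^{\nu})$ because $h_{l-1}=2h_l$, so $\mathrm{G}^k_{h_l}=O((h_l)^{\nu}/k)$ and the level-$l$ stopping condition holds once $k>O(\tfrac{1}{\varepsilon_l}(h_l)^{\nu})$, i.e.\ $K_l=O(\tfrac{1}{\varepsilon_l}(h_l)^{\nu})$.

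I expect the only nonroutine point to be the metric comparison $\|\cdot\|^2_{\tilde{M}}\le 3\|\cdot\|^2_{L^2}$ and the observation that its constant carries no $\sqrt{d}/h$; this is exactly what converts the $O(\tfrac{1}{\varepsilon}\tfrac{\sqrt{d}}{h})$ and $O(\tfrac{1}{\varepsilon_l}\tfrac{\sqrt{d}}{h_l^{1-r}})$ counts of Theorem~\ref{lemma:cp} into $O(1/\varepsilon)$ and $O(\tfrac{1}{\varepsilon_l}(h_l)^{\nu})$ here. Everything else is a transcription of the earlier argument with $r$ replaced by $\nu$.
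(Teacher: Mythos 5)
Your proposal is correct and follows essentially the same route as the paper's proof: the paper likewise reduces everything to Steps 1 and 2 of the proof of Theorem~\ref{lemma:cp} with $-A_h$ replaced by $I$ and $M_h$ by $\tilde{M}$, obtaining $G^k_h \le 3C_3/k$ and $G^k_{h_l} \le 12\cdot 2^{\nu}C_4 (h_l)^{\nu}/k$. You have simply spelled out the metric comparison $\|\cdot\|^2_{\tilde{M}} \le 3\|\cdot\|^2_{L^2}$ and the two-term decomposition of the initialization error that the paper invokes by reference, and your identification of the disappearing $\sqrt{d}/h$ factor as the sole substantive change is exactly right.
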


\begin{proof}
Firstly, we follow the same argument of step 1 in the proof of Theorem \ref{lemma:cp}. Substituting $-A_h$ with $I$ and $M_h$ with $\tilde{M}$, we obtain the following inequality similar to (\ref{eq:algo1_proof_main}):
\[G^k_h \leq \frac{1}{k}\|y^0_h-y^*_h\|^2_{\tilde{M}} \leq \frac{3}{k}\|y^0_h - y^*_h\|^2_{L^2} =  \frac{3}{k}\|y^*_h\|^2_{L^2} \leq \frac{3C_3}{k}.\]
As long as $k \geq \frac{3C_3}{\varepsilon}$, the stopping condition $G^k_h \leq \varepsilon$ is satisfied.  Algorithm \ref{algo:pdhg} takes $K \leq \frac{3C_1}{\varepsilon} = O(1/\varepsilon)$ iterations to stop. 

Similar to Algorithm \ref{algo:pdhg}, Algorithm \ref{algo:multigrid2} takes $O(\frac{1}{\varepsilon_1})$ iterations for level $l=1$. For level $l \geq 2$, we follow the same argument of step 2 in the proof of Theorem \ref{lemma:cp} and obtain
\[
     G^k_{h_l} \leq \frac{3}{k } \|y^0_{h_l} - y^*_{h_l}\|^2_{L^2} \leq \frac{3}{k} 4 C_4 (2 h_{l})^{\nu} = 12 \cdot 2^{r}C_4 (h_l)^{\nu}\frac{1}{k}.
\]
As long as $k \geq 12 \cdot2^{r}C_4 (h_l)^{\nu}\frac{1}{\varepsilon_l}$, the stopping condition $G^k_{h_l} \leq \varepsilon$ is satisfied.  Algorithm \ref{algo:multigrid2} takes $K_l \leq 12 \cdot2^{r}C_4 (h_l)^{\nu}\frac{1}{\varepsilon_l} = O(\frac{1}{\varepsilon_l}{h_l}^{\nu})$ iterations to stop. 
\end{proof}

For Algorithm \ref{algo:multigrid2}, we choose the same stopping tolerance rule with that of Algorithm \ref{algo:multigrid1} (\ref{eq:tol_formula_propose}) and obtain the following complexity analysis.

\begin{theorem}
\label{theo:pdhg}
 Given $\rho^0,\rho^1,h, h_l = 2^{L-l}h$ and Assumptions \ref{assume:pdhg2},\ref{assume:pdhg}, if we take $0$ as the initialization, the following holds:
\begin{enumerate}
    \item  With $\varepsilon \in \mathcal{T}_{\mathrm{2}}$, the complexity of Algorithm \ref{algo:pdhg} is
    \begin{equation}
        \label{eq:thm_46_1}
        \mathcal{O}_{\mathrm{2}} = O\Big(\frac{1}{\varepsilon}\frac{d}{h^{d }}\log(\frac{1}{h})\Big).
    \end{equation}
    \item  With $\{\varepsilon_l\}_{l=1}^L$ taken by (\ref{eq:tol_formula_propose}) and $\{\varepsilon_l\}_{l=1}^L \in \mathcal{T}_{\mathrm{2M}}$ and $L=1+\log_2(1/h)$, the complexity of Algorithm \ref{algo:multigrid2} is 
\begin{equation}
    \label{eq:thm_46_2}
    \mathcal{O}_{\mathrm{2M}} = 
    \begin{cases}
    O\Big(\frac{1}{\varepsilon_L} \frac{d}{h^{d-\nu}}\log(\frac{1}{h}))\Big)+O\Big(\frac{d 2^d}{h^d}\Big), & \text{ if } \alpha > \nu - d,\\
 O\Big(\frac{1}{\varepsilon_L} \frac{d}{h^{d-\nu}}\log^2(\frac{1}{h}))\Big)+O\Big(\frac{d 2^d}{h^d}\Big), & \text{ if } \alpha = \nu - d,\\
 O\Big(\frac{1}{\varepsilon_L} \frac{d}{h^{-\alpha}}\log(\frac{1}{h}))\Big)+O\Big(\frac{d 2^d}{h^d}\Big), & \text{ if }  \alpha < \nu - d.
    \end{cases}
\end{equation}
\end{enumerate}
\end{theorem}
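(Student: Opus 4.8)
The plan is to follow the template of the proof of Theorem~\ref{theo:cp} almost verbatim, replacing the per-level iteration counts of Algorithm~\ref{algo:primaldual} by those of Algorithm~\ref{algo:pdhg} established in Theorem~\ref{lemma:pdhg}, and replacing the single-step cost by the one appropriate to the proximal PDHG iteration. The one genuinely new ingredient is the single-step complexity of Algorithm~\ref{algo:pdhg}. First I would argue that each iteration of (\ref{eq:pdhg}) is dominated by the $m$-update, which is a projection onto $\{m_h \mid A_h m_h = \rho_h\}$; as noted after Algorithm~\ref{algo:pdhg}, this projection is computed through the discrete-Laplacian inverse $((A_h)^*A_h)^{-1}$ via FFT on $O(1/h^d)$ grid points, costing $O(\frac{d}{h^d}\log(\frac{1}{h}))$. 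The $\varphi$-update is a pointwise projection onto the unit $\ell_q$ ball (the prox of $f^*$), costing $O(d)$ per point for $q\in\{2,\infty\}$ and, for $q=1$ (i.e.\ $p=\infty$), an $\ell_1$-ball projection of cost $O(d)$ per point by~\cite{duchi2008efficient}; summed over all points this is $O(\frac{d}{h^d})$ and is therefore dominated by the FFT. Hence, for every $p$, the single-step cost is $C_{\mathrm{step}}(h)=O(\frac{d}{h^d}\log(\frac{1}{h}))$, and part~1 follows by multiplying the iteration bound $O(1/\varepsilon)$ from Theorem~\ref{lemma:pdhg} by $C_{\mathrm{step}}(h)$, giving $\mathcal{O}_{\mathrm{2}}=O(\frac{1}{\varepsilon}\frac{d}{h^d}\log(\frac{1}{h}))$, i.e.\ (\ref{eq:thm_46_1}).

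For part~2, I would split the cost of Algorithm~\ref{algo:multigrid2} as in the proof of Theorem~\ref{theo:cp} into an iteration part $\mathcal{O}_{\mathrm{2M,1}}$ and an interpolation part $\mathcal{O}_{\mathrm{2M,2}}$. On level $l\ge 2$, Theorem~\ref{lemma:pdhg} gives $O(\frac{1}{\varepsilon_l}h_l^{\nu})$ iterations, so the level-$l$ cost is $O(\frac{1}{\varepsilon_l}h_l^{\nu})\times C_{\mathrm{step}}(h_l)=O(\frac{1}{\varepsilon_l}\frac{d}{h_l^{d-\nu}}\log(\frac{1}{h_l}))$, with the level-$1$ term handled exactly as in Theorem~\ref{theo:cp} and absorbed. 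Substituting the rule (\ref{eq:tol_formula_propose}) and $h_l=2^{-(l-1)}$ (which follows from $h_l=2^{L-l}h$ and $L=1+\log_2(1/h)$) and factoring out $h^{\alpha}/\varepsilon_L$ reduces the sum to $\sum_{l=2}^L 2^{(l-1)\beta}\log(\frac{1}{h_l})$ with $\beta := d-\nu+\alpha$. Bounding $\log(1/h_l)\le \log(1/h)$ uniformly and pulling it out, the residual geometric sum $\sum_{l=2}^L 2^{(l-1)\beta}$ is $O(h^{-\beta})$ when $\beta>0$ ($\alpha>\nu-d$), is $O(\log(1/h))$ when $\beta=0$ ($\alpha=\nu-d$), and is $O(1)$ when $\beta<0$ ($\alpha<\nu-d$); recombining with $h^{\alpha}/\varepsilon_L$ and using $\alpha-\beta=-(d-\nu)$ produces exactly the three branches of (\ref{eq:thm_46_2}) up to the interpolation term.

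The interpolation part is identical to Theorem~\ref{theo:cp}: each of the $O(d/h_{l+1}^d)$ nodes on level $l+1$ is formed from at most $2^d$ coarse nodes, and summing the geometric series over levels gives $\mathcal{O}_{\mathrm{2M,2}}=O(\frac{d2^d}{h^d})$, which is the additive term appearing in every branch.

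The main obstacle is not conceptual but a matter of careful bookkeeping of the logarithmic factors through the level sum, since here the FFT introduces a $\log$ in the single-step cost that was absent in Theorem~\ref{theo:cp}. The subtlety is that the $\log(1/h)$ appearing in the $\beta<0$ branch does not come from the geometric sum (which converges to a constant) but from the uniform bound $\log(1/h_l)\le\log(1/h)$ on the FFT cost, whereas in the $\beta=0$ branch the stated $\log^2(1/h)$ arises as the product of that same uniform log bound with the $O(\log(1/h))$ number of active levels. Keeping these two sources of logarithms straight, and confirming that the FFT single-step cost rather than the $\ell_q$-ball projection carries the extra log uniformly in $p$, is where I would be most careful.
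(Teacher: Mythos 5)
Your proposal is correct and follows essentially the same route as the paper's proof: identify the FFT-based projection as the dominant $O(\frac{d}{h^d}\log(\frac{1}{h}))$ per-iteration cost, multiply by the iteration counts from Theorem~\ref{lemma:pdhg}, substitute the tolerance rule (\ref{eq:tol_formula_propose}), and split the resulting level sum into the same three geometric-series cases plus the $O(d2^d/h^d)$ interpolation term inherited from Theorem~\ref{theo:cp}. Your extra care in tracking which branch's logarithm comes from the uniform bound $\log(1/h_l)\leq\log(1/h)$ versus from the number of levels, and in checking that the pointwise $\ell_q$-ball projection is dominated by the FFT for all $p$, is sound and slightly more explicit than the paper's own write-up.
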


\begin{proof} 
In each step of Algorithm \ref{algo:pdhg}, the dominant calculation is conducting a $d$ dimensional FFT on $\bar{\varphi}^k_h$~\cite{jacobs2018solving}, which have complexity of $O(N^d\log(N^d)$. Since $N=1/h$, the complexity is $O( \frac{d}{h^d}\log(\frac{1}{h}))$~\cite{duhamel1990fast}. Then the complexity of Algorithm \ref{algo:pdhg} is:
 \[\mathcal{O}_{\mathrm{2}} = O\Big(\frac{1}{\varepsilon}\Big) \times O\bigg( \frac{d}{h^d}\log(\frac{1}{h})\bigg) = O\bigg(\frac{1}{\varepsilon}\frac{d}{h^{d}}\log(\frac{1}{h})\bigg).\]
 
 Thus, the complexity of Algorithm \ref{algo:multigrid2} on level $l=1$ is $O(\frac{1}{\varepsilon_1}\frac{d}{h_1^{d}}\log(\frac{1}{h_1}))$, and, on higher levels $(2 \leq l \leq L)$, is $O(\frac{1}{\varepsilon_l}\frac{d}{h_l^{d -\nu}}\log(\frac{1}{h_l}))$.
 Therefore, we have
 \[
 \begin{aligned}
\mathcal{O}_{\mathrm{2M,1}} = & \sum_{l=2}^L O\bigg(\frac{1}{\varepsilon_l}\frac{d}{h_l^{d -\nu}}\log(\frac{1}{h_l}))\bigg) + O\bigg(\frac{1}{\varepsilon_1}\frac{d}{h_1^{d}}\log(\frac{1}{h_1})\bigg)\\ 
 =& \sum_{l=2}^L O\bigg(\frac{h^\alpha}{\varepsilon_L}\frac{d}{h_l^{d+\alpha -\nu}}\log(\frac{1}{h_l}))\bigg) + O\bigg(\frac{h^\alpha}{\varepsilon_L}\frac{d}{h_1^{d+\alpha}}\log(\frac{1}{h_1})\bigg)\\ 
 = & O\bigg(\frac{1}{\varepsilon_L} \frac{d}{h^{d-\nu}}\log(\frac{1}{h}))\bigg) \Big(\sum_{i=0}^{L-2}2^{-i(d+\alpha-\nu)} + 2^{-(d+\alpha)(L-1)} h^{-\nu} \Big)\\
 = & O\bigg(\frac{1}{\varepsilon_L} \frac{d}{h^{d-\nu}}\log(\frac{1}{h}))\bigg) \Big(\sum_{i=0}^{L-1}2^{-i(d+\alpha-\nu)} \Big).
 \end{aligned}
 \]
 Then following the same proof as that for Theorem \ref{theo:cp}, we obtain (\ref{eq:thm_46_2}). 
\end{proof}

\subsection{Summary of complexities} Tables \ref{tab:1} and \ref{tab:2} summarize the complexities.
Let $N=1/h$, Table \ref{tab:1} can be directly derived from Theorems \ref{theo:cp} and \ref{theo:pdhg}. 
In the case of $d=2$ (2D case), we choose typical parameters (tolerances in Section \ref{sec:prac-tol}, $r=2,\nu=1$) and compare Algorithms \ref{algo:primaldual}, \ref{algo:pdhg}, \ref{algo:multigrid1} and \ref{algo:multigrid2} with other EMD algorithms~\cite{Ling,bassetti2018computation} in Table \ref{tab:2}. By \cite{Ling}, their algorithm has complexity of  $O((N^d)^2)$. As $d=2$, it is $O(N^4)$. The algorithm in \cite{bassetti2018computation}  constructs a graph and solves the uncapacitated minimum cost flow problem on the graph. The worst case complexity is $O\big(|V|\log(|V|)(|V|\log(|V|)+|E|)\big)$, where $|V|$ is the number of nodes in the created graph and $|E|$ is the number of edges. As $p=1$ or $p=\infty$, $|V|=O(N^2),|E|=O(N^2)$, the complexity is $O(N^{4}\log^2(N))$; as $p=2$, $|V|=O(N^2),|E|=O(N^4)$, the complexity is $O(N^{6}\log(N))$. 

\begin{table}[ht]
\caption{Complexities of Algorithms \ref{algo:primaldual}, \ref{algo:pdhg}, \ref{algo:multigrid1} and \ref{algo:multigrid2}. The parameters $r,\nu$ depend on the interpolation accuracy (Assumptions \ref{assume:cp}, \ref{assume:pdhg}). We take $\varepsilon_l$ as (\ref{eq:tol_formula_propose}) and $\alpha\geq r-d-1$ for Alg. \ref{algo:multigrid1}, $\alpha \geq \nu -d$ for Alg. \ref{algo:multigrid2}.}
\label{tab:1}
\centering
\begin{tabular}{l|l}
\hline
 \multicolumn{2}{c}{$p=1,2,\infty$}                                        \\ \hline
 Algorithm \ref{algo:primaldual}~\cite{li2018parallel} & $O(d^{3/2}N^{d+1}\log(N)/\varepsilon)$  \\ \hline
 Algorithm \ref{algo:pdhg}~\cite{jacobs2018solving} & $O(d N^{d}\log^2(N)/\varepsilon)$ \\ \hline
 Algorithm \ref{algo:multigrid1} &   $O(d^{3/2}N^{d+1-r}\log(N)/\varepsilon_L)+O(d 2^d N^d)$   \\\hline   
 Algorithm \ref{algo:multigrid2} &    $O(d N^{d-\nu} \log^2(N)/\varepsilon_L)+O(d 2^d N^d)$        \\ \hline
\end{tabular}
\end{table}

\begin{table}[ht]
\caption{Complexities in the 2D case ($d=2$). Algorithms \ref{algo:primaldual},\ref{algo:pdhg},\ref{algo:multigrid1},\ref{algo:multigrid2} choose tolerances described in Section \ref{sec:prac-tol}.
}
\label{tab:2}
\centering
\begin{tabular}{l|l|l|l}
\hline
  & $p=1$ & $p=2$ & $p=\infty$ \\ \hline
Tree-EMD~\cite{Ling} &  $O(N^{4})$    &   -  &   -    \\ \hline
Min-cost flow\cite{bassetti2018computation}\footnotemark &  $O(N^{4}\log^2(N))$   &    $O(N^{6}\log(N))$  &  $O(N^{4}\log^2(N))$      \\ \hline
Algorithm \ref{algo:primaldual}~\cite{li2018parallel}  &  $O(N^{6})$   &  $O(N^{6})$    &   $O(N^{6})$     \\ \hline
Algorithm \ref{algo:pdhg}~\cite{jacobs2018solving}  &  $O( N^{4} \log(N))$   &   $O( N^{4} \log(N))$  &  $O(N^{4} \log(N))$     \\ \hline
Algorithm \ref{algo:multigrid1}    &  \multirow{1}{*}{$O( N^{3}\log(N))$}    &  \multirow{1}{*}{$O( N^{3}\log(N))$}   &    \multirow{1}{*}{$O(N^{3}\log(N))$}   \\
\hline
Algorithm \ref{algo:multigrid2}  &   \multirow{1}{*}{$O( N^{3} \log^2(N))$}  &  \multirow{1}{*}{$O( N^{3} \log^2(N))$}   &   \multirow{1}{*}{$O( N^{3} \log^2(N))$}    \\ 
\hline
\end{tabular}
\end{table}

\footnotetext{The complexity of solving the minimum cost flow problem is the upper bound for the worst case. In practice, their algorithm has better performance than the theoretical bound. Numerical results are reported in Table \ref{tab:final}.}

\section{Numerical validation of the assumptions}\label{section:assumes}
We numerically validated Assumptions \ref{assume:cp2}, \ref{assume:cp}, \ref{assume:pdhg2} and \ref{assume:pdhg} in the case of dimension $d=2$ and $p \in \{1,2,\infty\}$. 

\subsection{Quantitative validation}
\label{sec:assumes}
Table \ref{tab:assume2} reports our results.
Since the EMD generally does not have a closed-form solution in the 2D case, we numerically estimate $z^*_{h_l},y^*_{h_l} (1\leq l \leq L)$ to validate the assumptions.
By Theorem 1 in \cite{li2018parallel}, $z^k_{h_l}\to z^*_{h_l}$ as $k\to\infty$ for all $l$. Consequently, as long as the stopping tolerance $\varepsilon$ is small enough, we could use $z^K_{h_l}$ obtained by Algorithm \ref{algo:multigrid1} to estimate $z^*_{h_l}$. In this section, we set $\varepsilon=10^{-8}$. The same can be done to estimate $y^*_{h_l}$. We estimated $z^*_{h_l},y^*_{h_l} (1\leq l \leq L)$ on the instances in the DOTmark dataset~\cite{schrieber2017dotmark} and calculated averages over the instances of the four values: $\|z^*_{h_l}\|^2_{L^2}$, $\|y^*_{h_l}\|^2_{L^2}$, $\|\mathrm{Interpolate }(z^*_{h_{l-1}}) - z^*_{h_{l}}\|^2_{L^2}$, $\|\mathrm{Interpolate }(y^*_{h_{l-1}}) - y^*_{h_{l}}\|^2_{L^2}$.

Table \ref{tab:assume2} shows that $\|z^*_{h_l}\|^2_{L^2}$ and $\|y^*_{h_l}\|^2_{L^2}$ are both bounded by a constant independent of grid step size $h_l$. 
Furthermore, $\|\mathrm{Interpolate }(z^*_{h_{l-1}}) - z^*_{h_{l}}\|^2_{L^2}\approx O((h_l)^2)$ and $\|\mathrm{Interpolate }(y^*_{h_{l-1}}) - y^*_{h_{l}}\|^2_{L^2}\approx O(h_l)$. The above conclusions directly validate Assumptions \ref{assume:cp2}, \ref{assume:cp}, \ref{assume:pdhg2}, \ref{assume:pdhg} and $r \approx 2, \nu \approx 1$.

\begin{table}[t]
\caption{Validation of the assumptions}
\label{tab:assume2}
\centering
\begin{tabular}{c|c|c|c|c|c}
\hline
      & $l=1$ & $l=2$  & $l=3$ & $l=4$ & $l=5$ \\
      \cline{2-6}
     & $h_l=1/32$ & $h_l=1/64$  & $h_l=1/128$ & $h_l=1/256$ & $h_l=1/512$ \\\hline\hline
     \multicolumn{6}{c}{Assumption \ref{assume:cp2}: $\|z^*_{h_l}\|^2_{L^2}$}              \\ \hline
$p=1$&0.101&0.094&0.091&0.089&0.089\\\hline
$p=2$&0.061&0.057&0.056&0.055&0.055\\\hline
$p=\infty$&0.056&0.053&0.052&0.051&0.051\\\hline
\hline
\multicolumn{6}{c}{Assumption \ref{assume:cp}: $\|\mathrm{Interpolate }(z^*_{h_{l-1}}) - z^*_{h_{l}}\|^2_{L^2}$}              \\ \hline
$p=1$ & $1.35 \times 10^{-3}$ & $4.15 \times 10^{-4}$ & $1.08 \times 10^{-4}$ & $2.57 \times 10^{-5}$ & $3.59 \times 10^{-6}$ \\ \hline
$p=2$ & $5.34 \times 10^{-4}$ &  $1.71 \times 10^{-4}$ &  $4.40 \times 10^{-5}$ & $9.67 \times 10^{-6}$ & $1.97 \times 10^{-6}$ \\ \hline
$p = \infty$ & $8.79 \times 10^{-4}$ & $2.93 \times 10^{-4}$ & $7.08 \times 10^{-5}$ & $2.07 \times 10^{-5}$ & $6.77 \times 10^{-6}$ \\ \hline
\hline
\multicolumn{6}{c}{Assumption \ref{assume:pdhg2}: $\|y^*_{h_l}\|^2_{L^2}$}              \\ \hline
$p=1$&2.019&1.984&1.972&1.967&1.965\\\hline
$p=2$&1.015&1.003&0.997&0.995&0.993\\\hline
$p=\infty$&0.965&0.960&0.963&0.967&0.972\\\hline
\hline
\multicolumn{6}{c}{Assumption \ref{assume:pdhg}: $\|\mathrm{Interpolate }(y^*_{h_{l-1}}) - y^*_{h_{l}}\|^2_{L^2}$}              \\ \hline
$p=1$ & $2.55\times10^{-1}$ & $1.33\times10^{-1}$ & $4.42\times10^{-2}$ &$1.01\times10^{-2}$ & $3.30\times10^{-3}$ \\\hline
$p=2$ & $6.05\times10^{-2}$ &$2.89\times10^{-2}$&$1.31\times10^{-2}$ &$4.68\times10^{-3}$ & $1.14\times10^{-3}$ \\\hline
$p=\infty$ & $8.18\times10^{-2}$ &$4.72\times10^{-2}$&$2.12\times10^{-2}$&$8.84\times10^{-3}$&$3.48\times10^{-3}$\\\hline
\end{tabular}
\end{table}

\subsection{Visualization of the non-uniqueness cases}

\paragraph{Visualization of $z^*_{h_l}$} The solution set $Z^*_{h_l}$ may have multiple solutions on each level. This phenomena is also studied in \cite{li2018parallel} with only one level. What we want to show in this paragraph is that, for each coarse level solution $z^*_{h_l} \in Z^*_{h_l}$, there is a finer level solution $z^*_{h_{l+1}} \in Z^*_{h_{l+1}}$ that is close to $z^*_{h_l}$. Here we set $p=1$. 
Figures \ref{fig:assume1m} and \ref{fig:assume1phi} illustrate this point. 
First, we consider the primal solution  $m^*_{h_l}$. With different initializations, we obtain two different optimal $m^*_{h_l}$s on level $l=1$: Figures \ref{fig:visualize_1_8} and \ref{fig:visualize_2_8}. With the results in Figures \ref{fig:visualize_1_8} and \ref{fig:visualize_2_8} as initializations, we obtain the solutions $m^*_{h_l}$ on level $2$: Figures \ref{fig:visualize_1_16} and \ref{fig:visualize_2_16}. The flux in Figure \ref{fig:visualize_1_16} is close to that in Figure \ref{fig:visualize_1_8}; Figure \ref{fig:visualize_2_16} is close to Figure \ref{fig:visualize_2_8}. Thus, Assumption \ref{assume:cp} is meaningful when there are multiple solutions on each level: for a solution $m^*_{h_l}$ on level $l$, there is a solution $m^*_{h_{l+1}}$ on level $l+1$ similar to $m^*_{h_l}$. 
Secondly, we consider the dual solution $\phi^*_{h_l}$. As $p=1$, $\phi^*_{h_l}$ is unique up to a constant for each level $l$. The $\phi^*_{h_l}$ on level $l$ is close to that on the finer level $\phi^*_{h_{l+1}}$. Figure \ref{fig:assume1phi} demonstrates this point.

\begin{figure}
\centering
\begin{tabular}{ccc}
\hspace{-10mm}
\subfigure[][
\parbox{0.2\textwidth}
{One $m^*_{h_1}$: ~$8\times8$. Energy: $1.000113$.}]{
\includegraphics[width=0.35\textwidth]{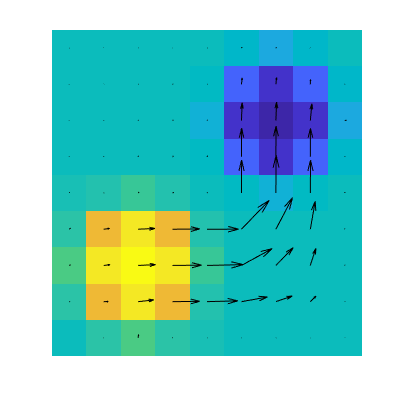}\label{fig:visualize_1_8}}
&
\hspace{-10mm}
\subfigure[][
\parbox{0.2\textwidth}
{One $m^*_{h_2}$: $16\times16$. Energy: $1.000127$.}]{
\includegraphics[width=0.35\textwidth]{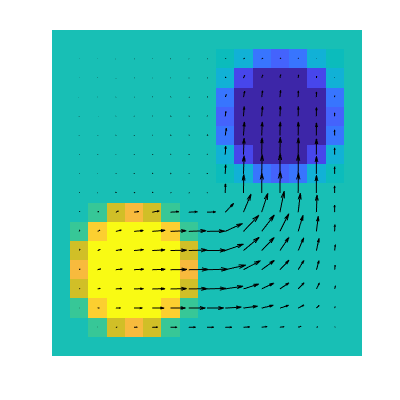}\label{fig:visualize_1_16}}
&
\hspace{-10mm}
\subfigure[][
\parbox{0.2\textwidth}
{One $m^*_{h_3}$: $32\times32$. Energy: $0.999764$} ]{
\includegraphics[width=0.35\textwidth]{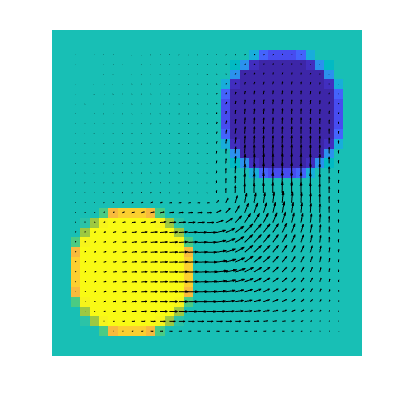}\label{fig:visualize_1_32}}\\
\hspace{-10mm}
\subfigure[][\parbox{0.2\textwidth}{Another $m^*_{h_1}$: ~$8\times8$. Energy: $1.000113$.}]{
\includegraphics[width=0.35\textwidth]{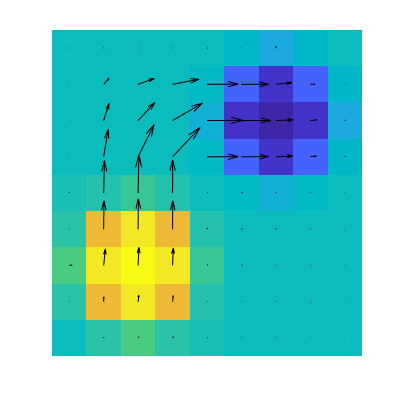}\label{fig:visualize_2_8}}
&
\hspace{-10mm}
\subfigure[][\parbox{0.25\textwidth}{Another $m^*_{h_2}$: $16\times 16$. Energy: $1.000127$.}]{
\includegraphics[width=0.35\textwidth]{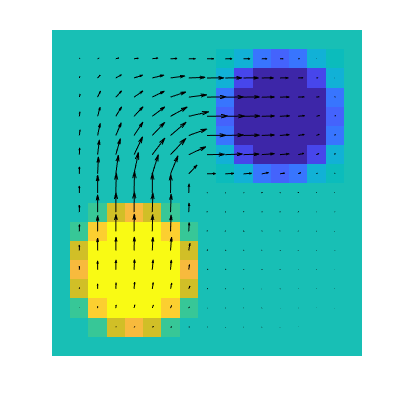}\label{fig:visualize_2_16}}
&
\hspace{-10mm}
\subfigure[][\parbox{0.25\textwidth}{Another $m^*_{h_3}$: $32\times32$. Energy: $0.999764$.}]{
\includegraphics[width=0.35\textwidth]{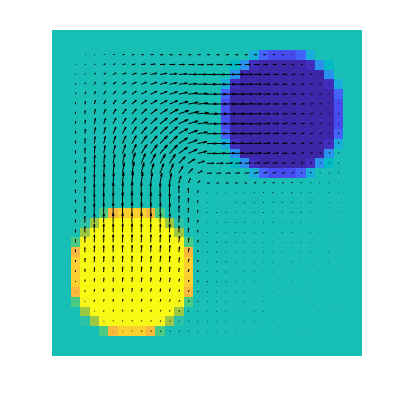}\label{fig:visualize_2_32}}
\end{tabular}
\caption{Visualization of Assumption \ref{assume:cp}: the black flux represents $m_{h_l}:\Omega^{h_l} \to \Re^2$, the two circles represent $\rho_{h_l}^0,\rho^1_{h_l}$ respectively. There are multiple solutions on each level. For every solution $m^*_{h_l}$ on level $l$, there is a solution $m^*_{h_{l+1}}$ on the finer level $l+1$ that is similar to $m^*_{h_l}$.
}
\label{fig:assume1m}

\vspace{5mm}
\begin{tabular}{ccc}
\hspace{-10mm}
\subfigure[][\parbox{0.15\textwidth}{$\phi^*_{h_1}$: $8\times8$.}]{
\includegraphics[width=0.35\textwidth]{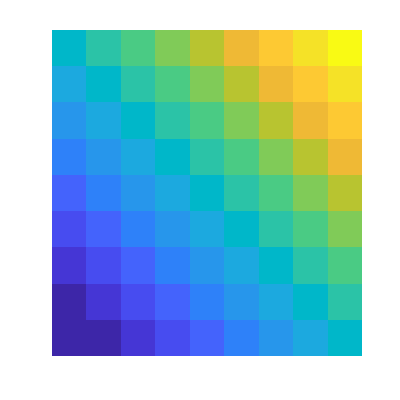}\label{fig:visualize_1_8phi}}
&
\hspace{-10mm}
\subfigure[][\parbox{0.15\textwidth}{$\phi^*_{h_2}$: $16\times16$.}]{
\includegraphics[width=0.35\textwidth]{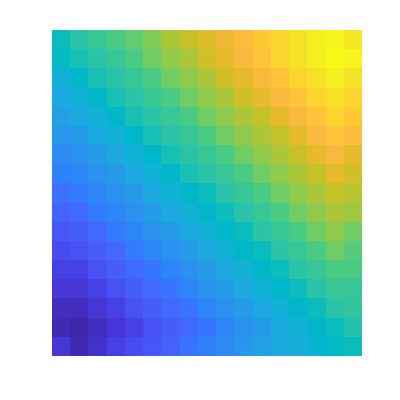}\label{fig:visualize_1_16phi}}
&
\hspace{-10mm}
\subfigure[][\parbox{0.15\textwidth}{$\phi^*_{h_3}$: $32\times32$.}]{
\includegraphics[width=0.35\textwidth]{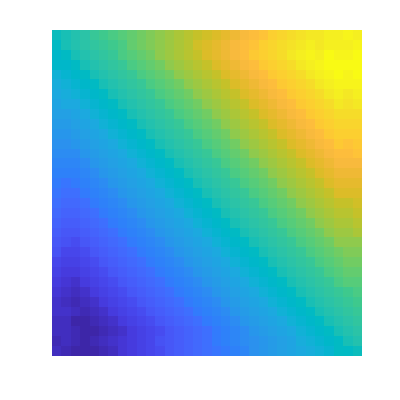}\label{fig:visualize_1_32phi}}
\end{tabular}
\caption{Visualization of Assumption \ref{assume:cp}: dual solution (Kantorovich potential) $\phi^*_{h_l}: \Omega^{h_l} \to \Re$ on each level. The dual solution $\phi^*_{h_{l+1}}$ on level $l+1$ is close to $\phi^*_{h_l}$ on the coarser level $l$.
}
\label{fig:assume1phi}
\end{figure}

\begin{figure}
\centering
\begin{tabular}{ccc}
\hspace{-10mm}
\subfigure[][\parbox{0.2\textwidth}{One $m^*_{h_1}$:~ $8\times8$. Energy: $0.941176$.}]{
\includegraphics[width=0.35\textwidth]{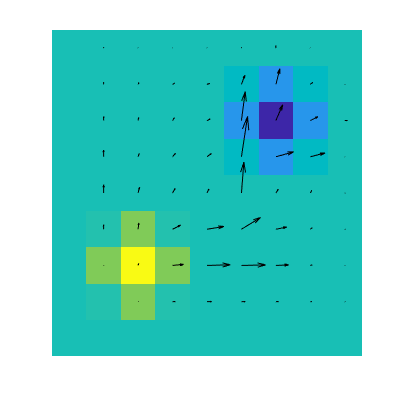}\label{fig:visualize3_1_8}}
&
\hspace{-10mm}
\subfigure[][\parbox{0.2\textwidth}{One $m^*_{h_2}$: $16\times16$. Energy: $0.969697$.}]{
\includegraphics[width=0.35\textwidth]{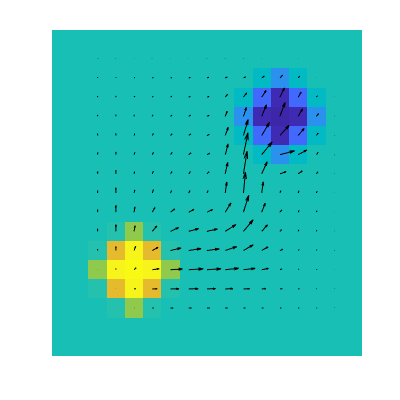}\label{fig:visualize3_1_16}}
&
\hspace{-10mm}
\subfigure[][\parbox{0.2\textwidth}{One $m^*_{h_3}$: $32\times32$. Energy: $0.984615$.}]{
\includegraphics[width=0.35\textwidth]{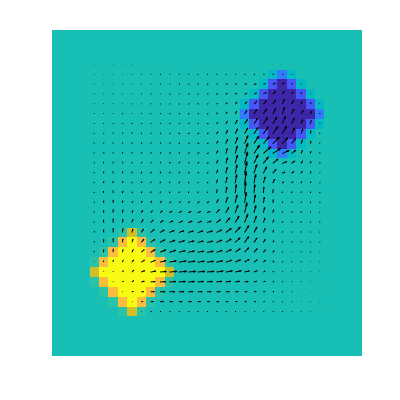}\label{fig:visualize3_1_32}}\\
\hspace{-10mm}
\subfigure[][\parbox{0.2\textwidth}{Another $m^*_{h_1}$: $8\times8$. Energy: $0.941176$.}]{
\includegraphics[width=0.35\textwidth]{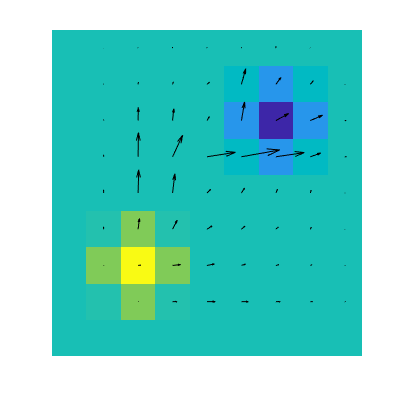}\label{fig:visualize3_2_8}}
&
\hspace{-10mm}
\subfigure[][\parbox{0.25\textwidth}{Another $m^*_{h_2}$: $16\times16$. Energy: $0.969697$.}]{
\includegraphics[width=0.35\textwidth]{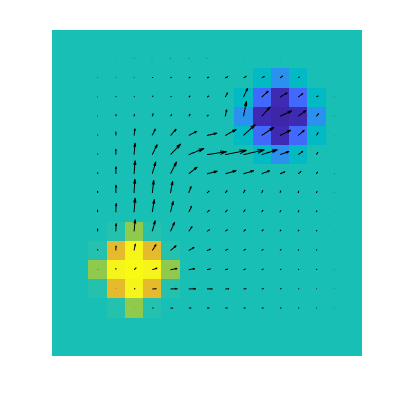}\label{fig:visualize3_2_16}}
&
\hspace{-10mm}
\subfigure[][\parbox{0.25\textwidth}{Another $m^*_{h_3}$: $32\times32$. Energy: $0.984615$.}]{
\includegraphics[width=0.35\textwidth]{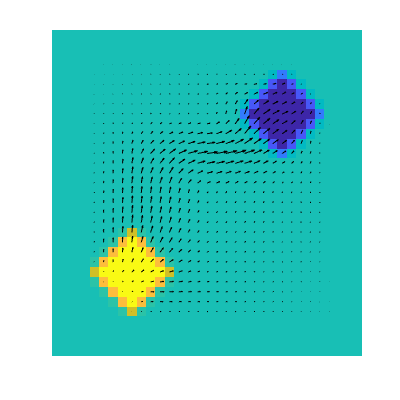}\label{fig:visualize3_2_32}}
\end{tabular}
\caption{Visualization of Assumption \ref{assume:pdhg}: the black flux represents $m_{h_l}:\Omega^{h_l} \to \Re^2$, the the two circles represent $\rho_{h_l}^0,\rho^1_{h_l}$ respectively. There are multiple solutions on each level. For every solution $m^*_{h_l}$ on level $l$, there is a solution $m^*_{h_{l+1}}$ on level $l+1$ that is close to $m^*_{h_l}$.
}
\label{fig:assume3m}

\begin{tabular}{ccc}
\hspace{-10mm}
\subfigure[][\parbox{0.2\textwidth}{Dual solution $\varphi^*_{h_1}$ on $8\times8$ grid.}]{
\includegraphics[width=0.35\textwidth]{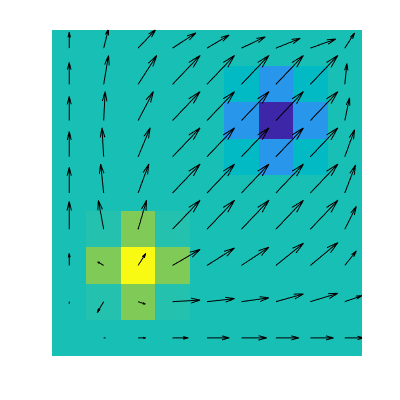}\label{fig:visualize3_1_8p}}
&
\hspace{-10mm}
\subfigure[][\parbox{0.2\textwidth}{Dual solution $\varphi^*_{h_2}$ on  $16\times16$ grid.}]{
\includegraphics[width=0.35\textwidth]{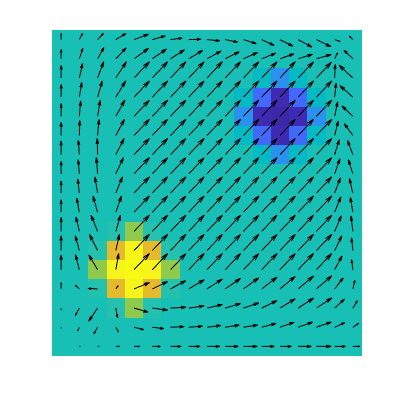}\label{fig:visualize3_1_16p}}
&
\hspace{-10mm}
\subfigure[][\parbox{0.2\textwidth}{Dual solution $\varphi^*_{h_3}$ on  $32\times32$ grid.}]{
\includegraphics[width=0.35\textwidth]{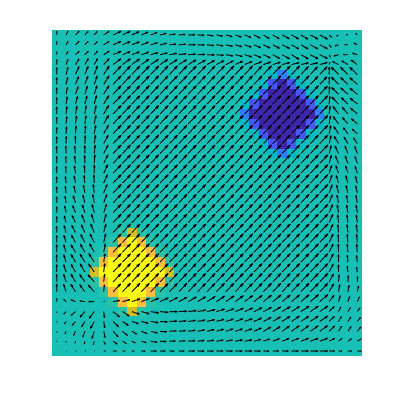}\label{fig:visualize3_1_32p}}
\end{tabular}
\caption{Visualization of Assumption \ref{assume:pdhg}: dual solution $\varphi^*_{h_l}: \Omega^{h_l} \to \Re^2$ on each level. The dual solution $\varphi^*_{h_{l+1}}$ on level $l$ is close to $\varphi^*_{h_l}$ on level $l+1$.
}
\label{fig:assume3p}
\end{figure}

\paragraph{Visualization of $y^*_{h_l}$} We visualize $y^*_{h_l}$ in a similar way. We set $p=1$ and get the results in
Figures \ref{fig:assume3m} and \ref{fig:assume3p}. 
Figure \ref{fig:assume3m} shows that for a solution $m^*_{h_l}$ on level $l$, there is a solution $m^*_{h_{l+1}}$ on level $l+1$, which is similar to $m^*_{h_l}$. 
On this specific numerical example, the dual variable $\varphi^*_{h_l}$ is unique for each level $l$. Figure \ref{fig:assume3p} demonstrates that $\varphi^*_{h_l}$ on level $l$ is close to $\varphi^*_{h_{l+1}}$ on level $l+1$.

\section{Numerical results}\label{section:5}
In this section, we numerically study why  and how much our Algorithms \ref{algo:multigrid1} and \ref{algo:multigrid2} speed up Algorithms \ref{algo:primaldual} and \ref{algo:pdhg}. The conclusions in Theorems \ref{theo:cp} and \ref{theo:pdhg} are validated. Moreover, we compare our algorithms with other EMD solvers \cite{Ling,bassetti2018computation,li2018parallel,jacobs2018solving}.
We implemented Algorithms \ref{algo:multigrid1} and \ref{algo:multigrid2} for $d=2$ in MATLAB. 
All the experiments were conducted on a single CPU (Intel i7-2600 CPU @ 3.40GHz).

\subsection{The effect of multilevel initialization}
\label{sec:cats}

In this subsection, we study why multilevel initialization helps speed up Algorithms \ref{algo:primaldual} and \ref{algo:pdhg}. 
All the results are obtained on the ``cat'' example,
which is also used as a benchmark in \cite{li2018parallel,jacobs2018solving},
and $p=1$.
For Algorithm \ref{algo:multigrid1}, we choose $\varepsilon_l = 10^{-6}$ for all levels. For Algorithm \ref{algo:multigrid2}, we choose $\varepsilon_l = 10^{-5}$ for all levels.
We report the results in Table \ref{tab:pd_comparel}.

\begin{table}[ht]
\centering
\caption{The effect of level number $L$. ``Iters'' is the number of iterations, and ``Time'' is in second.}
\label{tab:pd_comparel}
\begin{tabular}{c|c|c|c|c|c|c|c|c|c}
\hline
\multirow{2}{*}{Number} & \multicolumn{8}{c|}{Calculation cost on each level}                                                                                                               & \multirow{2}{*}{Total} \\ \cline{2-9}
                  & \multicolumn{2}{c|}{$64\times64$} & \multicolumn{2}{c|}{$128\times128$} & \multicolumn{2}{c|}{$256\times256$} &   \multicolumn{2}{c|}{$512\times512$} &                                 \\ \cline{2-9}
           of levels       & Iters   & Time    & Iters   & Time    & Iters   & Time  & Iters   & Time   &  time                                 \\ \hline \hline
           \multicolumn{10}{c}{Algorithm \ref{algo:multigrid1}}\\
           \hline
L=1                           &              &               &              &               &          &          & 5264         & 95.48 &      95.48              \\ \hline
L=2                         &              &               &            &           & 3036           & 10.76         & 30          & 0.550  &   11.31          \\ \hline
L=3                      &            &          & 1753         & 1.971      & 95        & 0.339        & 29      & 0.541  &     2.851        \\ \hline
L=4                   & 1002          & 0.456       & 242         & 0.291      & 105        & 0.395       & 31          &  0.572 &   1.714       \\ \hline
\hline
\multicolumn{10}{c}{Algorithm \ref{algo:multigrid2}}\\
           \hline
L=1                      &              &               &              &               &          &          & 100         & 2.375 &      2.375             \\ \hline
L=2                      &              &               &            &           & 99          & 1.224       & 2           & 0.113  &   1.337           \\ \hline
L=3                     &            &          & 99         & 0.157      & 4        & 0.095       & 2       & 0.121  &     0.373       \\ \hline
L=4                  & 100           & 0.024      & 8      & 0.019     & 4     & 0.097    & 2       & 0.121 &  0.261         \\ \hline
\end{tabular}
\end{table}

\paragraph{Algorithm \ref{algo:multigrid1}} 
If $L=1$, Algorithm \ref{algo:multigrid1} reduces to Algorithm \ref{algo:primaldual}, which takes $5264$ iterations to stop as Table \ref{tab:pd_comparel} shows. If $L=2$, we first conduct $3036$ iterations on a coarse grid $256\times256$. The obtained result is used to initialize the algorithm on the fine grid $512\times512$. With this initialization, Algorithm \ref{algo:multigrid1} only takes $30$ iterations to stop on the fine grid. Although extra calculations on $256\times256$ grid are required, the merit of fewer iterations on the fine grid overcomes the extra calculation cost. Thus, the total calculation time is reduced from $95.48$ seconds to $11.31$ seconds. When the number of levels $L$ gets even larger, the computing time could be further reduced. 
The results support Theorem \ref{lemma:cp}.
Since the  multilevel algorithm is able to dramatically reduce the calculation expense on the finest level, it consumes much less computing time.

\paragraph{Algorithm \ref{algo:multigrid2}} 
The advantage of Algorithm \ref{algo:multigrid2} is similar to that of Algorithm \ref{algo:multigrid1}. 
With  $L>1$, the number of iterations on the finest grid $512\times512$ can be reduced from $100$ to $2$, and the calculation time is reduced from $2.375$ seconds to $0.121$ seconds.

\subsection{The effect of stopping tolerances}
\label{sec:ada-tol}
In this subsection, we study tolerance rules for Algorithms \ref{algo:multigrid1} and \ref{algo:multigrid2}. With $\alpha=1,0,-1,-5$, formula (\ref{eq:tol_formula_propose}) gives decreasing tolerance, constant tolerance and increasing tolerance respectively.
For Algorithm \ref{algo:multigrid1}, $\varepsilon_L$ is chosen from
$\varepsilon_L \in \{ 1\times 10^{-6}\times 2^p:p\in\{0,-1,-2,\cdots,-10\} \}$.
For Algorithm \ref{algo:multigrid2}, $\varepsilon_L$ is chosen from
$ \varepsilon_L \in \{ 1\times 10^{-4}\times 2^p:p\in\{5,4,3,\cdots,-10\} \}$.
We report the best $\varepsilon_L$ and compare the three tolerance rules in Tables \ref{tab:1m} and \ref{tab:2m}. The ``best'' $\varepsilon_L$ takes the largest possible value that the algorithm meets the following condition:
\begin{equation}
    \label{eq:cond_f}
    |f(m^*_{h_l})-f(m^{K_l}_{h_l})| < |f(m^*_{h_l})-f(m^*_{h_{l-1}})|,\quad l = 2,3,\cdots,L.
\end{equation}
We use the above condition because (\ref{eq:lemma12}) and (\ref{eq:lemma42}) are intractable to numerically validate due to the non-uniqueness of $z^*_{h_l}$ and $y^*_{h_l}$. We use the error in objective function value $|f(m^*_{h_l})-f(m^{K_l}_{h_l})|$ to estimate $\|z^*_{h_l} - z^{K_l}_{h_l}\|^2$ and use the difference between the function values on two successive levels $|f(m^*_{h_l})-f(m^*_{h_{l-1}})|$ to estimate the grid error. Once condition (\ref{eq:cond_f}) is satisfied, we regard $m^{K_l}_{h_l}$ close enough to $m^*_{h_l}$.

\begin{table}[h]
\label{tab:1m}
\centering
\caption{Algorithm \ref{algo:multigrid1} on an example of size $256\times256$. 
``Error'' means $|f(m^*_{h_l})-f(m^{K_l}_{h_l})|$, ``Time'' is in seconds
. Baseline: Algorithm \ref{algo:primaldual} with  $\varepsilon = 1.6\times10^{-8}$, $33232$ iterations, time: $105.59$ seconds.}
\begin{tabular}{c|c|c|c|c|c|c}
\hline\hline
Level & $l=1$ & $l=2$ & $l=3$ & $l=4$ & $l=5$ & $l=6$ \\\hline
$h_l$    &  $h_l = 2^{-3}$    &   $h_l = 2^{-4}$      &    $h_l = 2^{-5}$      &     $h_l = 2^{-6}$     & $h_l = 2^{-7}$        &  $h_l = 2^{-8}$            \\\hline
\multicolumn{2}{c|}{$|f(m^*_{h_l})-f(m^*_{h_{l-1}})|$} & 1.1E-3 & 9.3e-4 & 5.8e-4 & 2.8e-4 & 1.3e-4 \\\hline
\hline
\multicolumn{7}{c}{$\varepsilon_l = 1\times 10^{-6}/512 \times (h_l/h_L) $}        \\\hline
$\varepsilon_l$      & 6.3E{-8} & 3.1E{-8} & 1.6E-8 & 7.8E-9 & 3.9E-9 & 2.0E-9 \\\hline
$K_l$     & 486     & 673     & 1526    & 3426    & 8401    & 12075   \\\hline
Time      &  0.055       &   0.065      &   0.204      &    1.121     &     8.494    & 39.22        \\\hline
Error     &    3.7E{-4}     &    2.8E{-4}     &   8.4E{-5}      &   2.6E{-5}      &    1.3E{-4}     &    6.4E{-5}    \\\hline
\multicolumn{7}{c}{Total time: $49.17$ secs. Speedup: $105.59/48.17=2.19$.}        \\\hline
\hline
\multicolumn{7}{c}{$\varepsilon_l = 1\times 10^{-6}/128 $}        \\\hline
$\varepsilon_l$       & 7.8E{-9} & 7.8E{-9} & 7.8E{-9} & 7.8E{-9} & 7.8E{-9} & 7.8E{-9} \\\hline
$K_l$   & 610     & 999     & 1921    & 3730    & 3806    & 4270    \\\hline
Time      &   0.063      &    0.092     &  0.254       &  1.617       &   3.483      &   13.46      \\\hline
Error     &   1.1E{-4}      &  1.9E{-4}      &    1.8E{-4}    &     6.8E{-5}    &    1.8E{-4}   &   1.4E{-5}    \\\hline
\multicolumn{7}{c}{Total time: $18.98$ secs. Speedup: $105.59/18.98=5.56$.}        \\\hline
\hline
 \multicolumn{7}{c}{$\varepsilon_l = 1\times 10^{-6}/32 \times (h_l/h_L)^{-1} $}       \\\hline
$\varepsilon_l$      & 9.8E{-10} & 2.0E{-9} & 3.9E{-9} & 7.8E{-9} & 1.6E{-8} & 3.1E{-8} \\\hline
$K_l$   & 735     & 1279     & 2351    & 4255    & 2459    & 932   \\\hline
Time      &  0.106       &   0.129     &   0.312      &  2.091	       &  2.914       &   	3.096    \\\hline
Error     &   2.3E{-5}      &  9.1E{-5}      &  6.5E{-5}      &  1.4E{-4}       &   2.4E{-4}      &  \textbf{2.3E{-6}}      \\\hline
\multicolumn{7}{c}{Total time: $8.68$ secs. Speedup: $105.59/8.68=\mathbf{12.16}$.}        \\\hline\hline
\multicolumn{7}{c}{$\varepsilon_l = 1\times 10^{-6}/8 \times (h_l/h_L)^{-5} $}       \\\hline
$\varepsilon_l$      & 3.7E{-15} & 1.2E{-13} & 3.8E{-12} & 1.2E{-10} & 3.9E{-9} & 1.3E{-7} \\\hline
$K_l$   & 1483     & 3354     & 7716    & 16532    & 7286    & \textbf{173 }   \\\hline
Time      &  0.731       &   0.300     &   0.998      &  7.072	       &  7.713       &   	\textbf{0.614}      \\\hline
Error     &   9.7E{-7}      &  5.7E{-7}      &  2.4E{-6}      &  1.6E{-5}       &   2.3E{-4}      &  {4.2E{-6}}      \\\hline
\multicolumn{7}{c}{Total time: $17.43$ secs. Speedup: $105.59/17.43={6.06}$.}        \\\hline\hline
\end{tabular}
\end{table}

Table \ref{tab:1m} shows the significance of initialization for Algorithm \ref{algo:multigrid1}. With $\alpha=1$, Algorithm \ref{algo:multigrid1} sets larger tolerance on the lower levels and the corresponding error is larger. Since we initialize level $l=6$ with the results on the lower levels $(1\leq l \leq 5)$, such tolerance rule leads to inaccurate initialization and larger number of iterations on level $l=6$. Calculations on the finest level are more computational complex than those on the coarser levels. Thus, the decreasing tolerance rule ($\alpha=1$) is not efficient. The increasing tolerances ($\alpha=-1,-5$) fix this issue. They invest more computations on the lower levels and takes less computations on the finest level. Then the total computing time is dramatically reduced.
However, if the $\alpha$ is too small, say $\alpha=-5$, the algorithm costs more overheads on the lower levels that hurt the total computing time, which validates Theorem \ref{theo:cp}.
Thus, $\alpha=-1$ is a good choice.

Table \ref{tab:2m} demonstrates similar properties of Algorithm \ref{algo:multigrid2}. With a decreasing tolerance rule ($\alpha=1$), the algorithm takes $42$ steps on level $l=6$, while this number for the increasing tolerances ($\alpha=-1,-5$) rule is only $1$. And $\alpha=-5$ over-invests calculations on lower levels and costs more time than $\alpha=-1$.

\begin{table}[h]
\label{tab:2m}
\centering
\caption{Algorithm \ref{algo:multigrid2} on an example of $512\times512$. ``Error'' means $|f(m^*_{h_l})-f(m^{K_l}_{h_l})|$, ``Time'' is in seconds, and the bold texts give the best results. Baseline: Algorithm \ref{algo:pdhg} with $\varepsilon=6.25\times10^{-6}$, $120$ iterations, time consumption: $2.595$ seconds.}
\begin{tabular}{c|c|c|c|c|c|c}
\hline\hline
Level & $l=1$ & $l=2$ & $l=3$ & $l=4$ & $l=5$ & $l=6$ \\\hline
$h_l$      &   $h_l = 2^{-4}$      &    $h_l = 2^{-5}$      &     $h_l = 2^{-6}$     & $h_l = 2^{-7}$        &  $h_l = 2^{-8}$    &  $h_l = 2^{-9}$          \\\hline
\multicolumn{2}{c|}{$|f(m^*_{h_l})-f(m^*_{h_{l-1}})|$} & 2.0E-3 & 1.0E-3 & 5.3E-4 & 2.4E-4 & 9.6E-5 \\\hline\hline
\multicolumn{7}{c}{$\varepsilon_l = 1\times 10^{-4}/16 \times (h_l/h_L) $}        \\\hline
$\varepsilon_l$       & 2.0E-4 & 1.0E-4 & 5.0E-5 & 2.5E-5 & 1.25E-5 & 6.25E-6 \\\hline
$K_l$    & 37     & 10    & 8    & 12    & 19    & 42   \\\hline
Time      &  0.001      &   0.001      &   0.003      &    0.020     &     0.249    & 0.955        \\\hline
Error     &    3.6E-4     &   5.5E-4    &   3.2E-4      &   3.0E-4     &    1.5E-4     &   8.9E-5    \\\hline
\multicolumn{7}{c}{Total time: $1.287$ secs. Speedup: $2.595/1.287=2.02$.}        \\\hline
\hline
\multicolumn{7}{c}{$\varepsilon_l = 1\times 10^{-4}/32 $}        \\\hline
$\varepsilon_l$      & 3.1E-6 & 3.1E-6 & 3.1E-6 & 3.1E-6 & 3.1E-6 & 3.1E-6 \\\hline
$K_l$    & 138     & 58     & 38    & 65    & 13    & 2    \\\hline
Time      &   0.002     &    0.003     &  0.009       &  0.089      &   0.181      &   0.113      \\\hline
Error     &   2.0E-5     & 4.5E-5      &    9.8E-5    &    7.4E-5    &   6.4E-5    &    6.2E-5    \\\hline
\multicolumn{7}{c}{Total time: $0.458$ secs. Speedup: $2.595/0.458=5.67$.}        \\\hline
 \hline
\multicolumn{7}{c}{$\varepsilon_l = 1\times 10^{-4}/32 \times (h_l/h_L)^{-1} $}        \\\hline
$\varepsilon_l$     & 9.8E-8 & 2.0E-7 & 3.9E-7 & 7.8E-7 & 1.6E-6 & 3.1E-6 \\\hline
$K_l$    & 352     & 547     & 22    & 69    & 2    &  \textbf{1}   \\\hline
Time      &  0.006      &   0.029      &   0.005      &  0.094	       &  0.058       & \textbf{0.068} 	      \\\hline
Error     &   1.5E-5     &  1.6E-5       &  4.6E-5      &  3.5E-5     &   5.3E-5     & \textbf{5.8E-5}      \\\hline
\multicolumn{7}{c}{Total time: $0.319$ secs. Speedup: $2.595/0.319=\mathbf{8.13}$.}        \\\hline\hline
\multicolumn{7}{c}{$\varepsilon_l = 1\times 10^{-4}\times 32 \times (h_l/h_L)^{-5} $}        \\\hline
$\varepsilon_l$     & 9.5E-11 & 3.1E-9 & 9.8E-8 & 3.1E-6 & 1.0E-4 & 3.2E-3 \\\hline
$K_l$    & 2433     & 1372     & 538    & 1    & 1    &  \textbf{1}   \\\hline
Time      &  0.037      &   0.075      &   0.113      &  0.005	       &  0.037       & \textbf{0.068} 	      \\\hline
Error     &   0     &  4.2E-17       &  5.5E-6      &  5.9E-5     &   7.4E-5     & {7.6E-5}      \\\hline
\multicolumn{7}{c}{Total time: $0.394$ secs. Speedup: $2.595/0.394={6.59}$.}        \\\hline\hline
\end{tabular}
\end{table}

\subsection{Comparison with other methods}
\label{sec:dotmark}
In this subsection, we compare our method with other EMD algorithms \cite{Ling,bassetti2018computation,li2018parallel,jacobs2018solving}. 
There are some other 2D EMD solvers~\cite{Pele,oberman2015efficient,solomon2015convolutional,benamou2015iterative,cuturi2013sinkhorn,haber_horesh_2015} we do not compare with. \cite{Pele} solves EMD with a thresholded metric; \cite{oberman2015efficient,haber_horesh_2015} are designed for Wasserstein-$p$ ($p>1$) distance; \cite{cuturi2013sinkhorn,solomon2015convolutional,benamou2015iterative}  solve EMD with the entropy regularizer, the objective function of which is not the same as ours. Thus, we are not able to compare these algorithms with ours fairly in our settings.

All the results are obtained on on the DOTmark~\cite{schrieber2017dotmark} dataset. We used $10$ images provided in the ``classic images'' of DOTmark.  Totally we calculated $45$ Wasserstein distances for all the $45$ image pairs. The time consumptions are averages taken on these image pairs. Figure \ref{fig:visualization_dot} in Appendix \ref{app:dotmark} visualizes two such images and the optimal transport between them. Tree-EMD~\cite{Ling} and Min-cost flow~\cite{bassetti2018computation} are exact algorithms stopping within finite steps. Other algorithms are iterative algorithms stopping by a tolerance. 
We take\footnote{The level number $L$ here is slightly smaller than the theoretical number $L = \log_2(1/h)+1$ but asymptotically the same.} $L = \log_2(1/h)-3$ and stopping tolerances as:
\begin{equation}
    \label{eq:stop_tol}
    \begin{aligned}
    \varepsilon =& \frac{1\times 10^{-6}}{512} \times \Big( \frac{h}{1/512} \Big)^{3}, & (\text{Alg. \ref{algo:primaldual}})\\
    \varepsilon_l =& \varepsilon_L \times \Big(\frac{h_l}{h}\Big)^{-1} ,~~\varepsilon_L = \frac{1\times 10^{-6}}{128} \times \Big( \frac{h}{1/512} \Big)^{2}, & (\text{Alg. \ref{algo:multigrid1}})\\
    \varepsilon =& \min\Big(2\times 10^{-4}, \frac{1\times 10^{-4}}{16} \times \Big( \frac{h}{1/512} \Big)^{2}\Big), & (\text{Alg. \ref{algo:pdhg}})\\
    \varepsilon_l =& \varepsilon_L \times \Big(\frac{h_l}{h}\Big)^{-1},~~\varepsilon = \min\Big(2\times 10^{-4}, \frac{1\times 10^{-4}}{32} \times \Big( \frac{h}{1/512} \Big)^{2}\Big). & (\text{Alg. \ref{algo:multigrid2}})
    \end{aligned}
\end{equation}
These formulas are summarized from experiment results. With these stopping tolerances, one can usually obtain solutions accurate enough.

\begin{table}[t]
\label{tab:err}
\centering
\caption{Errors of the algorithms with tolerance rule (\ref{eq:stop_tol}). ``Cal. Err'' means the calculation error: $|f(m^*_{h})-f(m^{K_L}_{h_L})|$. With (\ref{eq:stop_tol}), the calculation errors are comparable with the grid errors.
}
\begin{tabular}{c|l|c|c|c|c|c}
\hline
\multicolumn{2}{c|}{Grid size} & $32\times32$ & $64\times64$ & $128\times128$  & $256\times256$ & $512\times512$ \\ \hline
\hline
\multicolumn{7}{c}{$p=1$}                       \\\hline
\multicolumn{3}{c|}{Grid Error $|f(m^*_{2h})-f(m^*_{h})|$} & 1.15E-03 & 5.67E-04 & 2.82E-04 & 1.40E-04 \\\hline\hline
\multirow{4}{*}{Cal. Err.} & Alg. \ref{algo:primaldual} & 2.02E-03 & 1.06E-03 & 4.80E-04 & 2.90E-04  & N/A \\ \cline{2-7}
 & Alg. \ref{algo:multigrid1} & 2.02E-03 & 1.06E-03 & 5.20E-04 & 3.08E-04 & 1.71E-04\\\cline{2-7}
 & Alg. \ref{algo:pdhg} & 1.21E-03 & 1.03E-03 & 6.82E-04 & 2.70E-04 & 1.05E-04 \\ \cline{2-7}
 & Alg. \ref{algo:multigrid2} & 1.10E-03 & 7.59E-04 & 4.43E-04 & 2.11E-04  & 9.62E-05 \\ \hline\hline
\multicolumn{7}{c}{$p=2$}                       \\\hline
\multicolumn{3}{c|}{Grid Error $|f(m^*_{2h})-f(m^*_{h})|$} & 9.68E-04 & 4.74E-04 & 2.34E-04 & 1.16E-04 \\\hline\hline
\multirow{4}{*}{Cal. Err.} & Alg. \ref{algo:primaldual} & 1.83E-03 & 8.17E-04 & 4.63E-04 & 2.20E-04 & N/A \\\cline{2-7}
 & Alg. \ref{algo:multigrid1} & 1.74E-03 & 9.45E-04 & 4.47E-04 & 2.59E-04 & 1.32E-04      \\\cline{2-7}
 & Alg. \ref{algo:pdhg} & 8.95E-04 & 7.26E-04 & 4.74E-04 & 2.14E-04 & 9.83E-05 \\\cline{2-7}
 & Alg. \ref{algo:multigrid2} & 1.26E-03 & 1.13E-03 & 5.84E-04 & 2.58E-04 & 1.11E-04       \\ \hline\hline
\multicolumn{7}{c}{$p=\infty$}                       \\\hline
\multicolumn{3}{c|}{Grid Error $|f(m^*_{2h})-f(m^*_{h})|$} & 9.92E-04 & 5.14E-04 & 2.65E-04 & 1.36E-04 \\\hline\hline
\multirow{4}{*}{Cal. Err.} & Alg. \ref{algo:primaldual} & 1.64E-03 & 1.02E-03 & 4.26E-04 & 2.23E-04 & N/A \\\cline{2-7}
 & Alg. \ref{algo:multigrid1} & 1.60E-03 & 9.34E-04 & 4.65E-04 & 2.59E-04 & 1.09E-04   \\\cline{2-7}
 & Alg. \ref{algo:pdhg} & 1.19E-03 & 9.16E-04 & 5.54E-04 & 2.51E-04 & 1.09E-04 \\\cline{2-7}
 & Alg. \ref{algo:multigrid2} & 1.15E-03 & 9.82E-04 & 4.66E-04 & 2.57E-04 & 1.33E-04         \\ \hline
\end{tabular}
\end{table}

Table \ref{tab:err} reports the averaged errors on the examples of different sizes. The calculation errors is in the same order the grid approximation error:
\[|f(m^*_{h})-f(m^{K_L}_{h_L})| \leq 1.5 |f(m^*_{2h})-f(m^*_{h})|.\]Thus, formulas given in (\ref{eq:stop_tol}) are practically fair.

Table \ref{tab:final} reports the time consumptions of all the algorithms. ``N/A'' means the average calculation time is larger than half an hour.
On large-scale examples, Algorithm \ref{algo:multigrid1} is much faster than Algorithm \ref{algo:primaldual} and Algorithm \ref{algo:multigrid2} is much faster than Algorithm \ref{algo:pdhg}.
In most of the cases, Algorithm \ref{algo:multigrid2} is the best and Algorithms \ref{algo:multigrid1} and \ref{algo:pdhg} are also competitive. All the first-order methods (Algorithms \ref{algo:primaldual}, \ref{algo:pdhg}, \ref{algo:multigrid1}, \ref{algo:multigrid2}) are robust to the parameter $p$ in the ground metric $\|\cdot\|_p$. Tree-EMD~\cite{Ling} only works for $p=1$; the algorithm in \cite{bassetti2018computation} works well when $p=1,\infty$ and the grid size is not very large. As $p=2$, the algorithm in \cite{bassetti2018computation} requires large amount of memory and calculation time.

\begin{table}[h]
\centering
\caption{Averaged time consumption (seconds) on the DOTmark~\cite{schrieber2017dotmark} dataset. 
}
\label{tab:final}
\begin{tabular}{l|c|c|c|c|c}
\hline
Grid size & $32\times32$ & $64\times64$ & $128\times128$  & $256\times256$ & $512\times512$ \\ \hline\hline
\multicolumn{6}{c}{$p=1$}                       \\\hline
Tree-EMD~\cite{Ling} & 0.006 & 0.127 & 2.433 & 121.2 & N/A \\ \hline
Min-cost flow~\cite{bassetti2018computation} & {0.002} & 0.024 & 0.342 & 7.164 & 157.7 \\ \hline
Algorithm \ref{algo:primaldual}~\cite{li2018parallel} & 0.058 & 0.732 & 6.797 & 66.801 & N/A \\ \hline
Algorithm \ref{algo:multigrid1} & 0.047 & 0.219 & 1.321 & 7.555 & 49.054  \\ \hline
Algorithm \ref{algo:pdhg}~\cite{jacobs2018solving} & {0.002} & 0.008 & 0.066 & 0.980 & 3.102  \\ \hline
Algorithm \ref{algo:multigrid2} & $\mathbf{ 0.001}$ & $\mathbf{0.003}$ & $\mathbf{0.010}$ & $\mathbf{0.060}$ & $\mathbf{0.227}$ \\ \hline\hline
\multicolumn{6}{c}{$p=2$}                       \\\hline
Tree-EMD~\cite{Ling} & N/A  &N/A  &N/A  &N/A  &N/A  \\\hline
Min-cost flow~\cite{bassetti2018computation} & 0.082 & 1.863 & N/A &N/A &N/A \\\hline
Algorithm \ref{algo:primaldual}~\cite{li2018parallel} & 0.045 & 0.440 & 3.306 & 30.890 & N/A \\\hline
Algorithm \ref{algo:multigrid1} & 0.040 & 0.170 & 1.137 & 5.663 & 45.333       \\\hline
Algorithm \ref{algo:pdhg}~\cite{jacobs2018solving} & $\mathbf{0.002}$ & 0.008 & 0.061 & 0.809 & 2.471 \\\hline
Algorithm \ref{algo:multigrid2} & $\mathbf{0.002}$ & $\mathbf{0.004}$ & $\mathbf{0.018}$ & $\mathbf{0.163}$ & $\mathbf{0.826}$       \\ \hline\hline
\multicolumn{6}{c}{$p=\infty$}                       \\\hline
Tree-EMD~\cite{Ling} & N/A  &N/A  &N/A  &N/A  &N/A  \\\hline
Min-cost flow~\cite{bassetti2018computation} & $\mathbf{0.002}$  & 0.025  & 0.300 & 5.380 & 118.1 \\ \hline
Algorithm \ref{algo:primaldual}~\cite{li2018parallel} & 0.126 & 1.152 & 9.221 & 92.016 & N/A \\\hline
Algorithm \ref{algo:multigrid1} & 0.076 & 0.322 & 1.919 & 11.124 & 83.981  \\\hline
Algorithm \ref{algo:pdhg}~\cite{jacobs2018solving} & $\mathbf{0.002}$ & 0.007 & 0.053 & 0.707 & 2.257 \\\hline
Algorithm \ref{algo:multigrid2} & $\mathbf{0.002}$ & $\mathbf{0.004}$ & $\mathbf{0.025}$ & $\mathbf{0.268}$ & $\mathbf{1.416}$         \\ \hline
\end{tabular}
\end{table}

\section{Conclusion}

In this paper, we have proposed two multilevel algorithms for the computation of the Wasserstein-1 metric. The algorithms
leverage the $L_1$ type primal-dual structure in the minimal flux formulation of optimal transport. The multilevel setting provides very good initializations for the minimization problems on the fine grids. So it can significantly reduce the number of iterations on the finest grid. This consideration allows us to compute the metric between two $512\times 512$ images in about one second on a single CPU.
It is worth mentioning that the proposed algorithm also provides the Kantorovich potential and the optimal flux function between two densities. They are useful for the related Wasserstein variation problems \cite{Peyre}.

In future work, we will apply the multilevel method to optimal transport related minimization in mean field games and machine learning.

\newpage

 \appendix
\section{The proof of Lemma \ref{lemma:inter_bdd}}
\label{app:inter}

\begin{proof} First, we consider the interpolation of potential $\phi_{h_{l-1}}$. With $\phi_{h_l} =$ Interpolate $(\phi_{h_{l-1}})$, we have
\[
\begin{aligned}
&\|\phi_{h_l}\|_{L^2}^2 = \sum_{x \in \Omega^{h_l}} \phi_{h_l}^2(x) (h_l)^d\\
=& \sum_{x \in \Omega^{h_l}} \bigg(\frac{1}{2^{|\bar{J}|}}  \sum_{|y_{j_1} - x_{j_1}| \leq h_l} \cdots \sum_{|y_{j_{|\bar{J}|}} - x_{j_{|\bar{J}|}}| \leq h_l} \phi_{h_{l-1}}(y_{J}, y_{j_1},y_{j_2},\cdots, y_{j_{|\bar{J}|}}) \bigg)^2 (h_l)^d \\
\leq & \sum_{x \in \Omega^{h_l}} \bigg(\frac{1}{2^{|\bar{J}|}}  \sum_{|y_{j_1} - x_{j_1}| \leq h_l} \cdots \sum_{|y_{j_{|\bar{J}|}} - x_{j_{|\bar{J}|}}| \leq h_l} \phi^2_{h_{l-1}}(y_{J}, y_{j_1},y_{j_2},\cdots, y_{j_{|\bar{J}|}}) \bigg) (h_l)^d\\
= & \sum_{y \in \Omega^{h_{l-1}}} c(y) \phi^2_{h_{l-1}}(y) (h_l)^d.
\end{aligned}
  \]
The inequality in the third line above follows from Jensen's Inequality~\cite{boyd2004convex}, and $c(y)$ is a constant which can be bounded in the following way.

$\phi_{h_{l-1}}(y)$ contributes to the nodes within a $d$ dimensional hypercube $H(y) = \{x \in \Omega: \max_{j} |x_j - y_j| \leq h_l \}$. First, we consider $y$ as an interior point in $\Omega$. There are $2^d$ vertices in $H(y)$, for each vertex, the weight is $1/(2^d)$. There are $2^{d-1}d$ edges in $H(y)$, each edge contains a single point $x \in \Omega^{h_{l}}$, $\phi_{h_{l-1}}(y)$ contributes to $\phi_{h_l}(x)$ with weight $1/(2^{d-1})$. Generally speaking, there are $2^{d-n}\binom{d}{n}$ $n$-dimensional hypercubes on the boundary of $H(y)$~\cite{coxeter1973regular},
each $m$-dimensional hypercube contains a single point $x \in \Omega^{h_{l}}$, $\phi_{h_{l-1}}(y)$ contributes to $\phi_{h_l}(x)$ with weight $1/(2^{d-n})$. Moreover, the center of $H(y)$ is $y$, which is also in $\Omega^{h_l}$, $\phi_{h_{l-1}}(y)$ contributes to $\phi_{h_l}(y)$ with weight $1$. Thus, for interior point $y$, we have
\[
\begin{aligned}
c(y) =& \frac{1}{2^d} \times 2^d + \frac{1}{2^{d-1}} \times d 2^{d-1} + \cdots + \frac{1}{2^{d-1}} \times 2^{d-1}\binom{d}{n} + \cdots + 1 \times 1 \\
=& \sum_{n=0}^d \binom{d}{n} = (1+1)^d = 2^d.
\end{aligned}
 \]
For $y$ on the boundary of $\Omega$, there are less points in $H(y) \cap  \Omega^{h_l}$, and the weight for each node is the same as above, thus, $c(y) < 2^d$. In one word, $c(y) \leq 2^d$ for all $y \in \Omega^{h_{l-1}}$. Then, we obtain
\[
\begin{aligned}
\|\phi_{h_l}\|_{L^2}^2 \leq &  \sum_{y \in \Omega^{h_{l-1}}} c(y) \phi^2_{h_{l-1}}(y) (h_l)^d \leq \sum_{y \in \Omega^{h_{l-1}}} \phi^2_{h_{l-1}}(y) (2 h_l)^d\\
=& \sum_{y \in \Omega^{h_{l-1}}} \phi^2_{h_{l-1}}(y) (h_{l-1})^d = \|\phi_{h_{l-1}}\|_{L^2}^2.
\end{aligned}
\]
With the same proof line, the interpolation of $m_{h_{l-1}}$ and $\varphi_{h_{l-1}}$ can also be proved. Inequalities (\ref{eq:interpolate_bdd}) are proved.
\end{proof}

\section{Kantorovich potential}
\label{app:potential}

The Kantorovich potential can be obtained directly from the dual solution of (\ref{eq:minmax_cp}). Thus, Algorithm \ref{algo:primaldual} directly gives the potential $\phi_h^*:\Omega^h\to\Re$. Algorithm \ref{algo:pdhg} solves (\ref{eq:minmax_pdhg}) and gives the gradient of $\phi_h^*$: $\varphi^*_h:\Omega^h\to\Re^d$.  We obtain $\phi_h^*$ given $\varphi^*_h$ by solving
\[
    A_h A^*_h \phi_h = A_h \varphi^*_h.
\]
The boundary condition is given by (\ref{eq:opt_flux}). Thus, the Laplacian operator $A_h A^*_h$ is invertible, where the solution is unique up to a constant shrift.
And $\phi_h^*$ is given by
\begin{equation}
    \label{eq:potential_pdhg}
     \phi_h^* = (A_h A^*_h)^{-1} A_h \varphi^*_h.
\end{equation}
The inverse Laplacian operator can be calculated efficiently by the FFT~\cite{jacobs2018solving}.

\section{Visualization of the cat example}
\label{app:cats}

The cat example is used in Section \ref{sec:cats}. We visualize the two distributions $\rho^0,\rho^1$ and the optimal transport between them in this section.
Figure \ref{fig:cp} visualizes the primal-dual pair $(m^*,\phi^*)$ obtained by Algorithm \ref{algo:multigrid1}, Figure \ref{fig:pdhg} visualizes the primal-dual pair $(m^*,\varphi^*)$ obtained by Algorithm \ref{algo:multigrid2}. 

\begin{figure}
\centering
\begin{tabular}{ccc}
\hspace{-10mm}
\subfigure[][{$\rho^0$}]{
\includegraphics[width=0.35\textwidth]{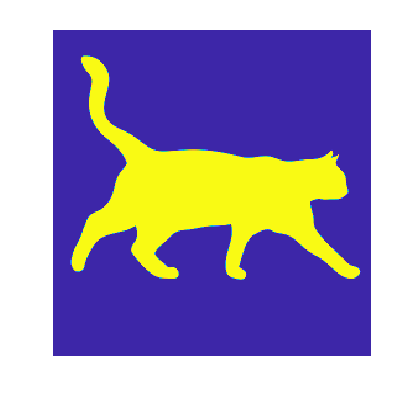}\label{fig:cat0}}
&
\hspace{-10mm}
\subfigure[][{$\rho^1$}]{
\includegraphics[width=0.35\textwidth]{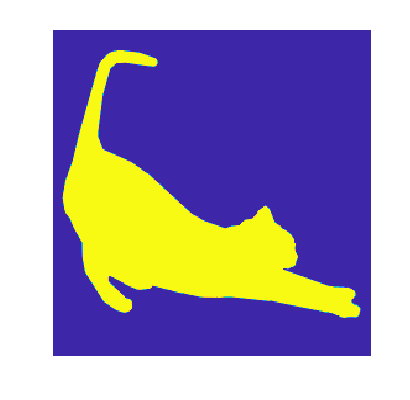}\label{fig:cat1}}\\
\hspace{-10mm}
\subfigure[][{$m^* (p=1)$}]{
\includegraphics[width=0.35\textwidth]{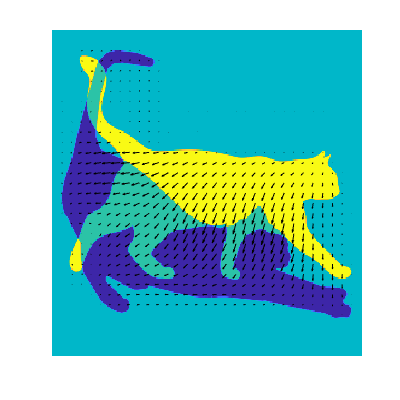}\label{fig:cpl1256}}
&
\hspace{-10mm}
\subfigure[][{$m^* (p=2)$}]{
\includegraphics[width=0.35\textwidth]{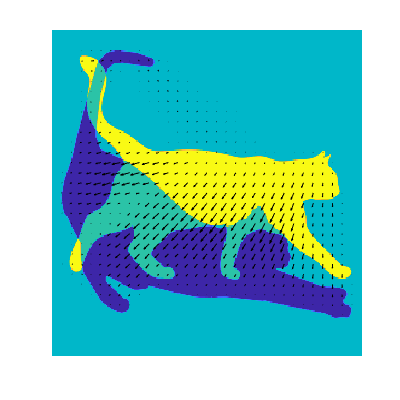}\label{fig:cpl2256}}
&
\hspace{-10mm}
\subfigure[][{$m^* (p=\infty)$}]{
\includegraphics[width=0.35\textwidth]{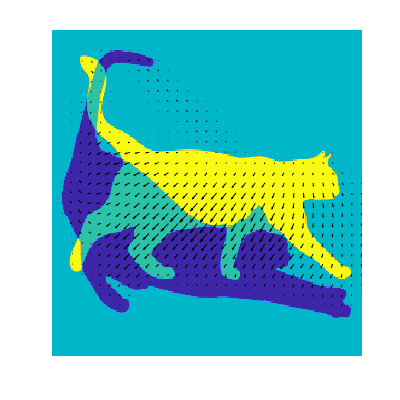}\label{fig:cpl256}}\\
\hspace{-10mm}
\subfigure[][{$\phi^* (p=1)$}]{
\includegraphics[width=0.35\textwidth]{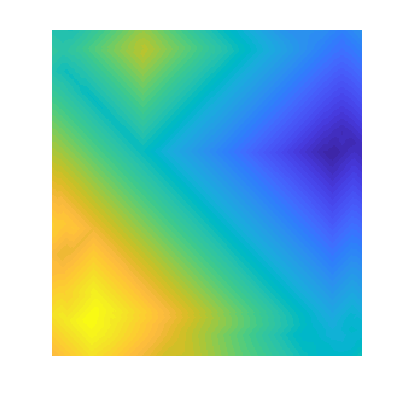}\label{fig:cpl1256p}}
&
\hspace{-10mm}
\subfigure[][{$\phi^* (p=2)$}]{
\includegraphics[width=0.35\textwidth]{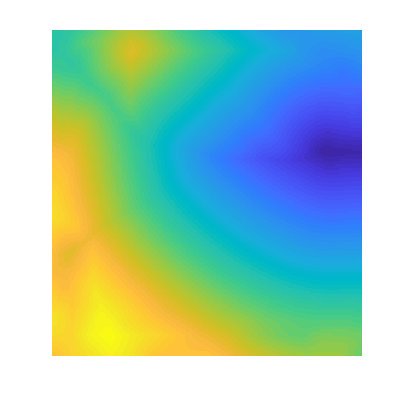}\label{fig:cpl2256p}}
&
\hspace{-10mm}
\subfigure[][{$\phi^* (p=\infty)$}]{
\includegraphics[width=0.35\textwidth]{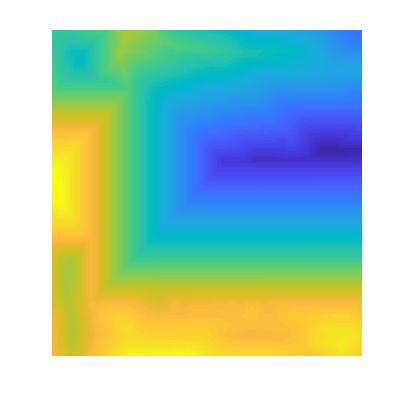}\label{fig:cpl3256p}}
\end{tabular}
\caption{Visualization of $(m^*,\phi^*)$ obtained by Algorithm \ref{algo:multigrid1}. $m^*$ is the optimal flux; $\phi^*$ is the Kantorovich potential.
 The backgrounds of Fig. \ref{fig:cpl1256}, \ref{fig:cpl2256} and \ref{fig:cpl256} are all $\rho^0-\rho^1$.}
\label{fig:cp}
\end{figure}

\begin{figure}
\centering
\begin{tabular}{ccc}
\hspace{-10mm}
\subfigure[][{$m^* (p=1)$}]{
\includegraphics[width=0.35\textwidth]{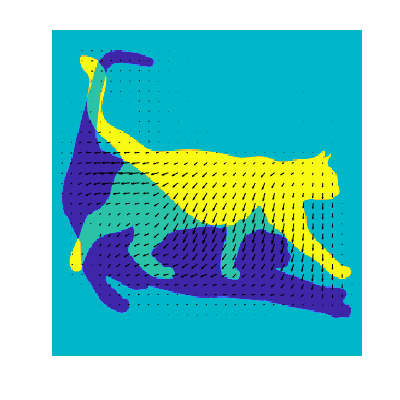}\label{fig:pdhgl1256}}
&
\hspace{-10mm}
\subfigure[][{$m^* (p=2)$}]{
\includegraphics[width=0.35\textwidth]{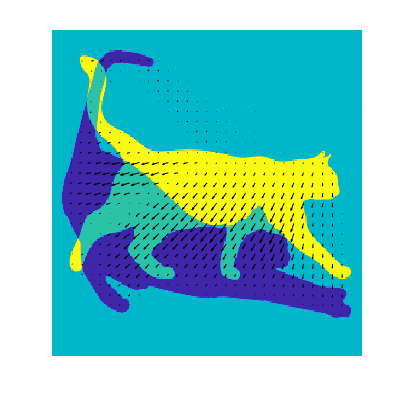}\label{fig:pdhgl2256}}
&
\hspace{-10mm}
\subfigure[][{$m^* (p=\infty)$}]{
\includegraphics[width=0.35\textwidth]{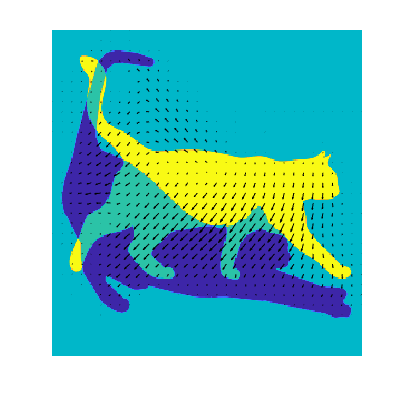}\label{fig:pdhgl256}}\\
\hspace{-10mm}
\subfigure[][{$\varphi^* (p=1)$}]{
\includegraphics[width=0.35\textwidth]{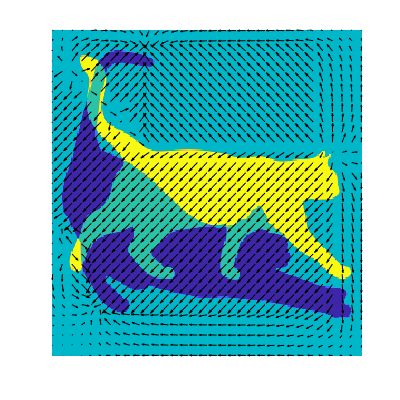}\label{fig:pdhgl1256p}}
&
\hspace{-10mm}
\subfigure[][{$\varphi^* (p=2)$}]{
\includegraphics[width=0.35\textwidth]{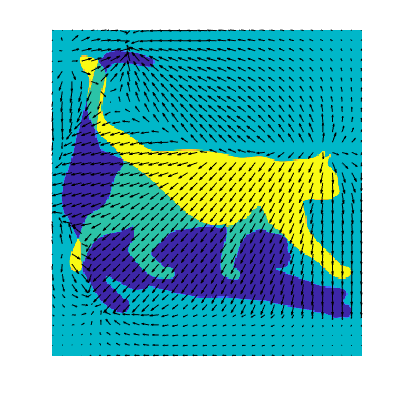}\label{fig:pdhgl2256p}}
&
\hspace{-10mm}
\subfigure[][{$\varphi^* (p=\infty)$}]{
\includegraphics[width=0.35\textwidth]{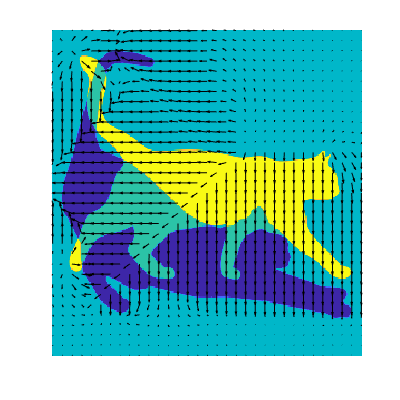}\label{fig:pdhg_l3256p}}\\
\hspace{-10mm}
\subfigure[][{$\phi^* (p=1)$}]{
\includegraphics[width=0.35\textwidth]{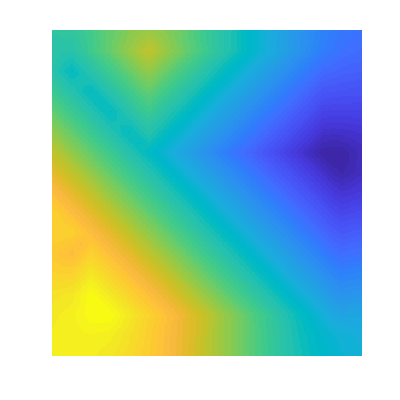}\label{fig:pdhgl1256phi}}
&
\hspace{-10mm}
\subfigure[][{$\phi^* (p=2)$}]{
\includegraphics[width=0.35\textwidth]{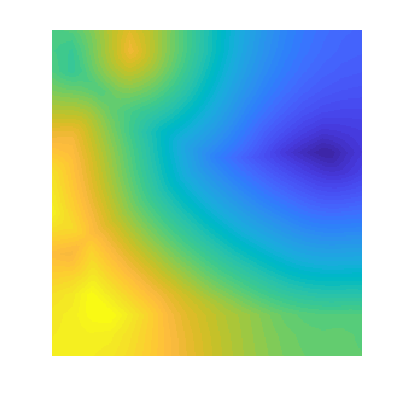}\label{fig:pdhgl2256phi}}
&
\hspace{-10mm}
\subfigure[][{$\phi^* (p=\infty)$}]{
\includegraphics[width=0.35\textwidth]{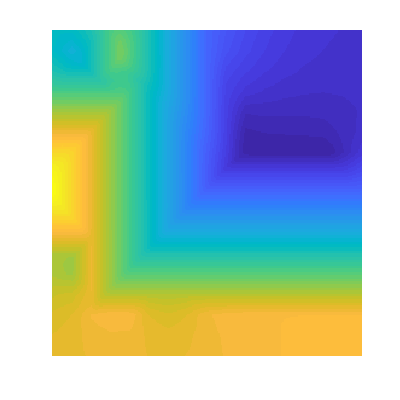}\label{fig:pdhg_l3256phi}}
\end{tabular}
\caption{Visualization of $(m^*,\varphi^*)$ and the potential $\phi^*$ obtained by Algorithm \ref{algo:multigrid2}. $\rho^0,\rho^1$ are the same with those in Figure \ref{fig:cp}. $m^*$ is the optimal flux; $\phi^*$ is the Kantorovich potential, that is obtained from $\varphi^*$ by the method in Appendix \ref{app:potential}.
}
\label{fig:pdhg}
\end{figure}

\section{Visualization of DOTmark}
\label{app:dotmark}

The DOTmark dataset is used in Section \ref{sec:dotmark}. We visualize two of the distributions $\rho^0,\rho^1$ and the optimal transport between them in Figure \ref{fig:visualization_dot}.

\begin{figure}
\centering
\begin{tabular}{ccc}
\hspace{-10mm}
\subfigure[][\parbox{0.25\textwidth}{$\rho^0$: Cameraman ($512\times512$), one image of DOTmark}]{
\includegraphics[width=0.35\textwidth]{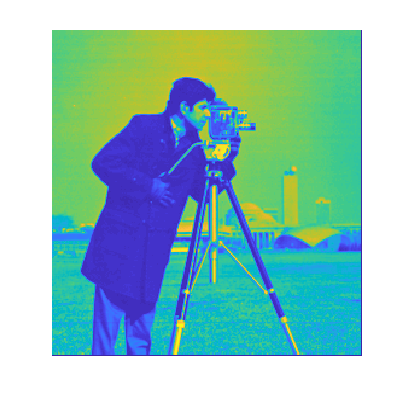}\label{fig:cameraman}}
&
\hspace{-10mm}
\subfigure[][\parbox{0.25\textwidth}{$\rho^1$: Lake ($512\times512$), one image of DOTmark}]{
\includegraphics[width=0.35\textwidth]{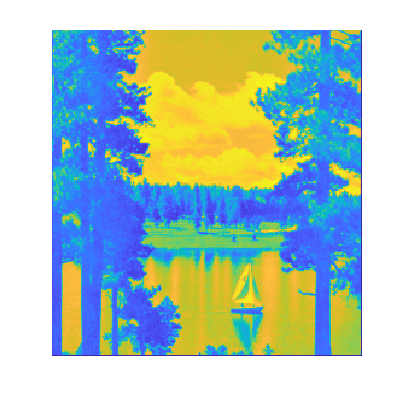}\label{fig:lena}}\\
\hspace{-10mm}
\subfigure[][{$p=1$}]{
\includegraphics[width=0.35\textwidth]{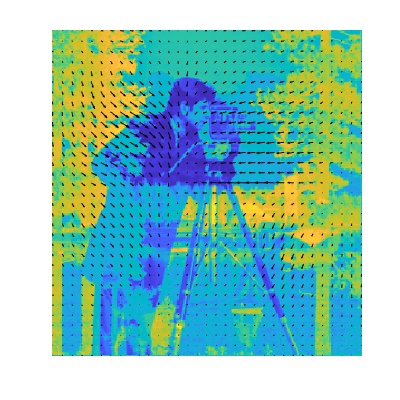}\label{fig:cameramanp1}}
&
\hspace{-10mm}
\subfigure[][{$p=2$}]{
\includegraphics[width=0.35\textwidth]{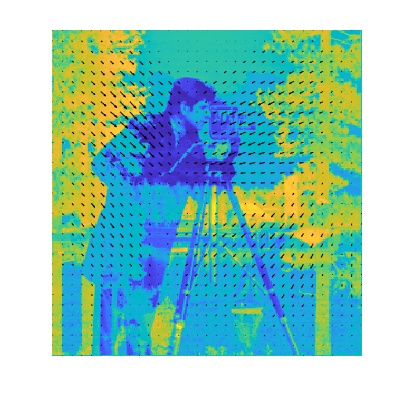}\label{fig:cameramanp2}}
&
\hspace{-10mm}
\subfigure[][{$p=\infty$}]{
\includegraphics[width=0.35\textwidth]{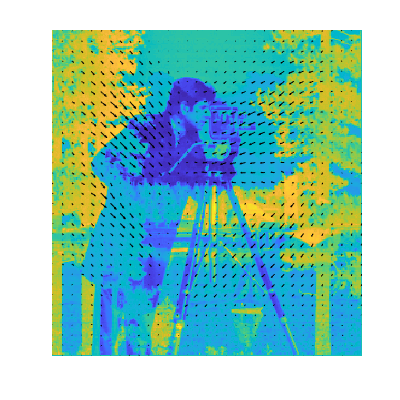}\label{fig:cameramanpinf}}
\end{tabular}
\caption{Visualization of the optimal transport $m^*$ between the two images $\rho^0$ and $\rho^1$. The background is the difference between $\rho^0,\rho^1$: $\rho=\rho^0-\rho^1$.
}
\label{fig:visualization_dot}
\end{figure}


\bibliographystyle{siamplain}
\bibliography{references}

\end{document}